\definecolor{lightbg}{rgb}{0.85,0.85,0.85}
\begin{document}

\title*{The B36/S125 ``2x2'' Life-Like Cellular Automaton}
\author{Nathaniel Johnston\inst{1}}
\institute{$^1$Department of Mathematics \& Statistics, University of Guelph, Guelph, ON, Canada N1G 2W1
\texttt{njohns01@uoguelph.ca}}

\maketitle

\begin{abstract}
	The B36/S125 (or ``2x2'') cellular automaton is one that takes place on a 2D square lattice much like Conway's Game of Life. Although it exhibits high-level behaviour that is similar to Life, such as chaotic but eventually stable evolution and the existence of a natural diagonal glider, the individual objects that the rule contains generally look very different from their Life counterparts. In this article, a history of notable discoveries in the 2x2 rule is provided, and the fundamental patterns of the automaton are described. Some theoretical results are derived along the way, including a proof that the speed limits for diagonal and orthogonal spaceships in this rule are $c/3$ and $c/2$, respectively. A Margolus block cellular automaton that 2x2 emulates is investigated, and in particular a family of oscillators made up entirely of $2 \times 2$ blocks are analyzed and used to show that there exist oscillators with period $2^\ell(2^k - 1)$ for any integers $k,\ell \geq 1$.
\end{abstract}

  \section{Introduction}\label{sec:intro}

  One cellular automaton that has drawn a fair amount of interest recently is the one that takes place on a grid like Conway's Game of Life, except dead cells are born if they have 3 or 6 live neighbours, and alive cells survive if they have 1, 2, or 5 live neighbours -- this information is conveyed by its rulestring ``B36/S125''. This rule exhibits many qualities that are similar to those of Life -- for example, evolution seems unpredictable and random patterns tend to evolve into ``ash fields'' consisting of several small stable patterns (known as \emph{still lifes}) and periodic patterns (known as \emph{oscillators}).
  
  The B36/S125 automaton has become known as ``2x2'' because of the fact that it emulates a simpler cellular automaton that acts on $2 \times 2$ blocks of cells. In particular, this means that patterns that are initially made up of $2 \times 2$ blocks will forever be made up of $2 \times 2$ blocks under this evolution rule. Because of the simplicity of the emulated block cellular automaton, it has many properties in common with elementary cellular automata~\cite{W83,W02} and in particular emulates Wolfram's rule 90 \cite{WR90}.
  
  Although the rough behaviour and statistics of 2x2 are similar to those of Life, the patterns of 2x2 have completely different structure and thus are interesting in their own right. Furthermore, many questions that have been answered about Life remain open in 2x2, such as whether or not it contains guns or replicators. It has a basic $c/8$ diagonal glider that occurs fairly commonly, though it is larger than the standard Life glider and is thus more difficult to construct. The first infinitely-growing pattern to be discovered was a wickstretcher based on the glider which, despite its simplicity, was not stumbled upon until June 2009. This wickstretcher is displayed in Figure~\ref{fig:2x2_wickstretcher} with alive cells in black and dead cells in white.

\begin{figure}[ht]
\begin{minipage}[t]{0.485\textwidth}
\centering
		  \includegraphics[scale=0.2]{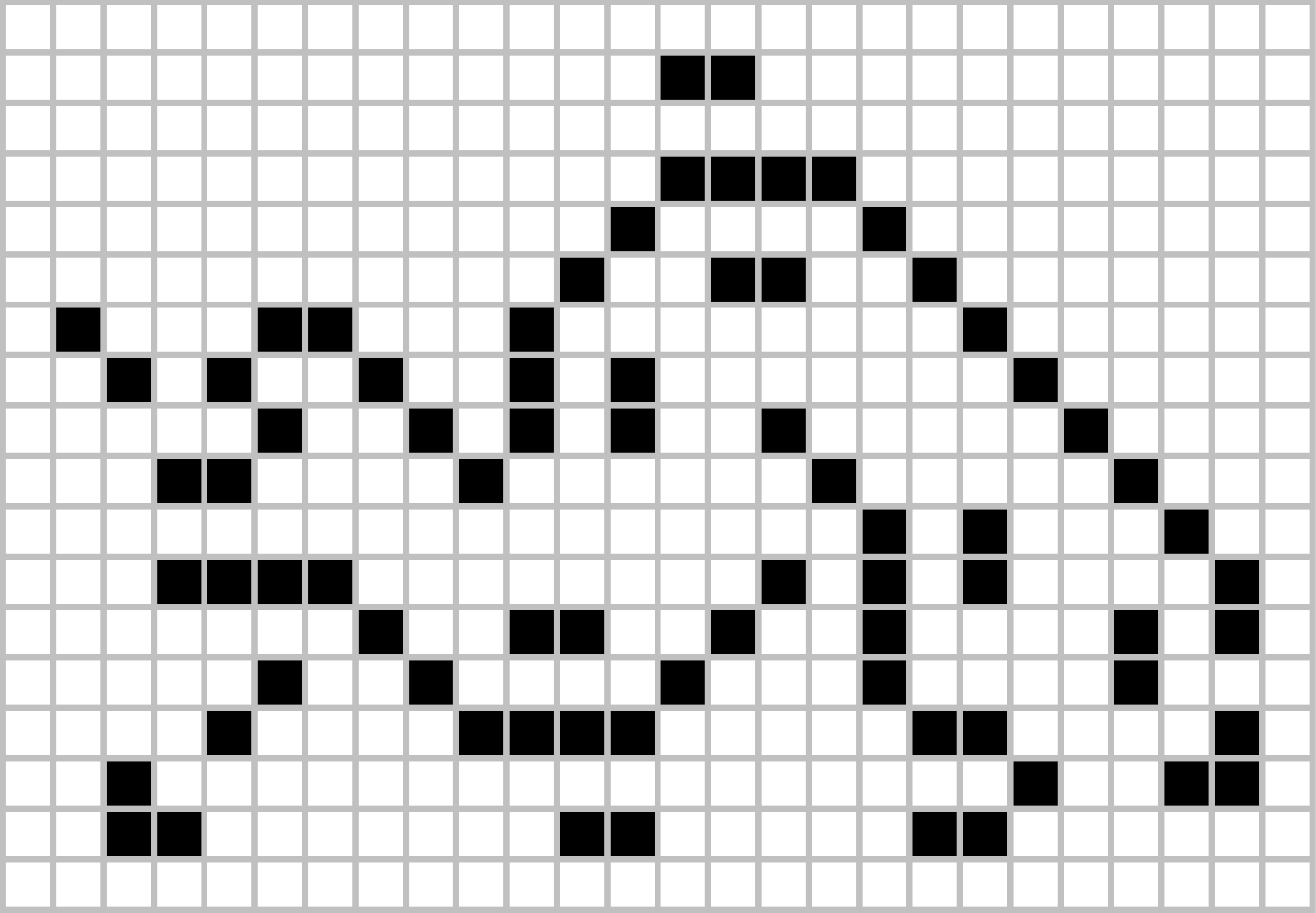}
		  \caption{A large still life}\label{fig:2x2_still_life} 
\end{minipage}
\hspace{0.03\textwidth}
\begin{minipage}[t]{0.485\textwidth}
\centering
		  \includegraphics[scale=0.2]{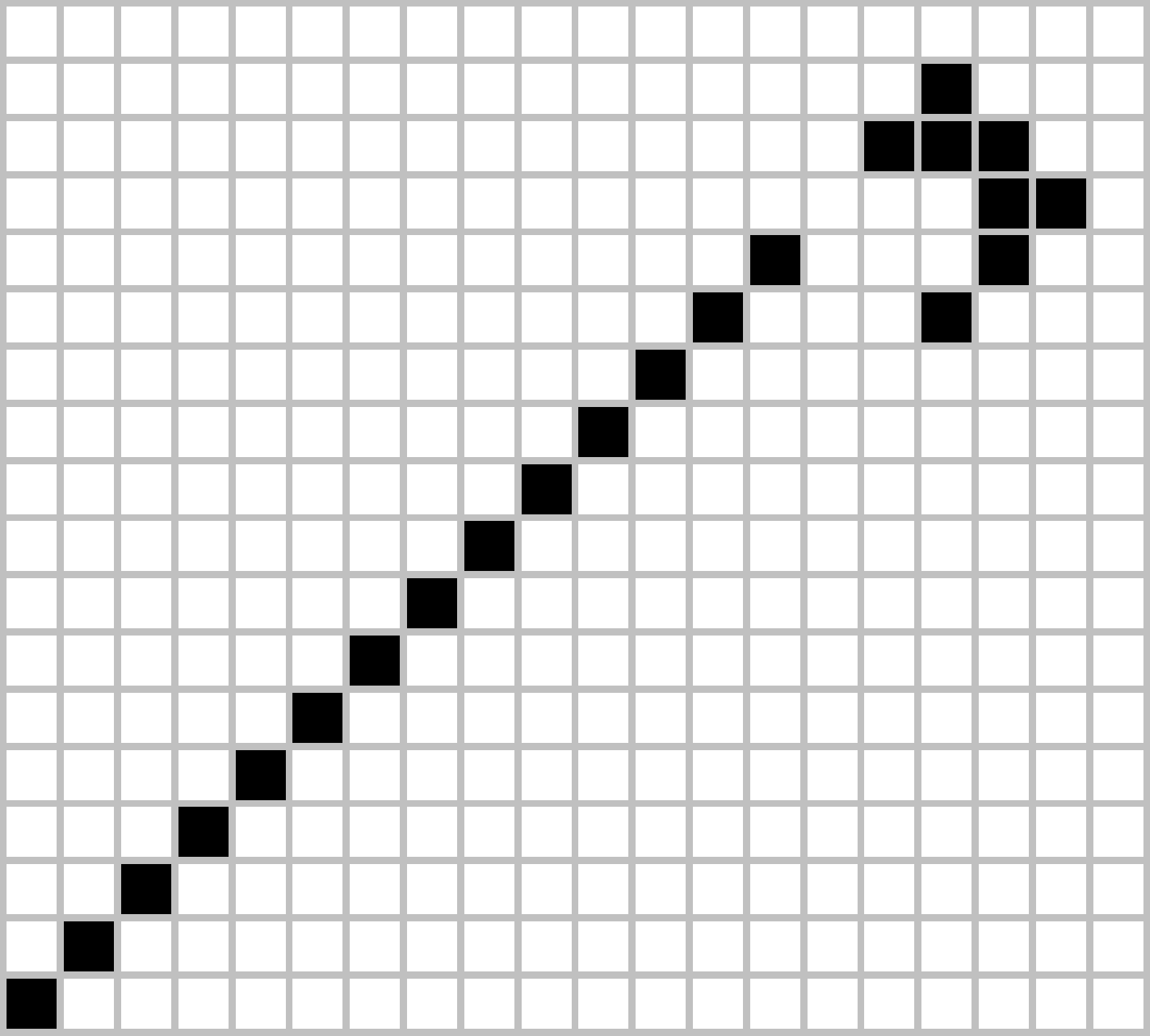}
		  \caption{The wickstretcher}\label{fig:2x2_wickstretcher}
\end{minipage}
\end{figure}

  Other spaceships were found via computerized searches carried out by Alan Hensel, Dean Hickerson, David Bell, and David Eppstein in the 1990s. One of the most important spaceship discoveries was a $c/3$ diagonal spaceship, which showed that it is possible for spaceships to travel faster in the 2x2 universe than in the standard Life universe, despite most of the easy-to-find spaceships being quite slow. We will see that $c/3$ is the diagonal speed limit in 2x2, much like $c/4$ is the diagonal speed limit in Life. Several derived results will also apply to other Life-like cellular automata, and we will note when this is the case, although our focus and motivation will be the 2x2 rule.

  \section{Ash and Common Patterns}\label{sec:ash}
  
  One of those most interesting aspects of 2x2 is the large variety of still lifes and oscillators that appear naturally as a result of evolving randomly-generated starting patterns (known as \emph{soup}). Many simple still lifes are familiar from the standard Life rules, such as the tub, beehive, aircraft carrier, loaf, and pond. More commonly-occurring, however, are simple ``sparse'' still lifes that are not stable in Life, such as a horizontal or diagonal row of $2$ cells. Table~\ref{tab:common_2x2_still} shows the 20 most commonly-occurring still lifes in 2x2.\footnote{Computed by evolving $22,846,665$ random patterns of size $20 \times 20$ and initial density $37.5\%$. A total of $255,689,477$ (non-distinct) still lifes were catalogued as a result. Statistics compiled by the \emph{Online Life-Like CA Soup Search}. http://www.conwaylife.com/soup}
  	
		\begin{table}
		\center
		\caption{The $20$ most common naturally-occurring still lifes in the 2x2 rule and their approximate frequency (out of $1.000$) relative to all still lifes}
		\label{tab:common_2x2_still}
		\begin{tabular}{cc}
		\begin{tabular}[t]{lcc}
		\hline\noalign{\smallskip}
		{\bf \#} \ & \ {\bf Pattern} \ & \ {\bf Rel. Frequency} \\
		\noalign{\smallskip}\hline\noalign{\smallskip}
		$\mathbf{1}$ & \includegraphics[scale=0.02]{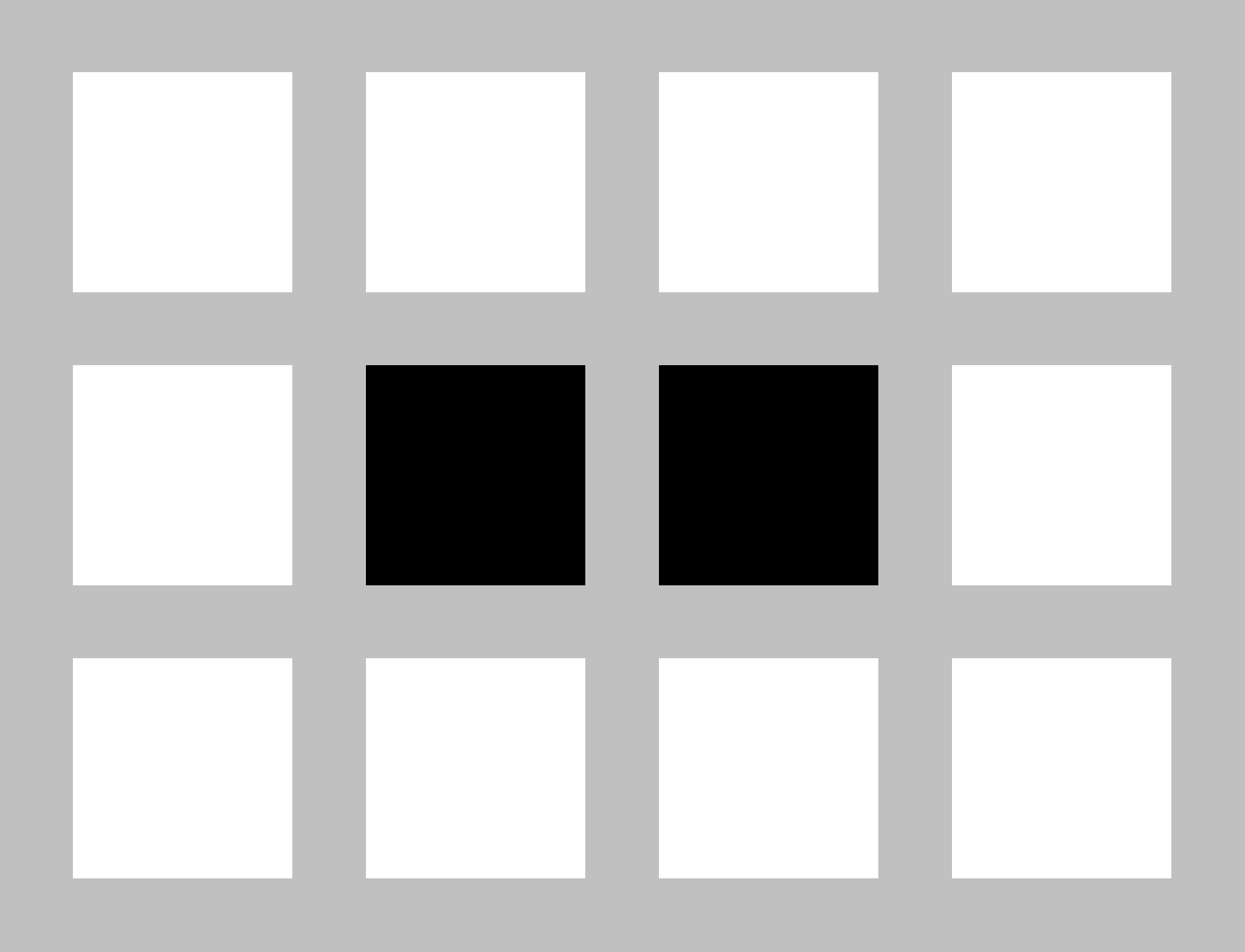} & $6.076 \times 10^{-1}$ \\
		$\mathbf{2}$ & \includegraphics[scale=0.02]{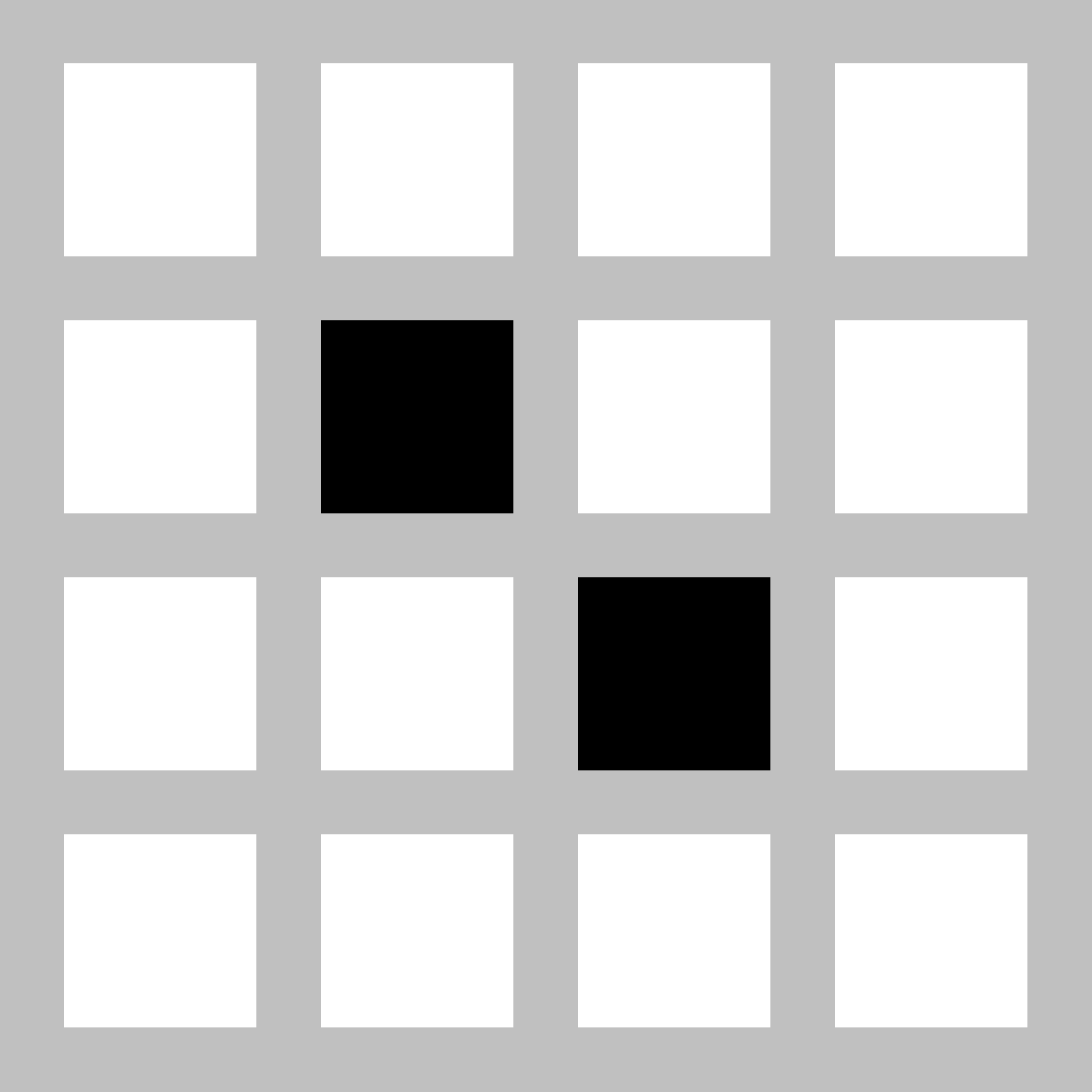} & $2.130 \times 10^{-1}$ \\
		$\mathbf{3}$ & \includegraphics[scale=0.02]{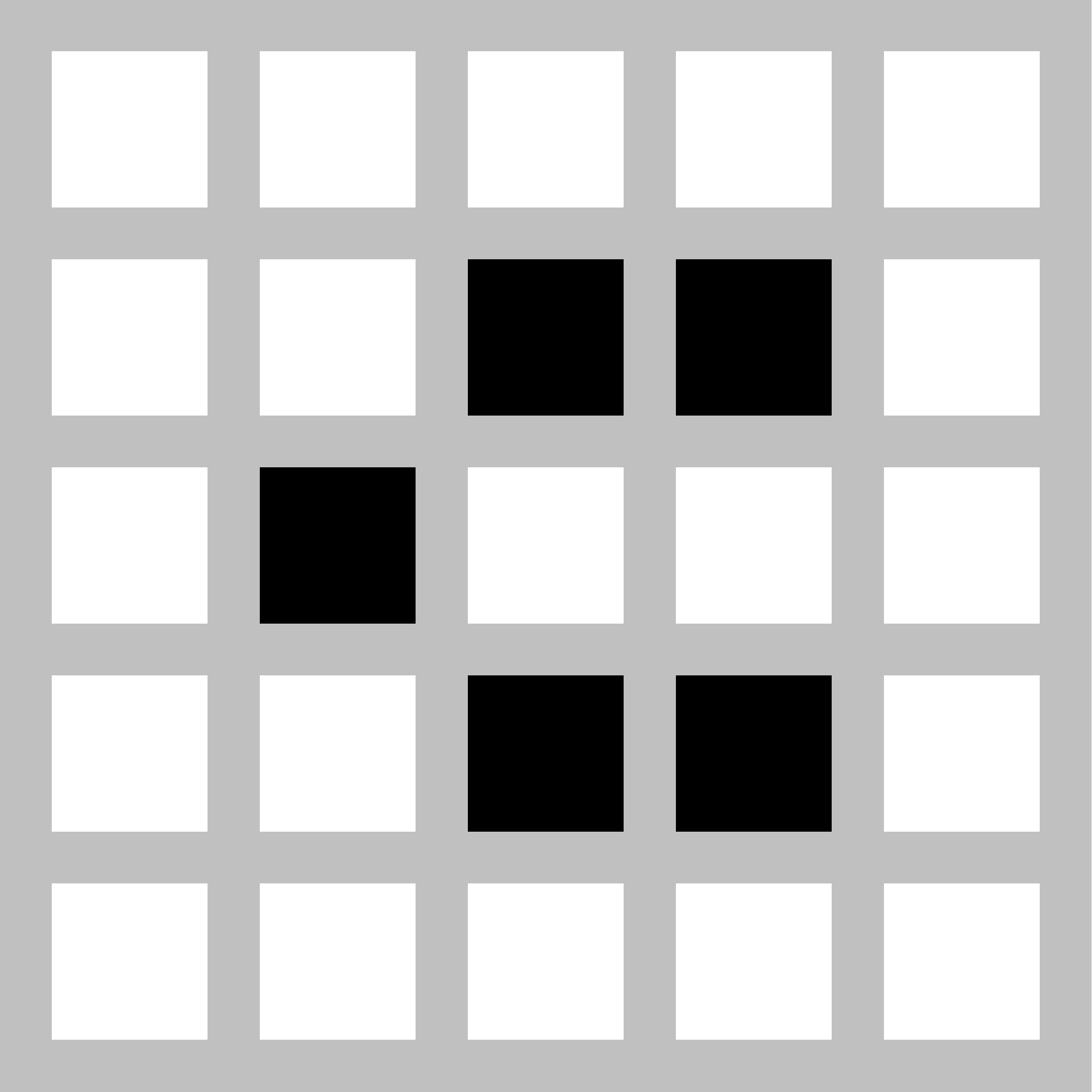} & $5.038 \times 10^{-2}$ \\
		$\mathbf{4}$ & \includegraphics[scale=0.02]{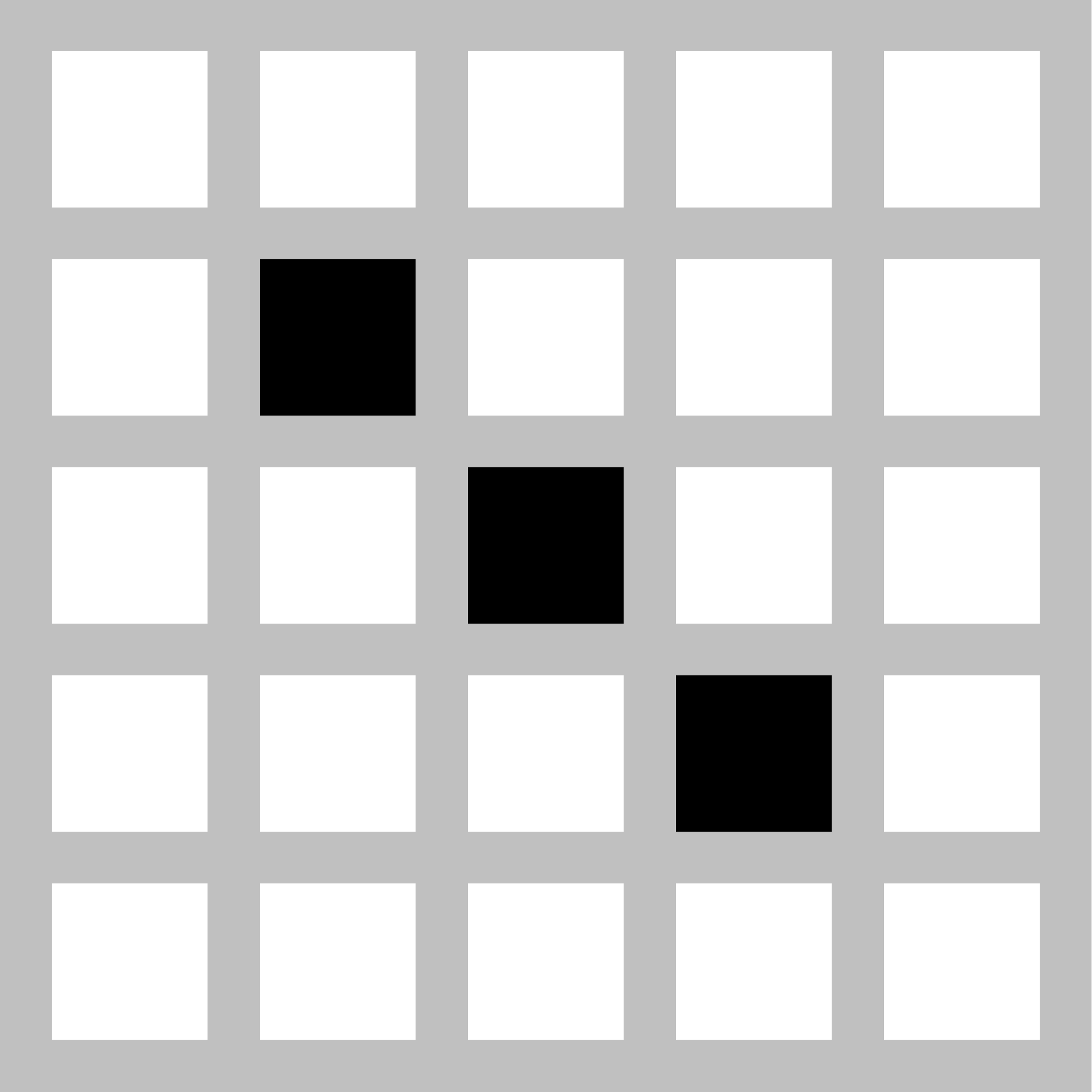} & $4.898 \times 10^{-2}$ \\
		$\mathbf{5}$ & \includegraphics[scale=0.02]{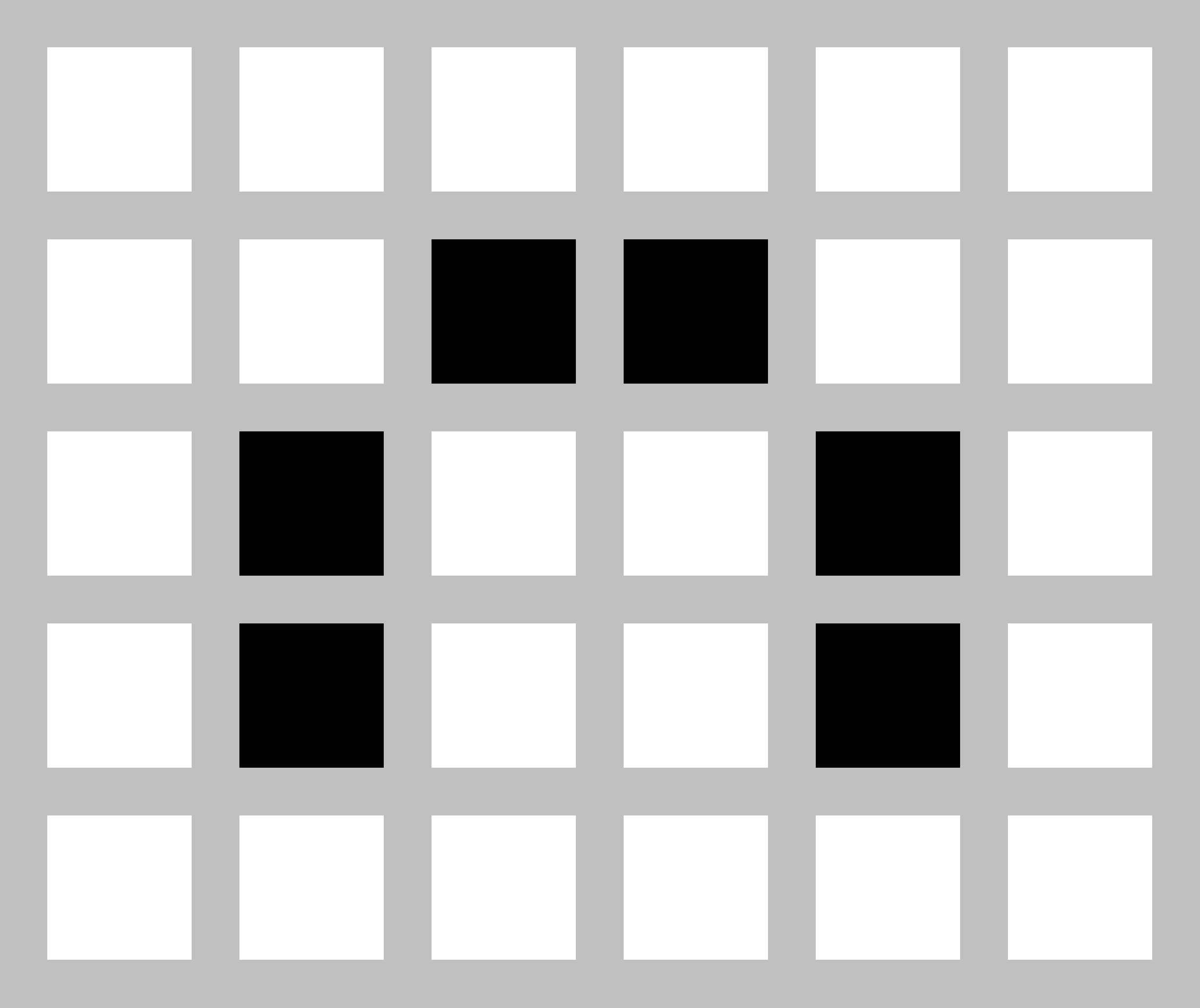} & $2.964 \times 10^{-2}$ \\
		$\mathbf{6}$ & \includegraphics[scale=0.02]{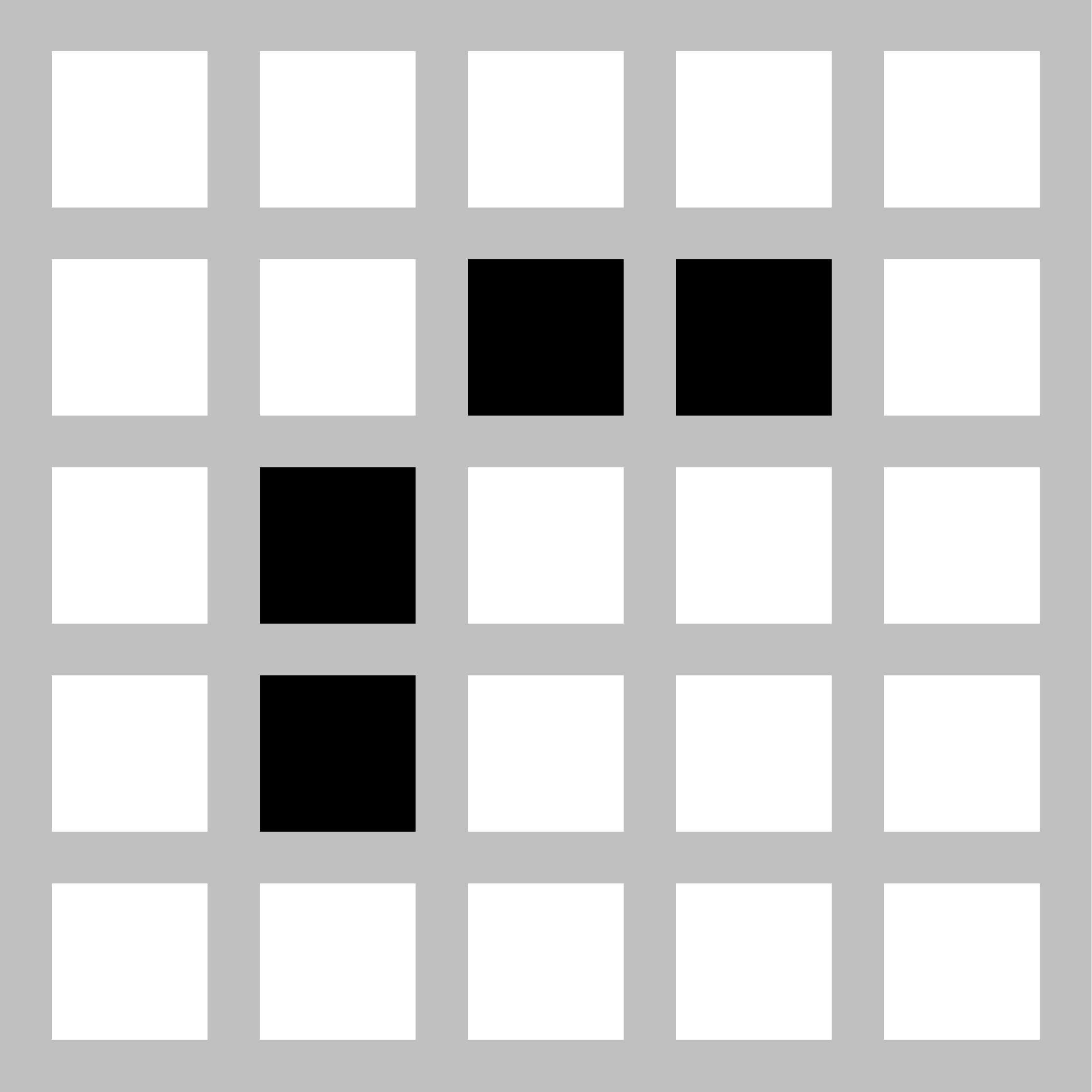} & $2.575 \times 10^{-2}$ \\
		$\mathbf{7}$ & \includegraphics[scale=0.02]{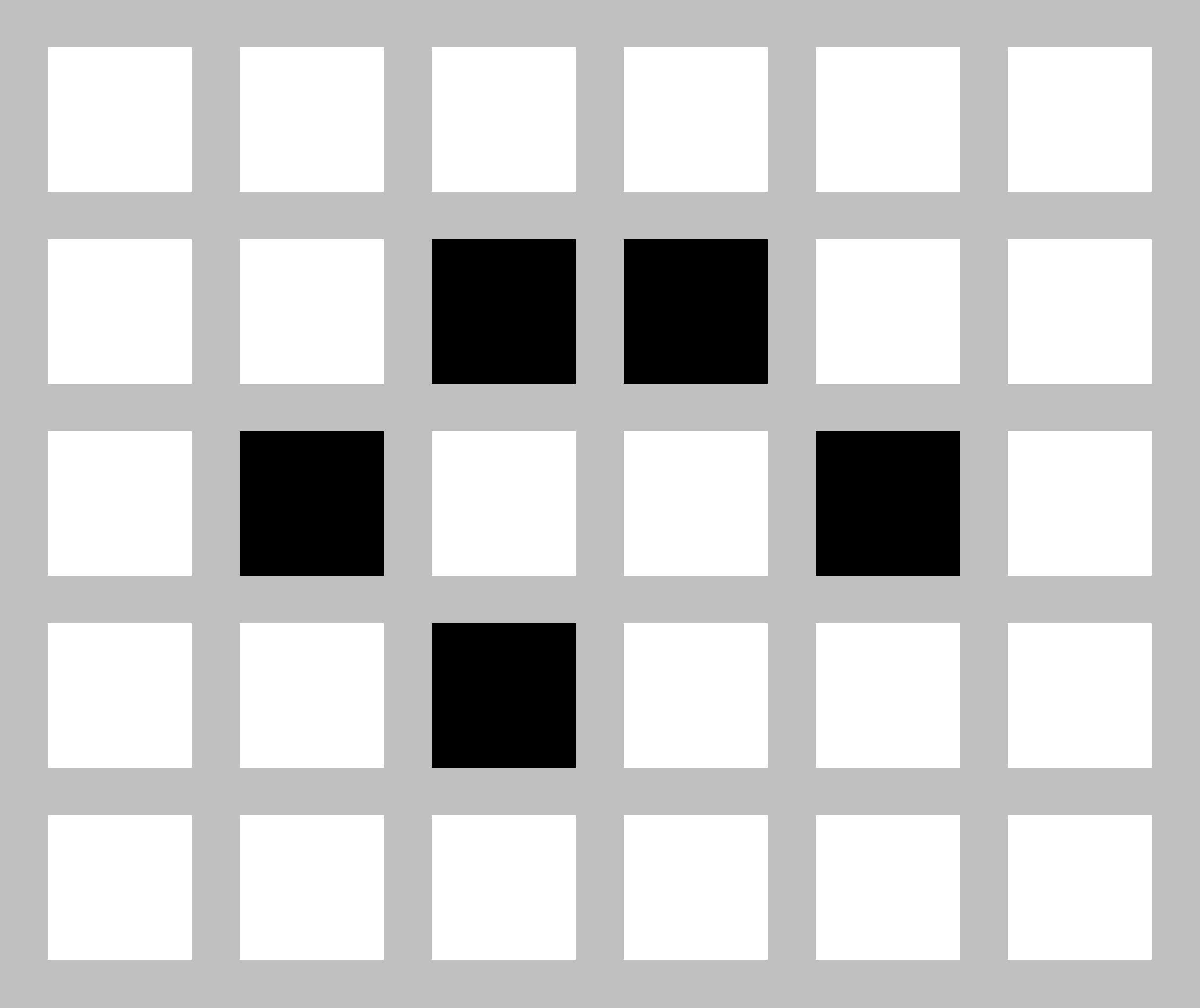} & $9.784 \times 10^{-3}$ \\
		$\mathbf{8}$ & \includegraphics[scale=0.02]{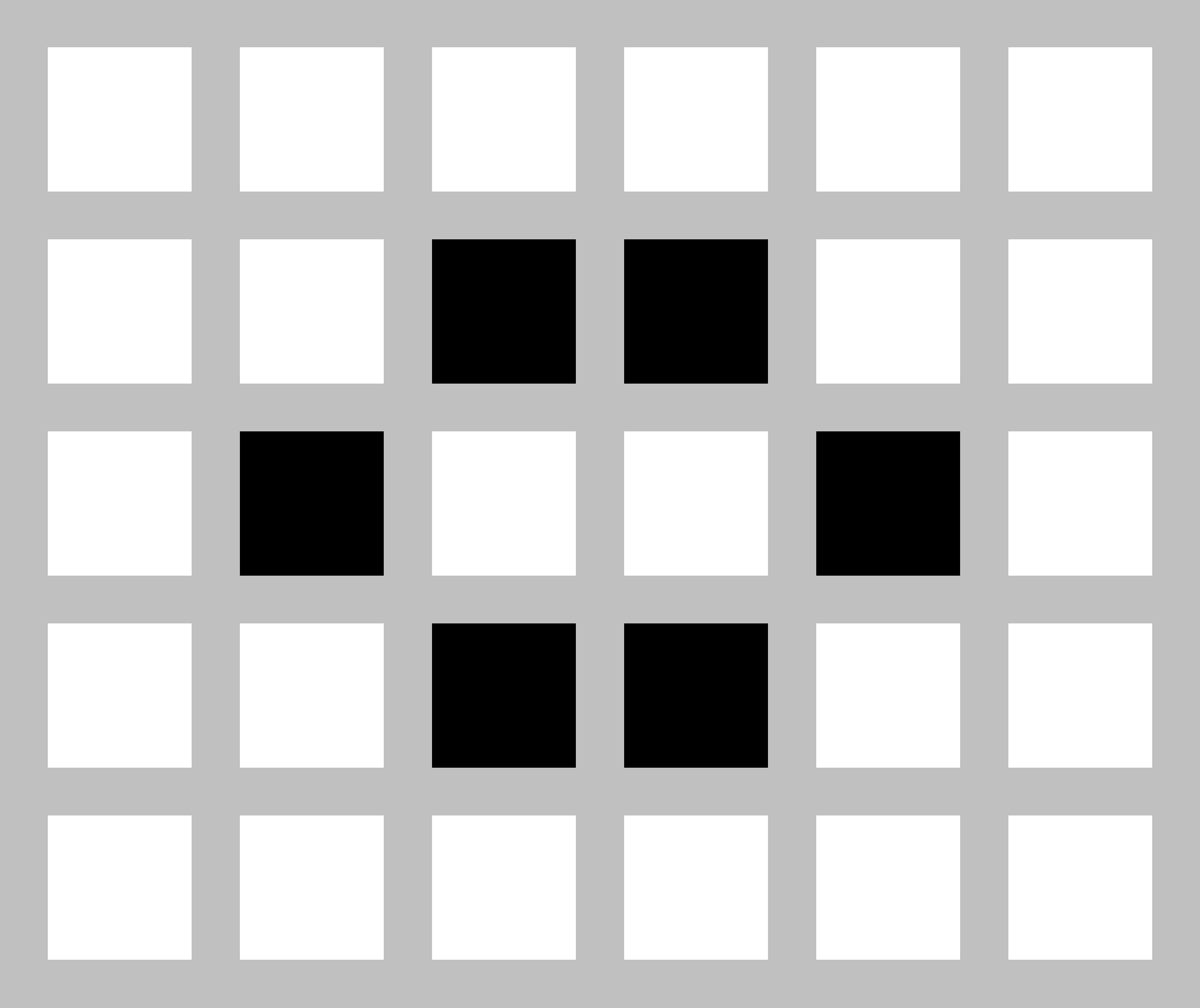} & $4.232 \times 10^{-3}$ \\
		$\mathbf{9}$ & \includegraphics[scale=0.02]{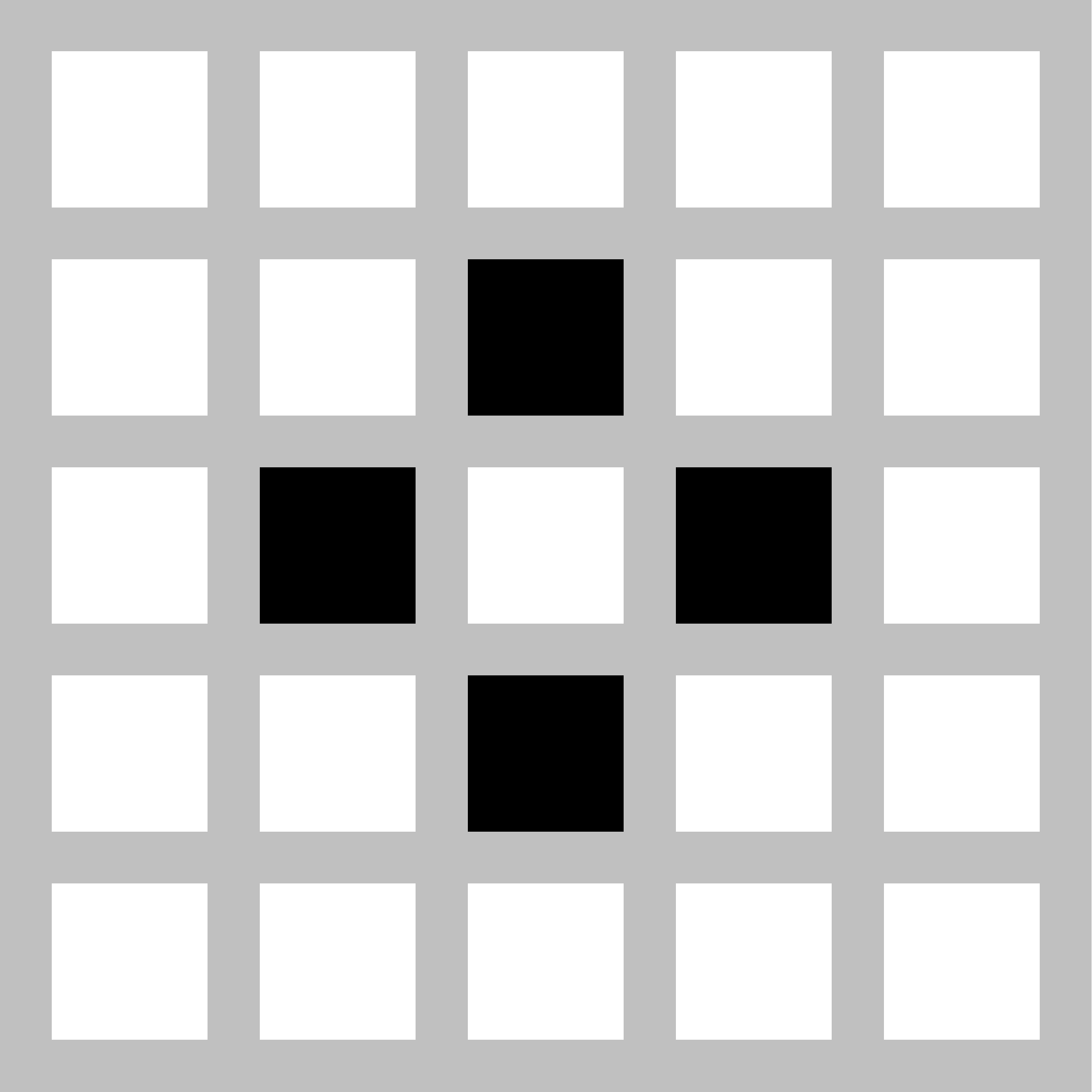} & $4.077 \times 10^{-3}$ \\
		$\mathbf{10}$ & \includegraphics[scale=0.02]{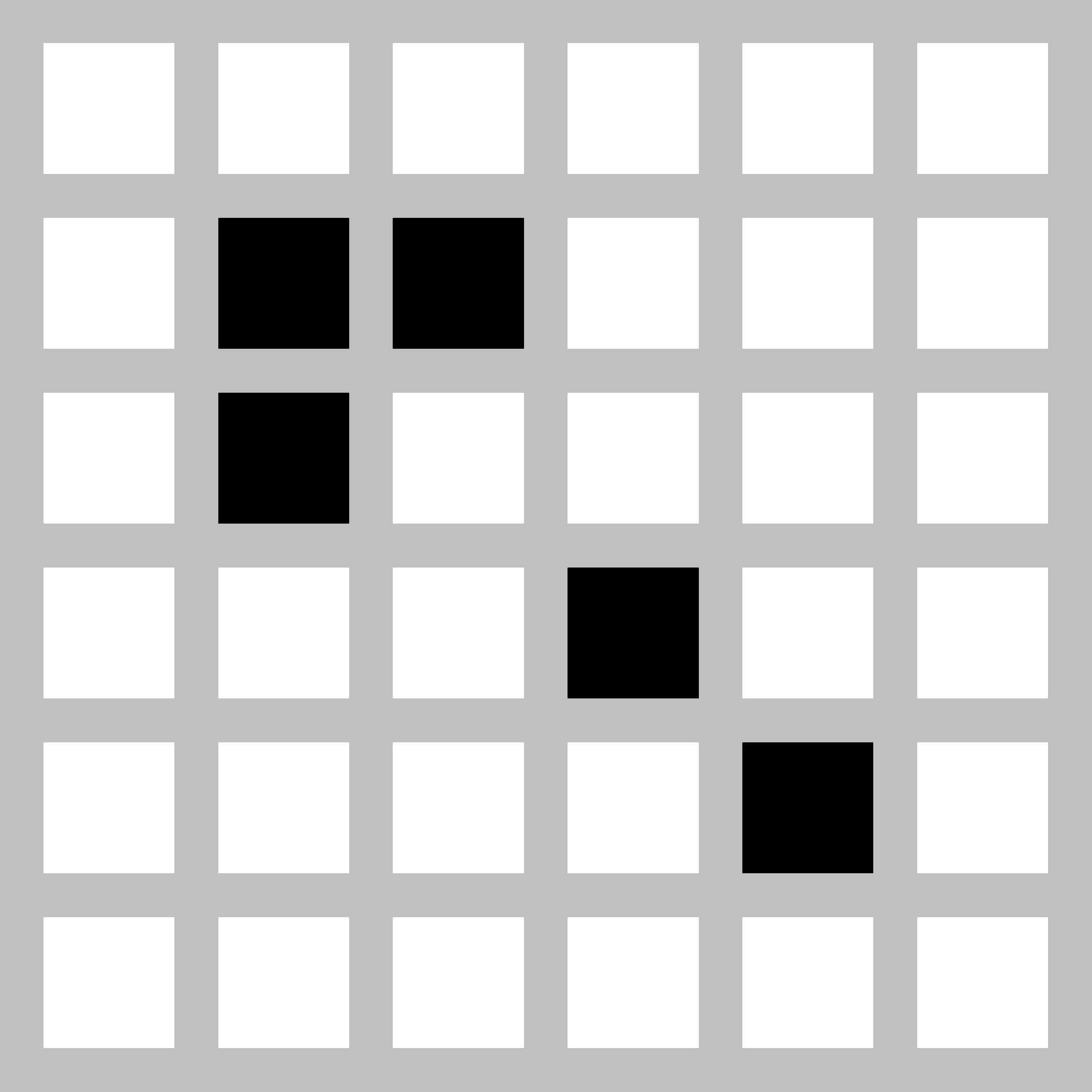} & $1.949 \times 10^{-3}$ \\
		\noalign{\smallskip}\hline
		\end{tabular} \ \ & \ \ \begin{tabular}[t]{lcc}
		\hline\noalign{\smallskip}
		{\bf \#} \ & \ {\bf Pattern} \ & \ {\bf Rel. Frequency} \\
		\noalign{\smallskip}\hline\noalign{\smallskip}
		$\mathbf{11}$ & \includegraphics[scale=0.02]{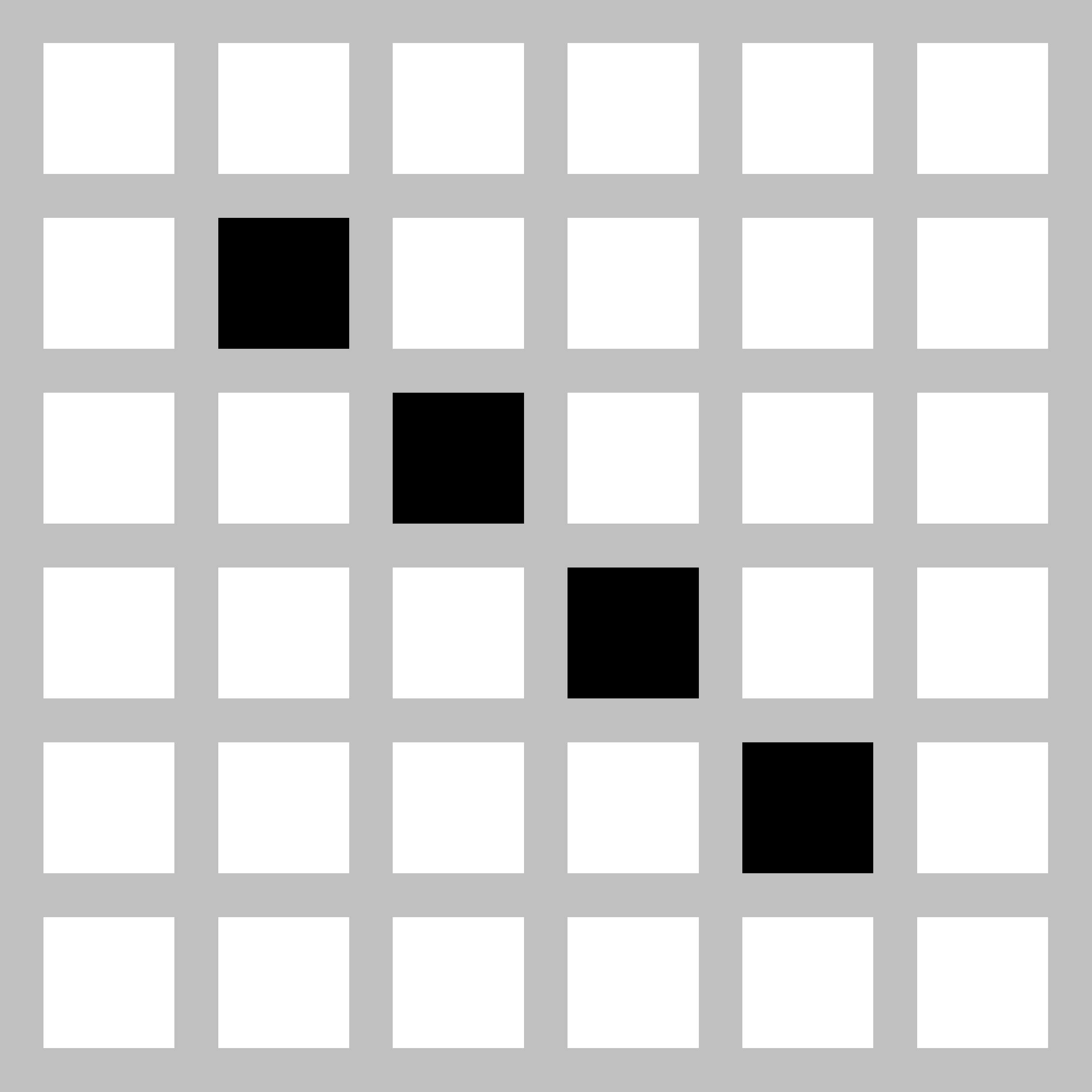} & $8.895 \times 10^{-4}$ \\
		$\mathbf{12}$ & \includegraphics[scale=0.02]{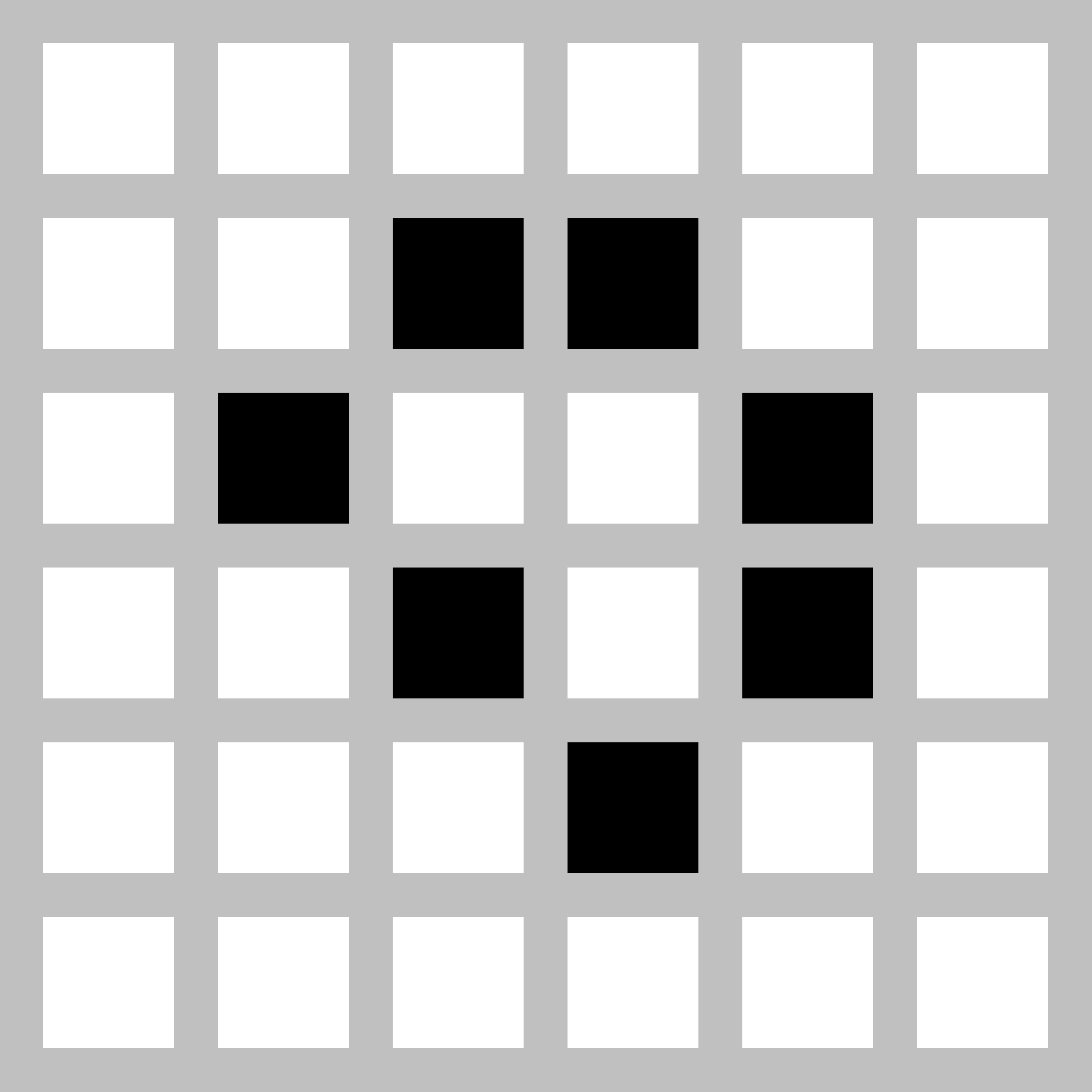} & $7.738 \times 10^{-4}$ \\
		$\mathbf{13}$ & \includegraphics[scale=0.02]{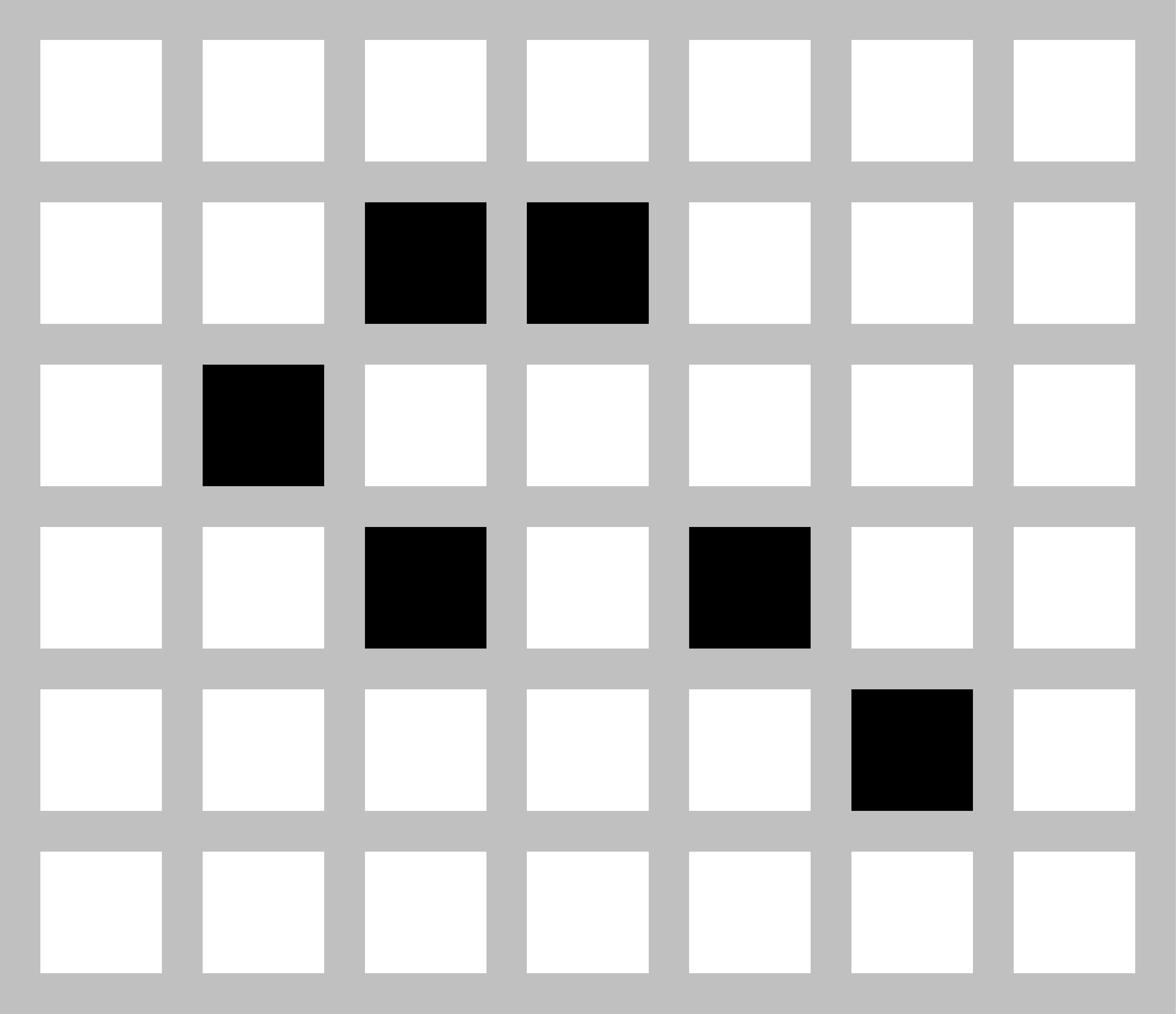} & $6.618 \times 10^{-4}$ \\
		$\mathbf{14}$ & \includegraphics[scale=0.02]{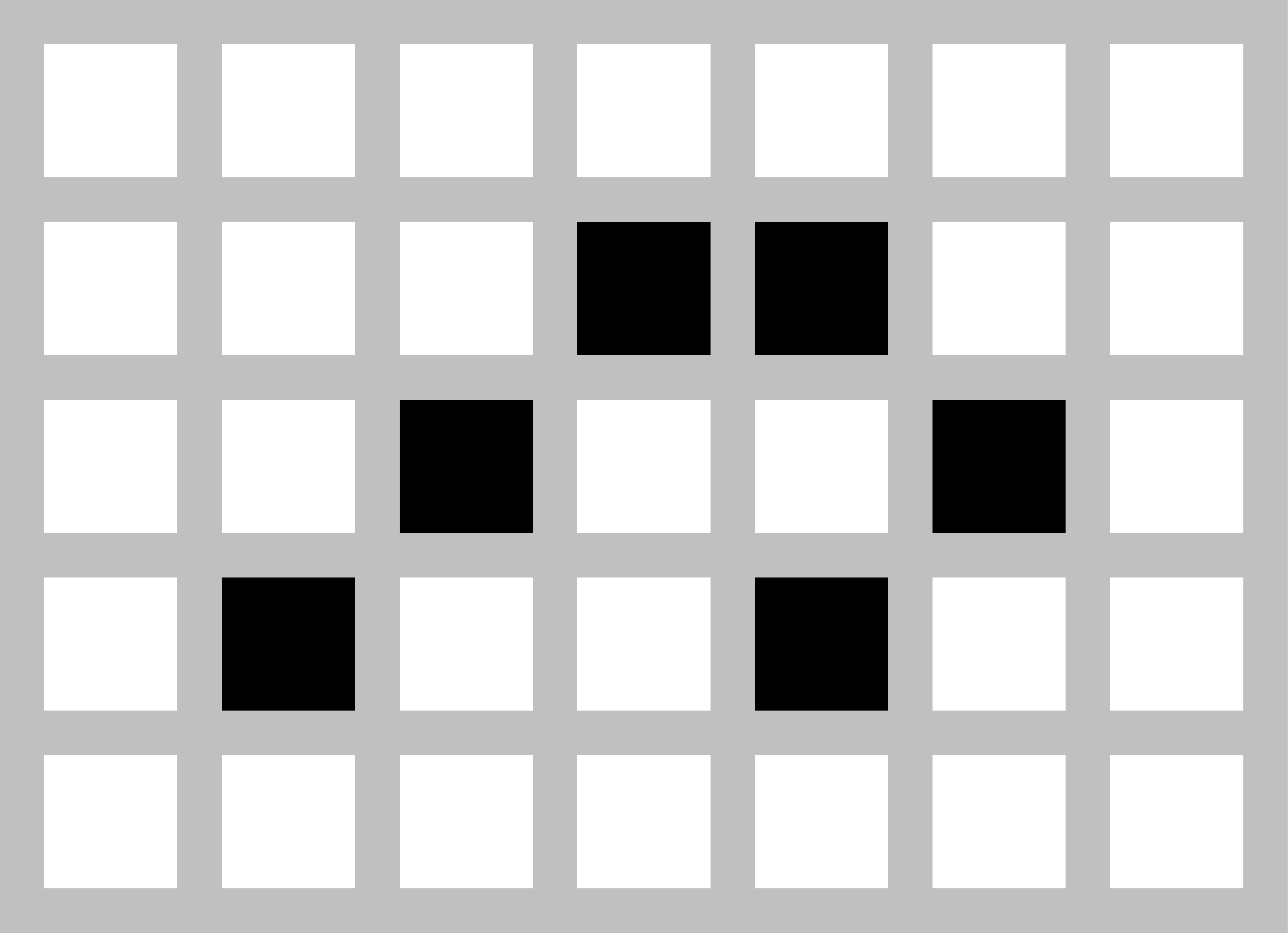} & $5.604 \times 10^{-4}$ \\
		$\mathbf{15}$ & \includegraphics[scale=0.02]{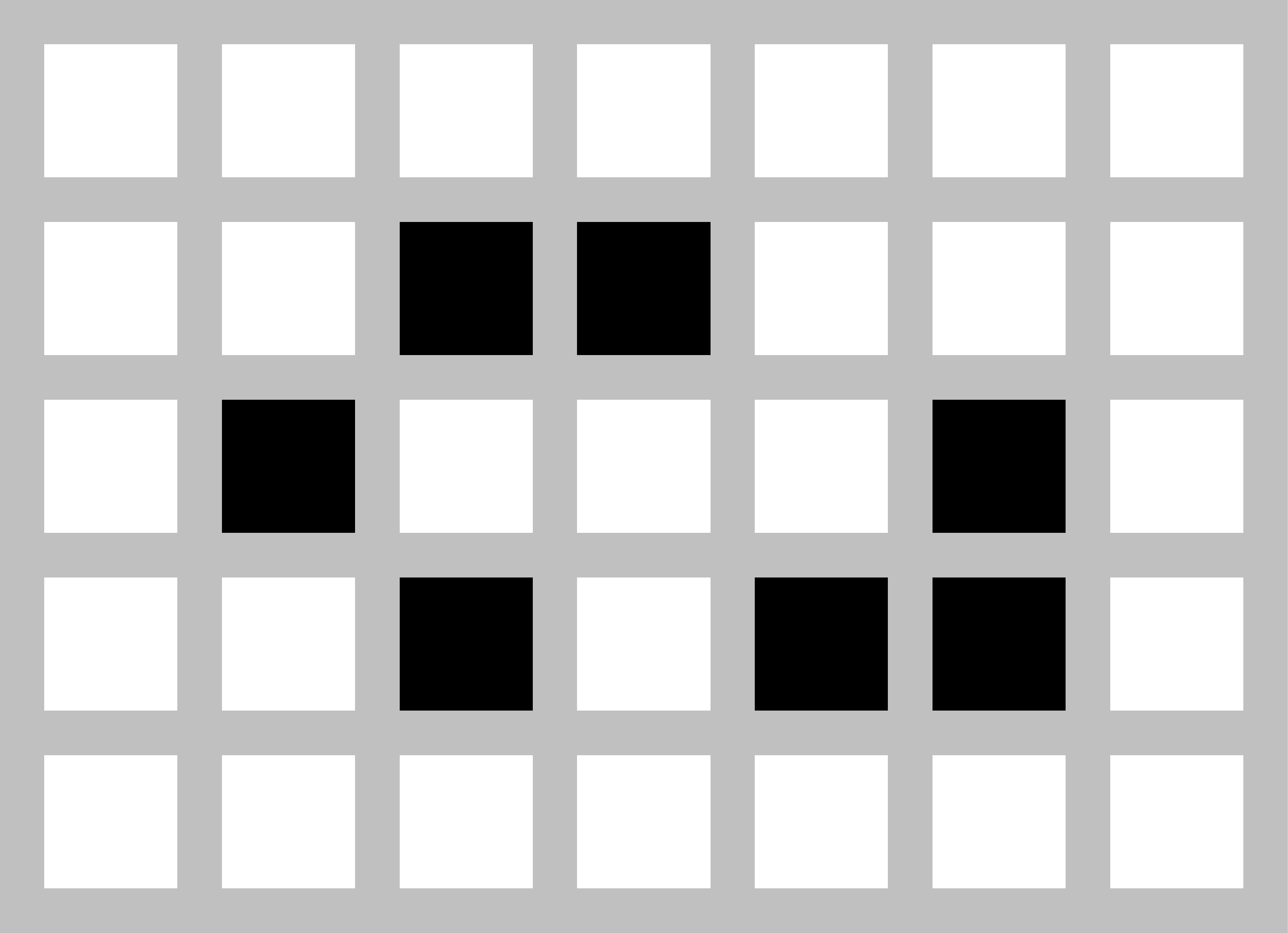} & $3.640 \times 10^{-4}$ \\
		$\mathbf{16}$ & \includegraphics[scale=0.02]{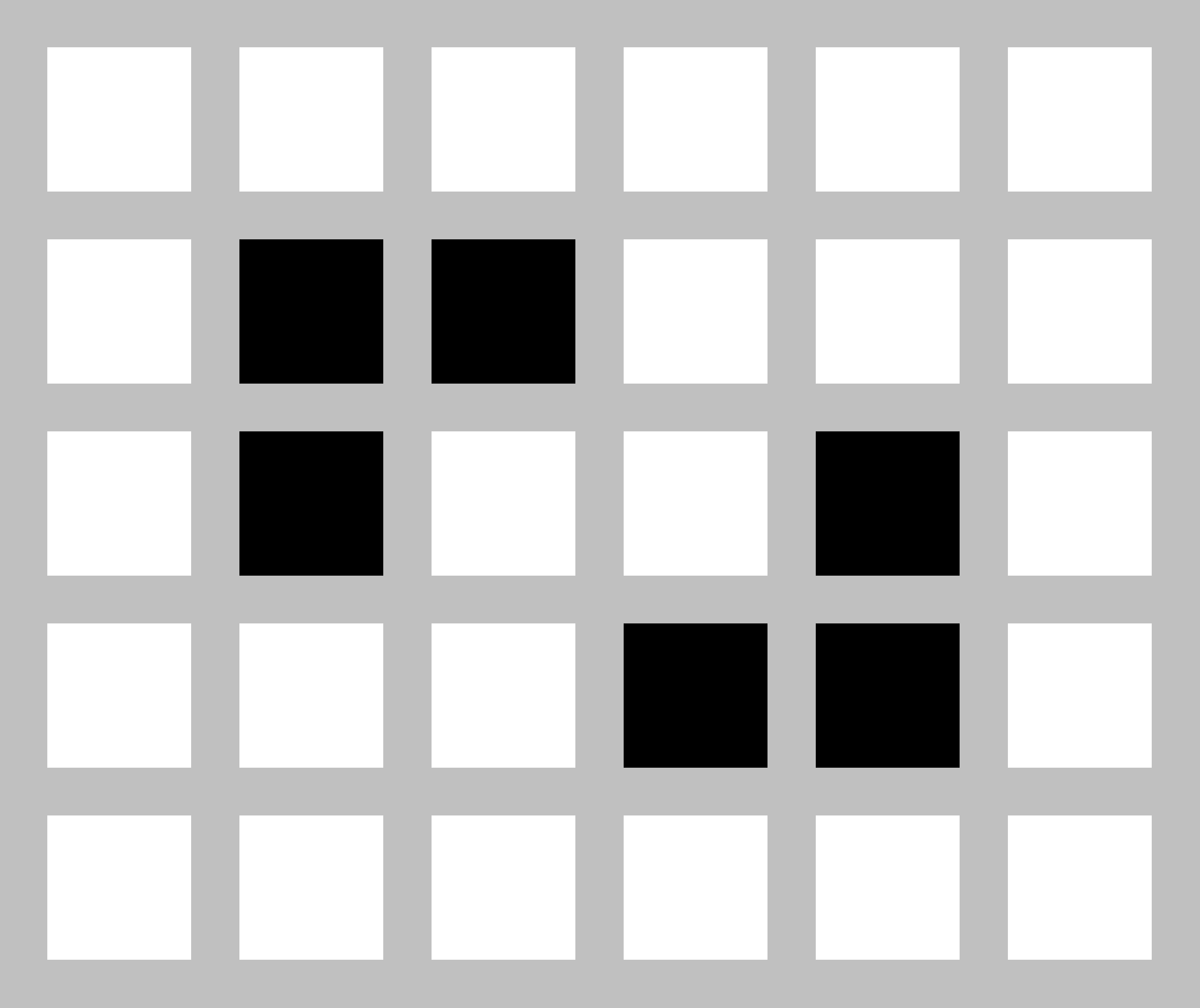} & $2.831 \times 10^{-4}$ \\
		$\mathbf{17}$ & \includegraphics[scale=0.02]{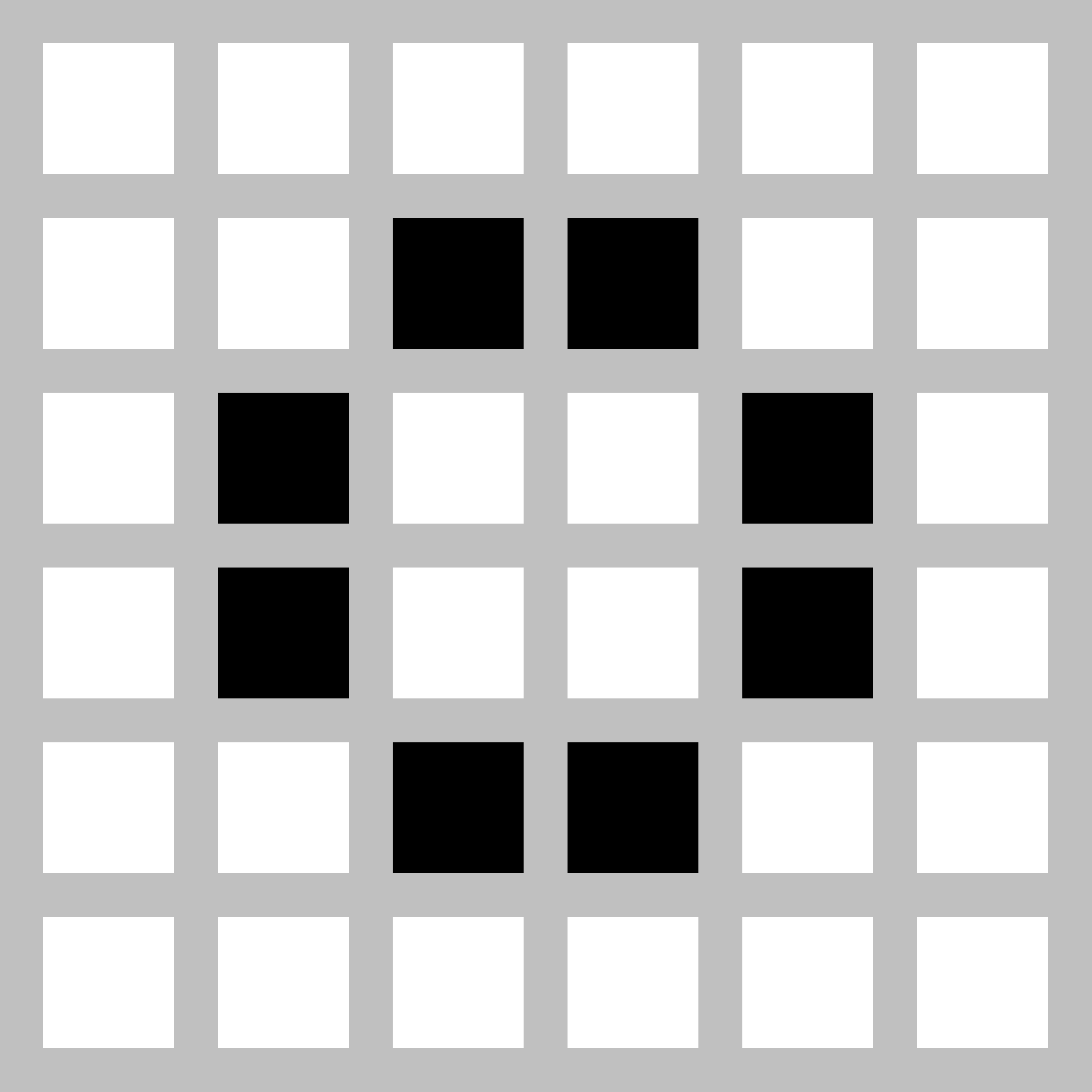} & $2.487 \times 10^{-4}$ \\
		$\mathbf{18}$ & \includegraphics[scale=0.02]{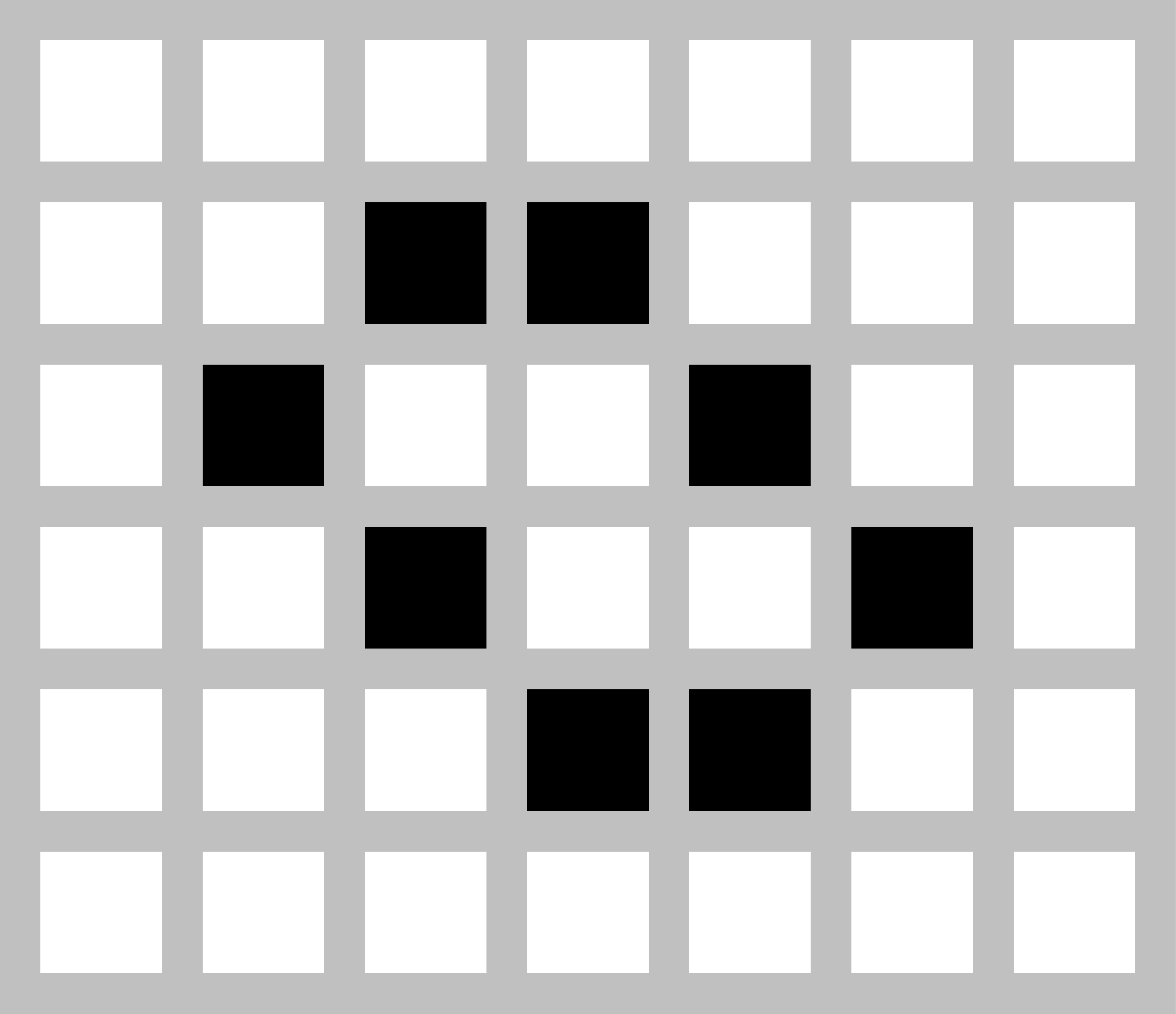} & $1.385 \times 10^{-4}$ \\
		$\mathbf{19}$ & \includegraphics[scale=0.02]{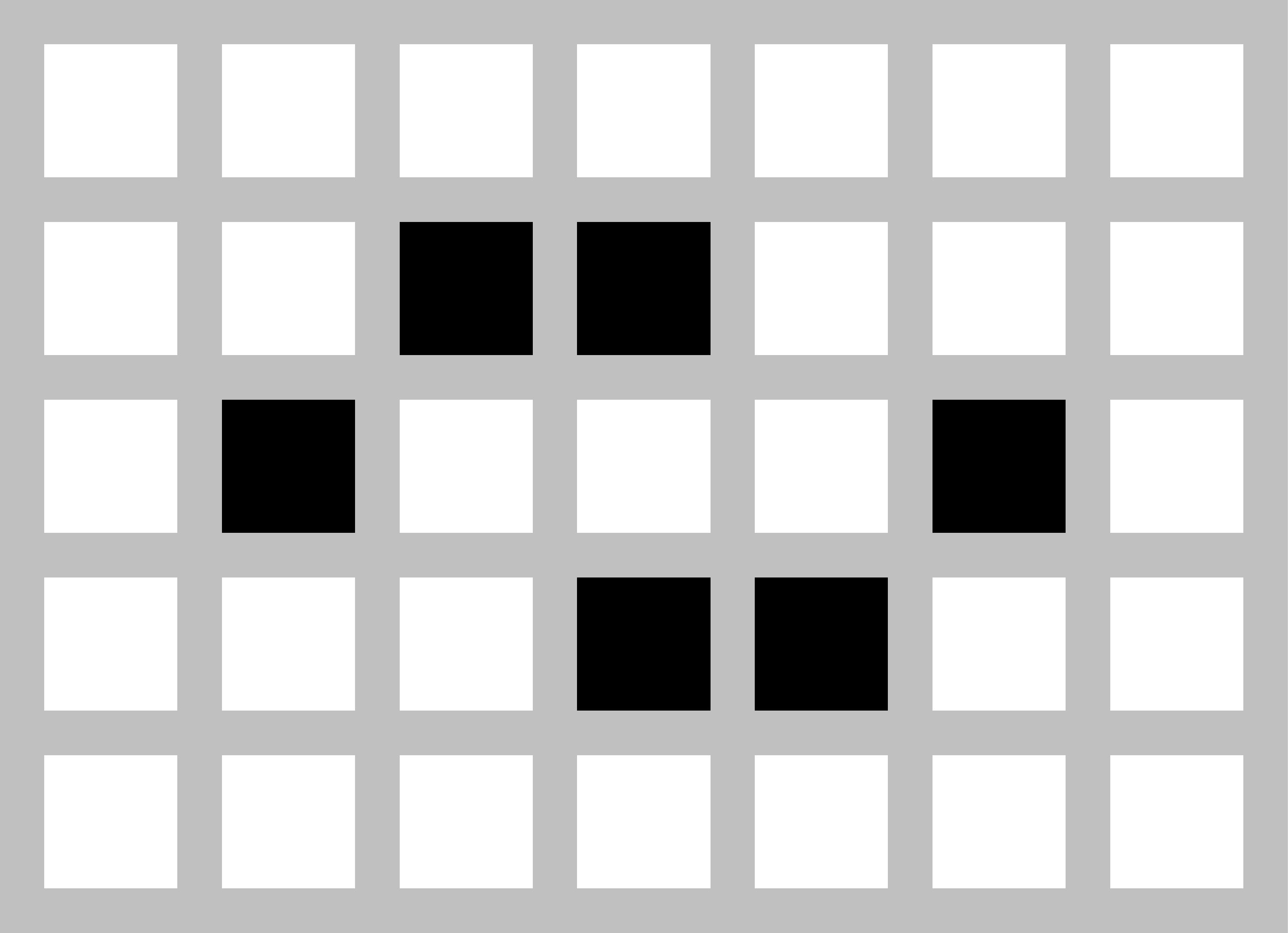} & $1.053 \times 10^{-4}$ \\
		$\mathbf{20}$ & \includegraphics[scale=0.02]{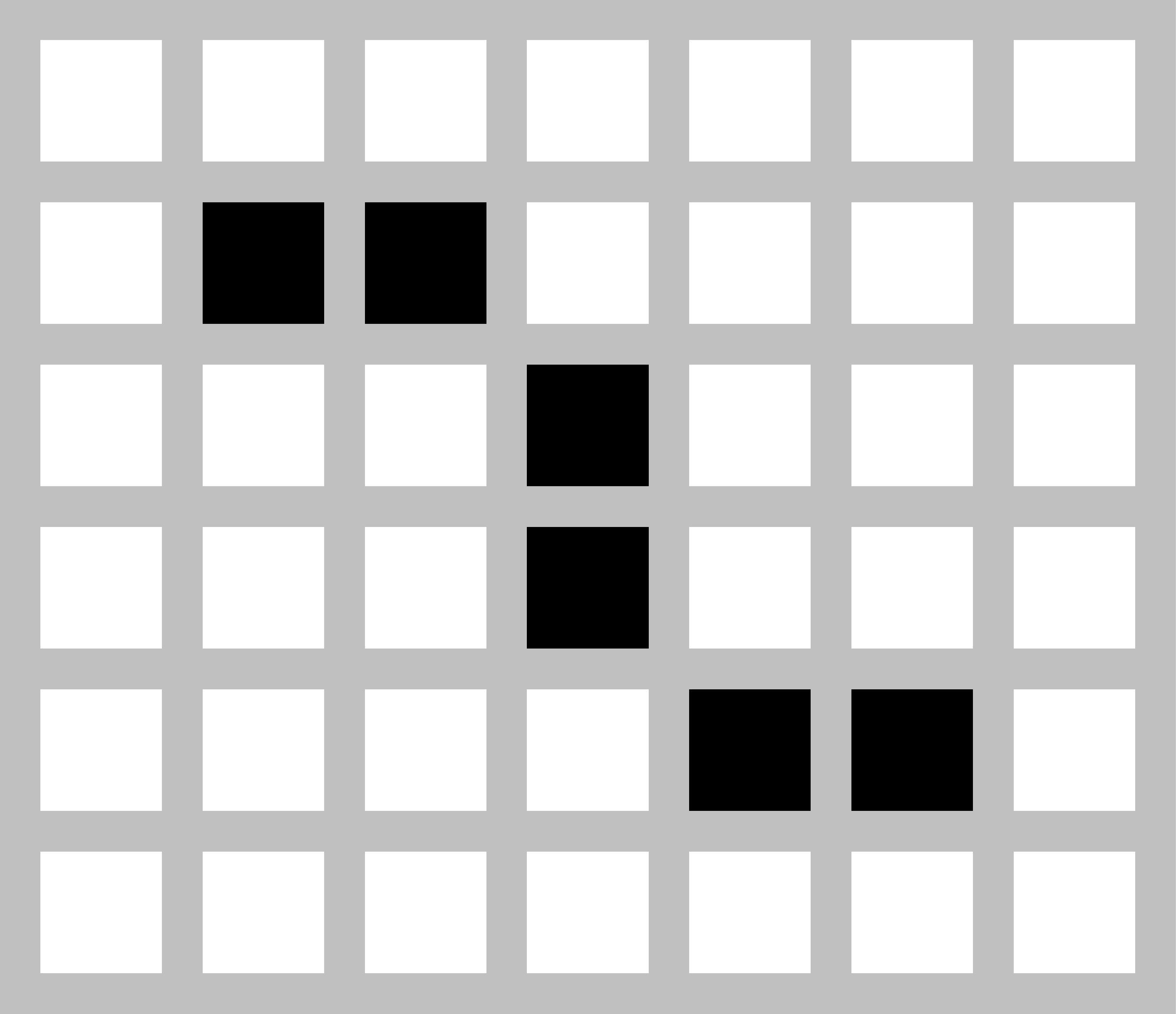} & $7.571 \times 10^{-5}$ \\
		\noalign{\smallskip}\hline
		\end{tabular}
		\end{tabular}
		\end{table}
  	
		Observe that the fact that cells stay alive if they have only one live neighbour results in an abundance of small still lifes, most of which are made up of islands, with each island being a chain or loop of a few cells. This leads to a simple grammar for constructing large still lifes -- see Figure~\ref{fig:2x2_still_life}. The number of distinct strict still lifes with $n$ cells for $n = 1, 2, 3, \ldots$ is given by $0,2,1,3,4,9,10,27,48,126,\ldots$\footnote{Sloane's A166476 -- the still lifes with $9$ or fewer cells are shown in Appendix I}. Compare this with the corresponding sequence for Life, which is $0, 0, 0, 2, 1, 5, 4, 9, 10, 25,\ldots$\footnote{Sloane's A019473}.
  	
  	The oscillators that occur naturally in 2x2 do not occur in Life. The majority of common oscillators have period 2 or 4, but some small patterns give rise to very high-period oscillators. For example, the fourth most common oscillator is simply the stairstep hexomino in one of its phases, yet it has period 26. The thirteenth most common oscillator, which it might be appropriate to name the ``decathlon,'' has period 10 and evolves out of a horizontal row of $5$ adjacent cells, much like the period 15 ``pentadecathlon'' of Life evolves out of a horizontal row of $10$ adjacent cells. Oscillators with periods 14 and 22 are also relatively frequent, as demonstrated by Table~\ref{tab:common_2x2_osc}.\footnote{Based on data from the \emph{Online Life-Like CA Soup Search}. A total of $11,270,020$ (non-distinct) oscillators were catalogued.}

		\begin{table}
		\center
		\caption{The $20$ most common naturally-occurring oscillators in the 2x2 rule and their approximate frequency (out of $1.000$) relative to all oscillators}
		\label{tab:common_2x2_osc}
		\begin{tabular}{cc}
		\begin{tabular}[t]{lccc}
		\hline\noalign{\smallskip}
		{\bf \#} \ & \ {\bf Pattern} \ & \ {\bf Period} \ & \ {\bf Rel. Freq.} \\
		\noalign{\smallskip}\hline\noalign{\smallskip}
		$\mathbf{1}$ & \includegraphics[scale=0.02]{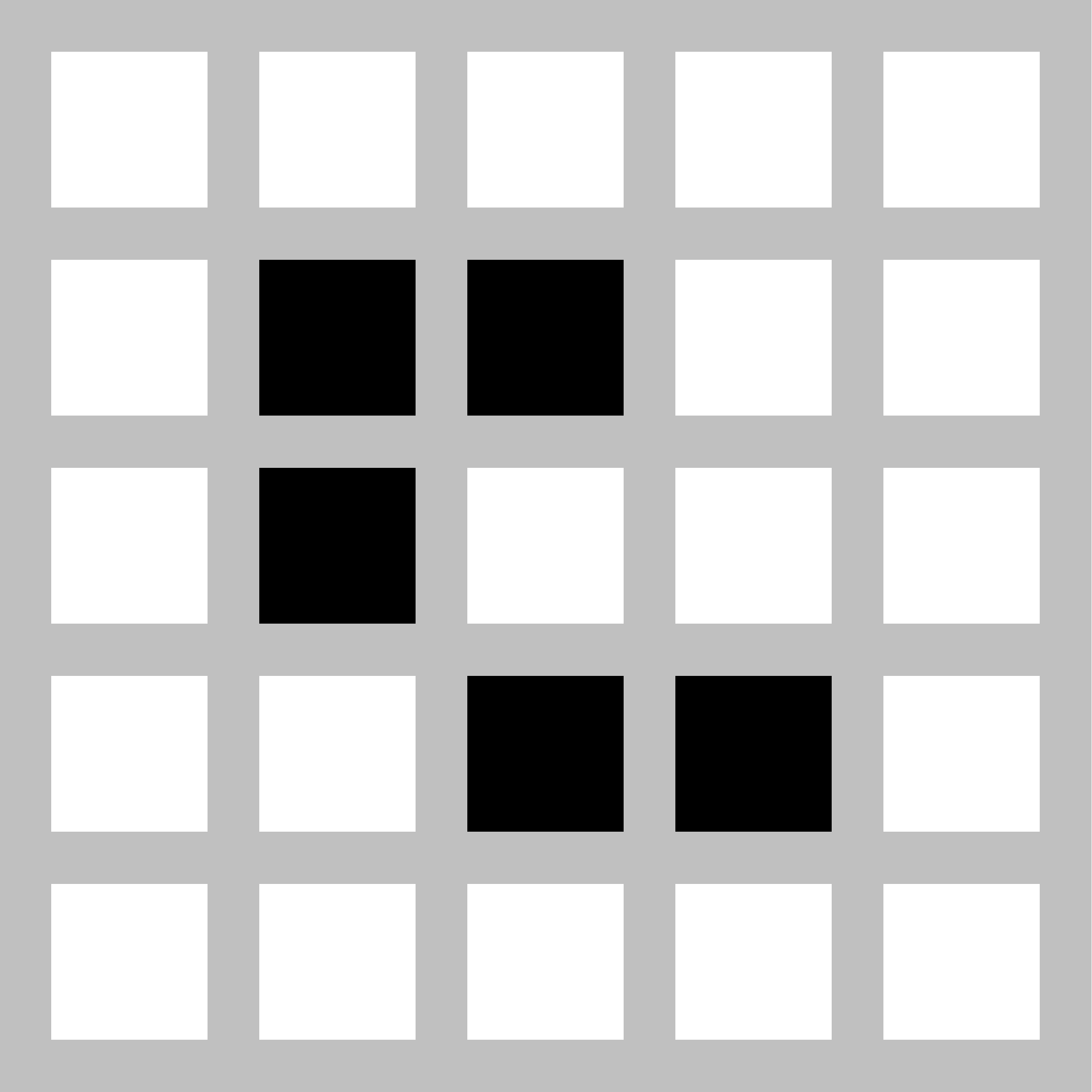} & $2$ & $4.824 \times 10^{-1}$ \\
		$\mathbf{2}$ & \includegraphics[scale=0.02]{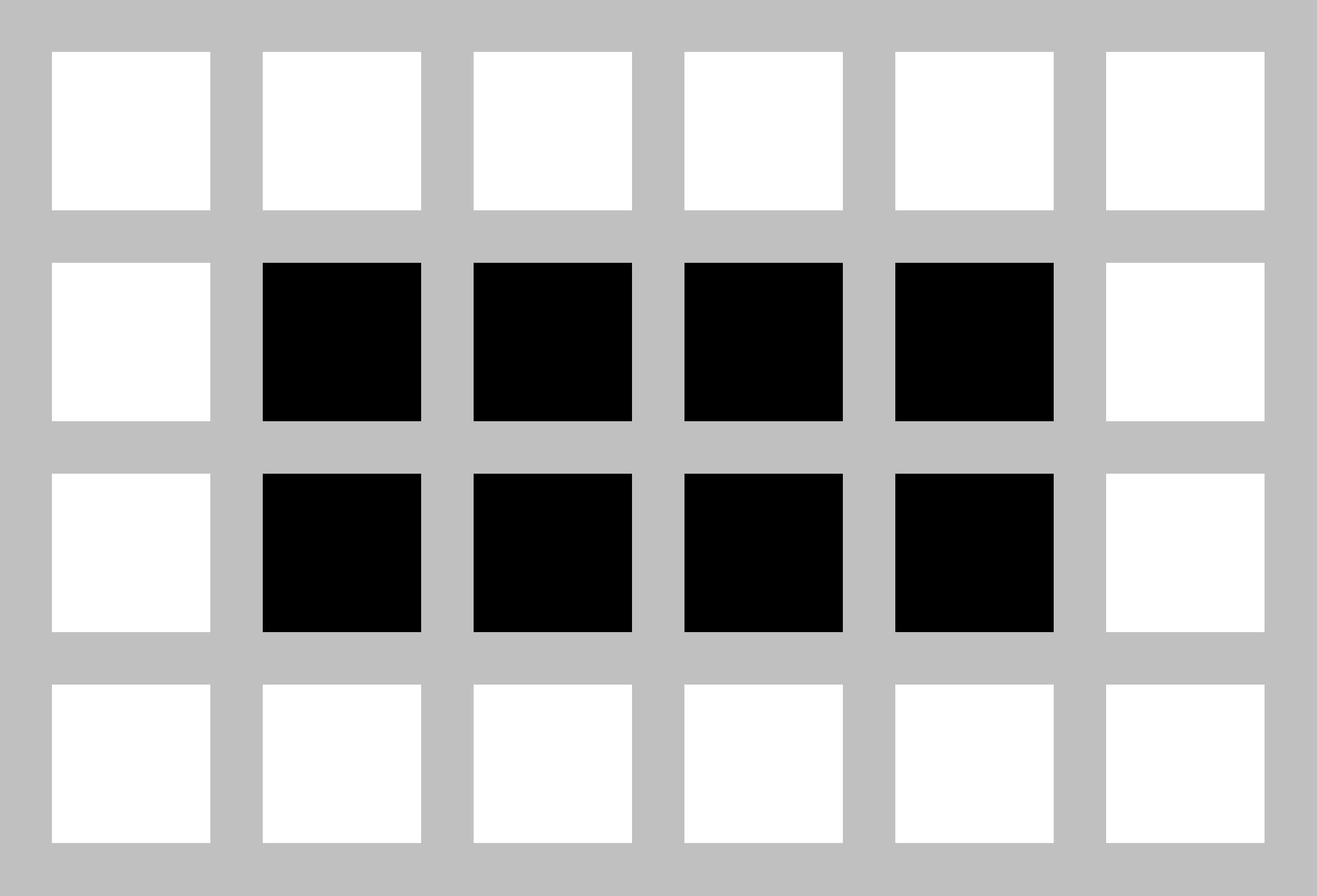} & $2$ & $2.170 \times 10^{-1}$ \\
		$\mathbf{3}$ & \includegraphics[scale=0.02]{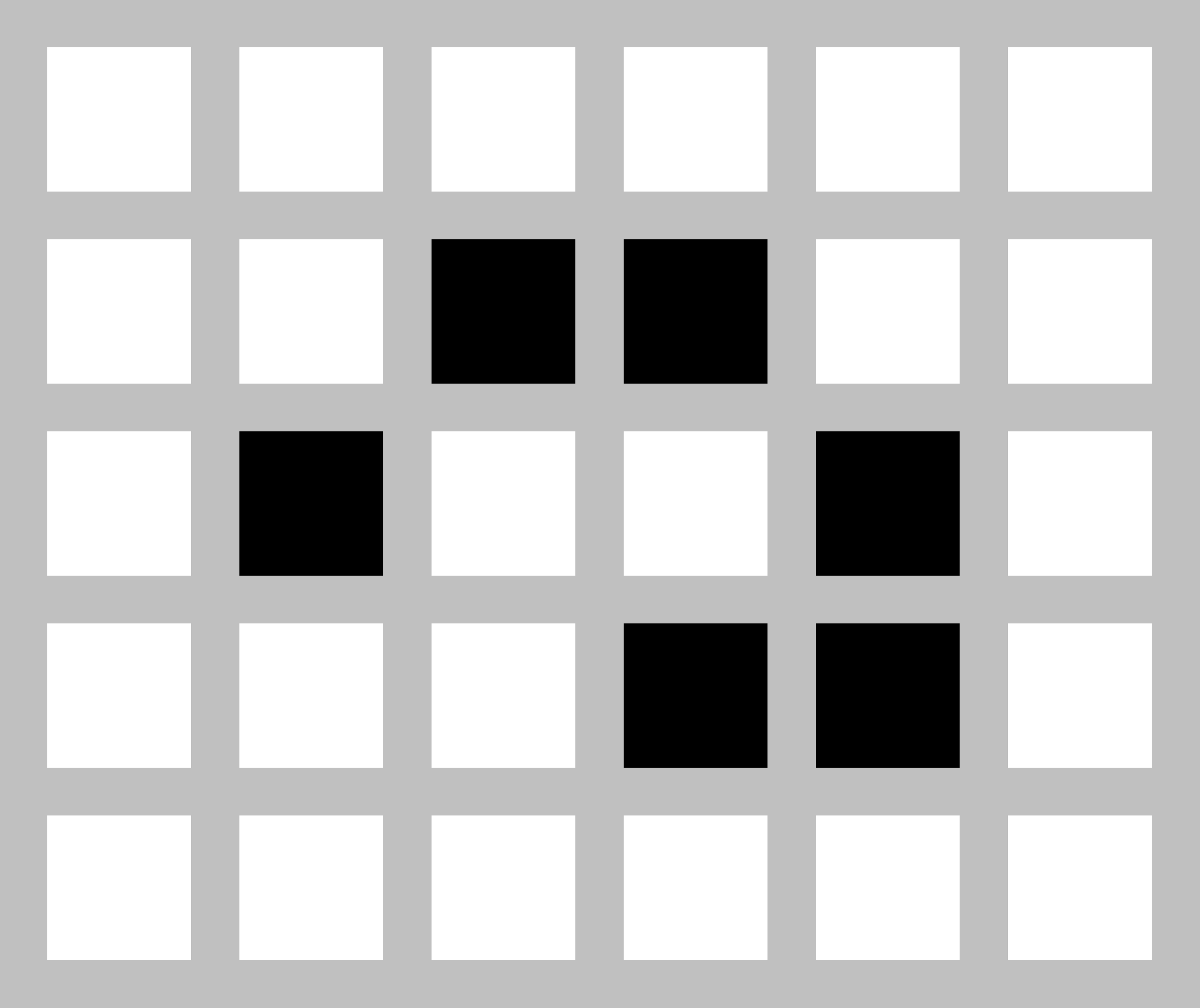} & $2$ & $5.741 \times 10^{-2}$ \\
		$\mathbf{4}$ & \includegraphics[scale=0.02]{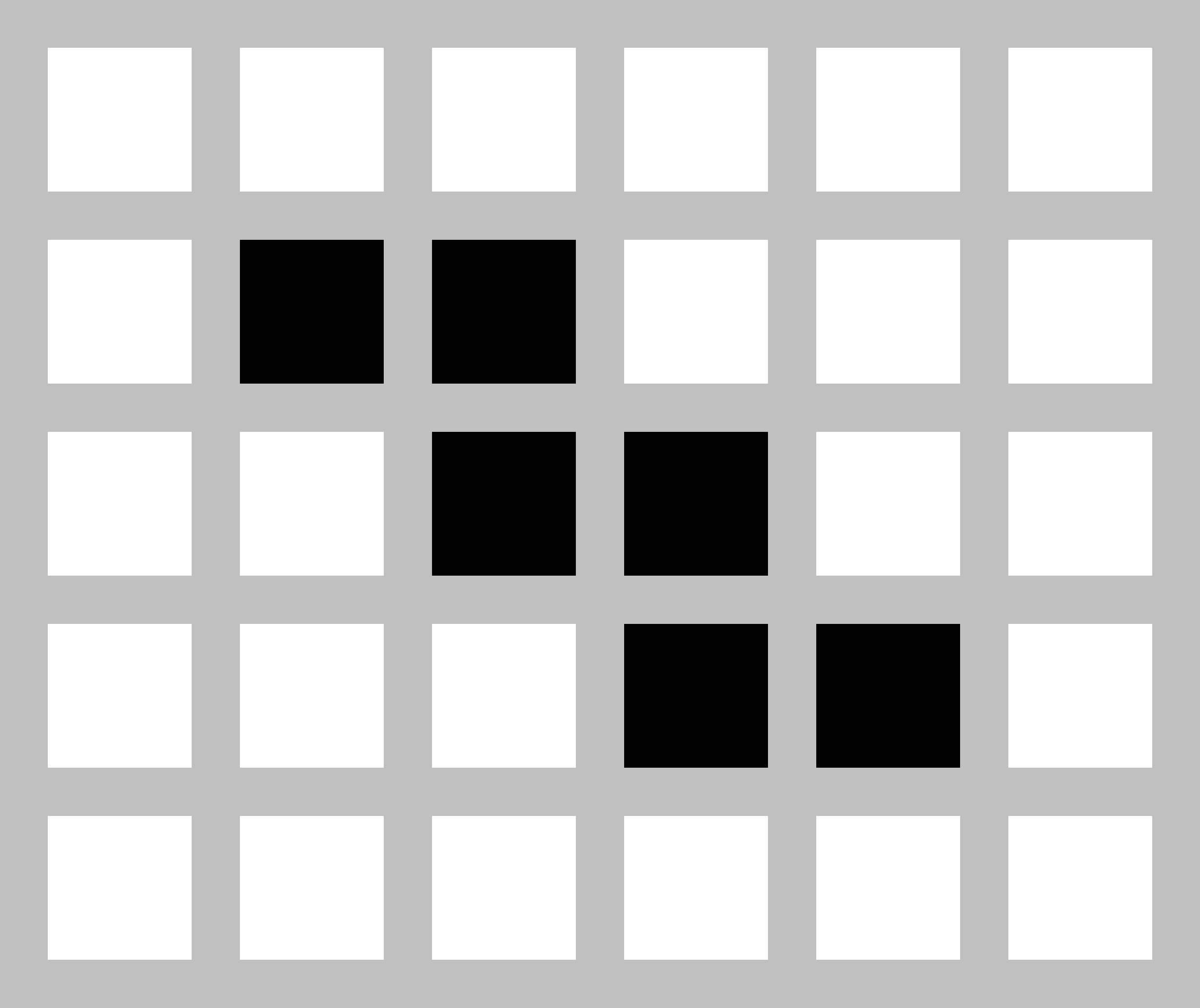} & $26$ & $5.515 \times 10^{-2}$ \\
		$\mathbf{5}$ & \includegraphics[scale=0.02]{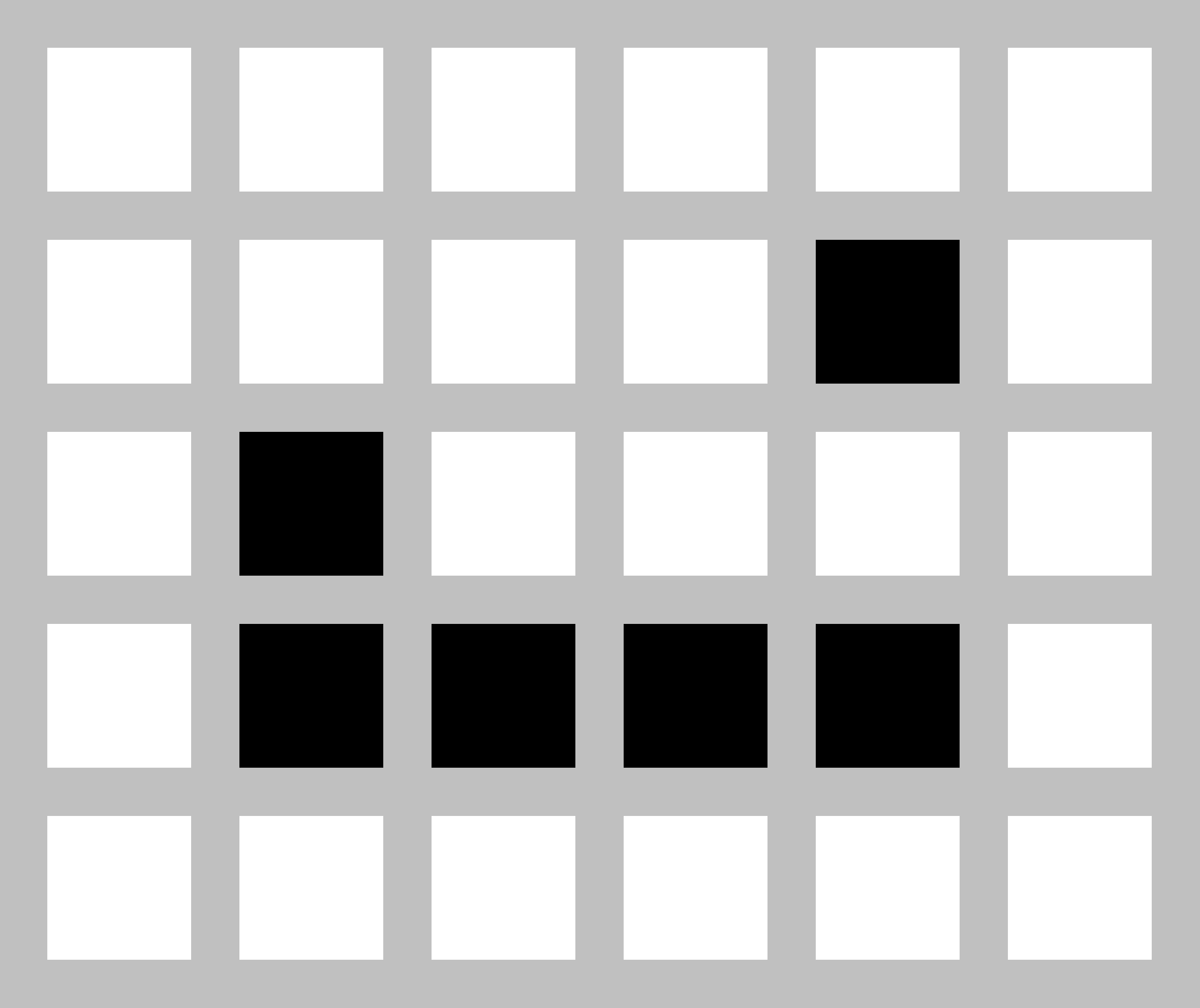} & $4$ & $3.718 \times 10^{-2}$ \\
		$\mathbf{6}$ & \includegraphics[scale=0.02]{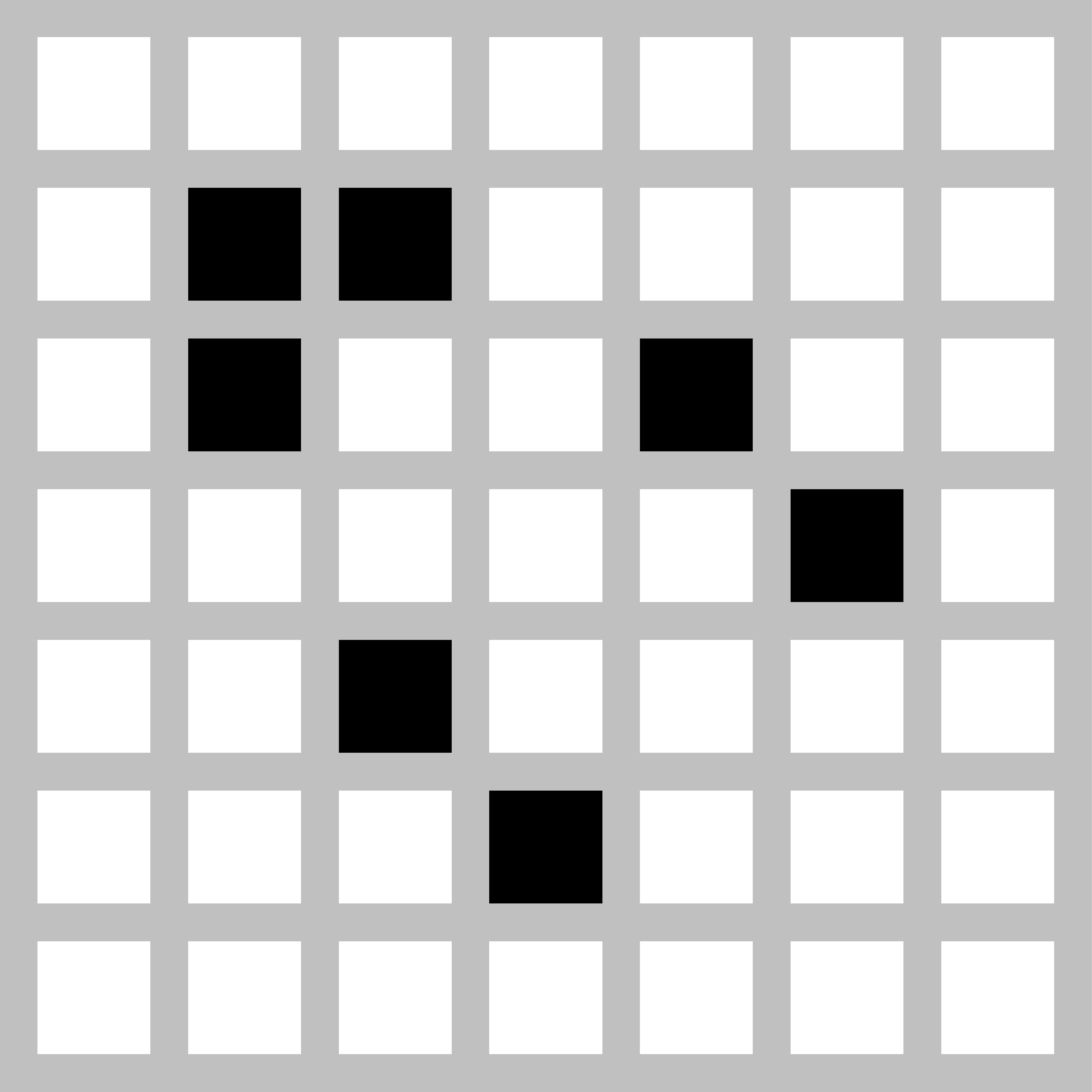} & $14$ & $3.364 \times 10^{-2}$ \\
		$\mathbf{7}$ & \includegraphics[scale=0.02]{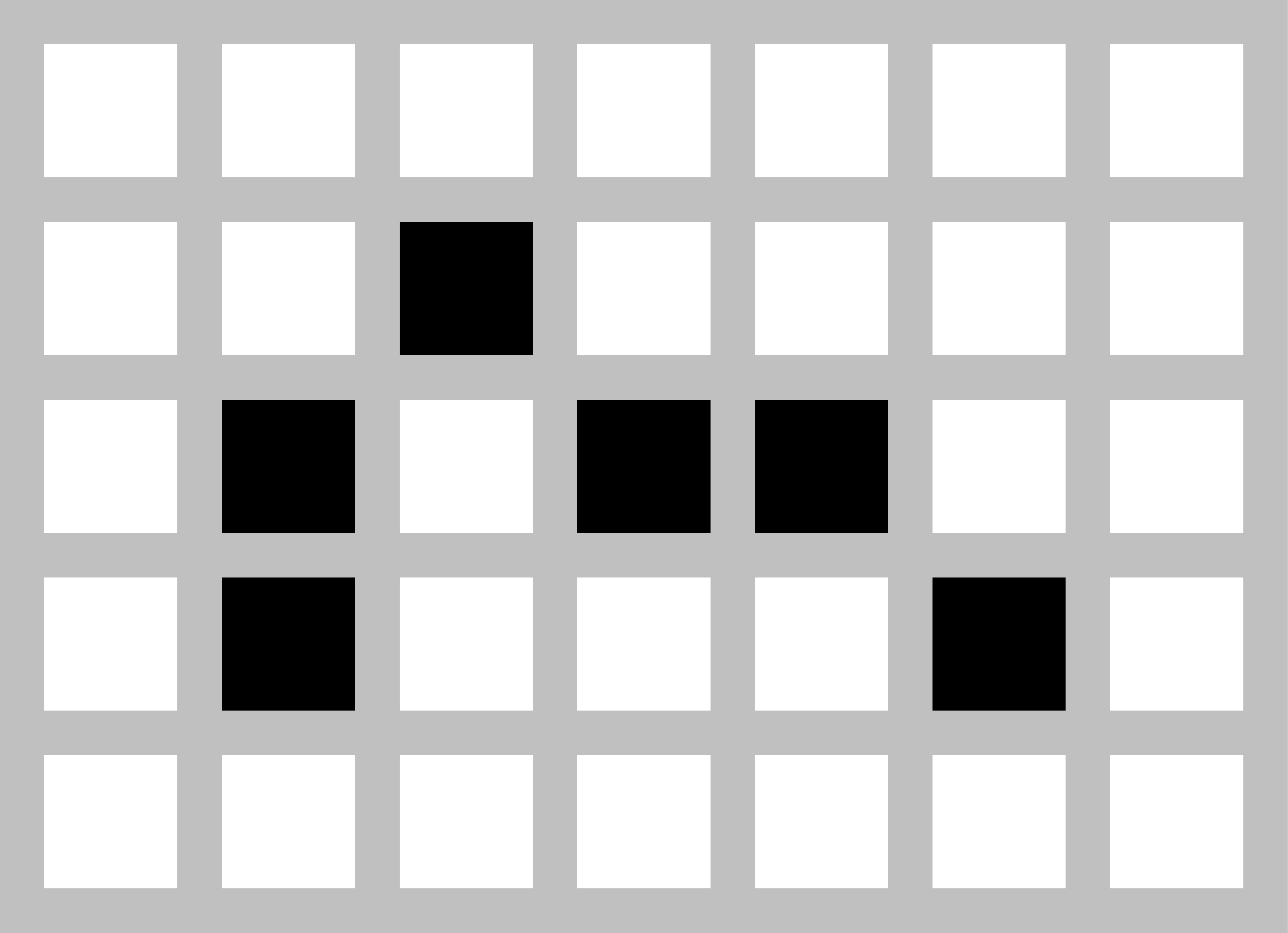} & $4$ & $3.104 \times 10^{-2}$ \\
		$\mathbf{8}$ & \includegraphics[scale=0.02]{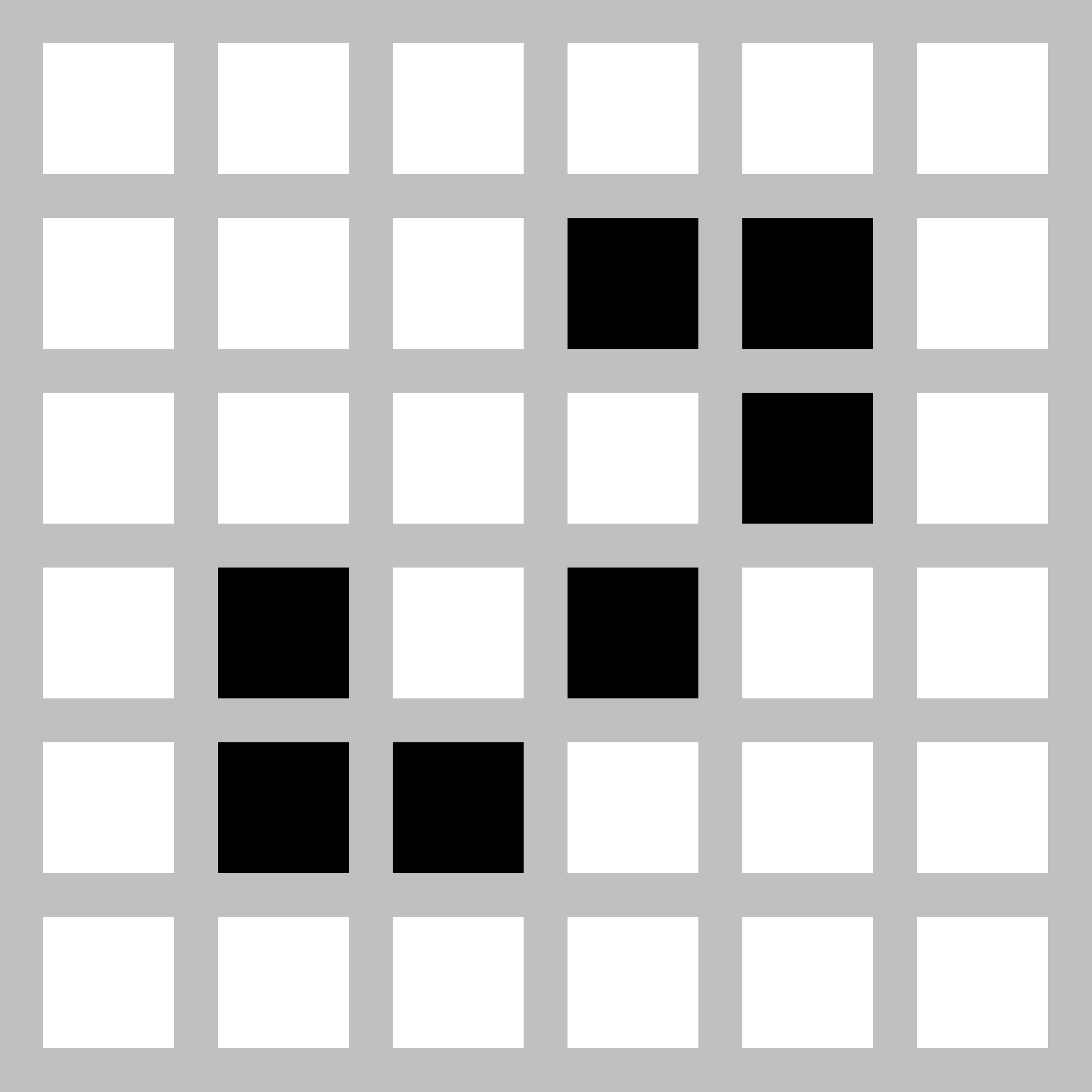} & $2$ & $1.795 \times 10^{-2}$ \\
		$\mathbf{9}$ & \includegraphics[scale=0.02]{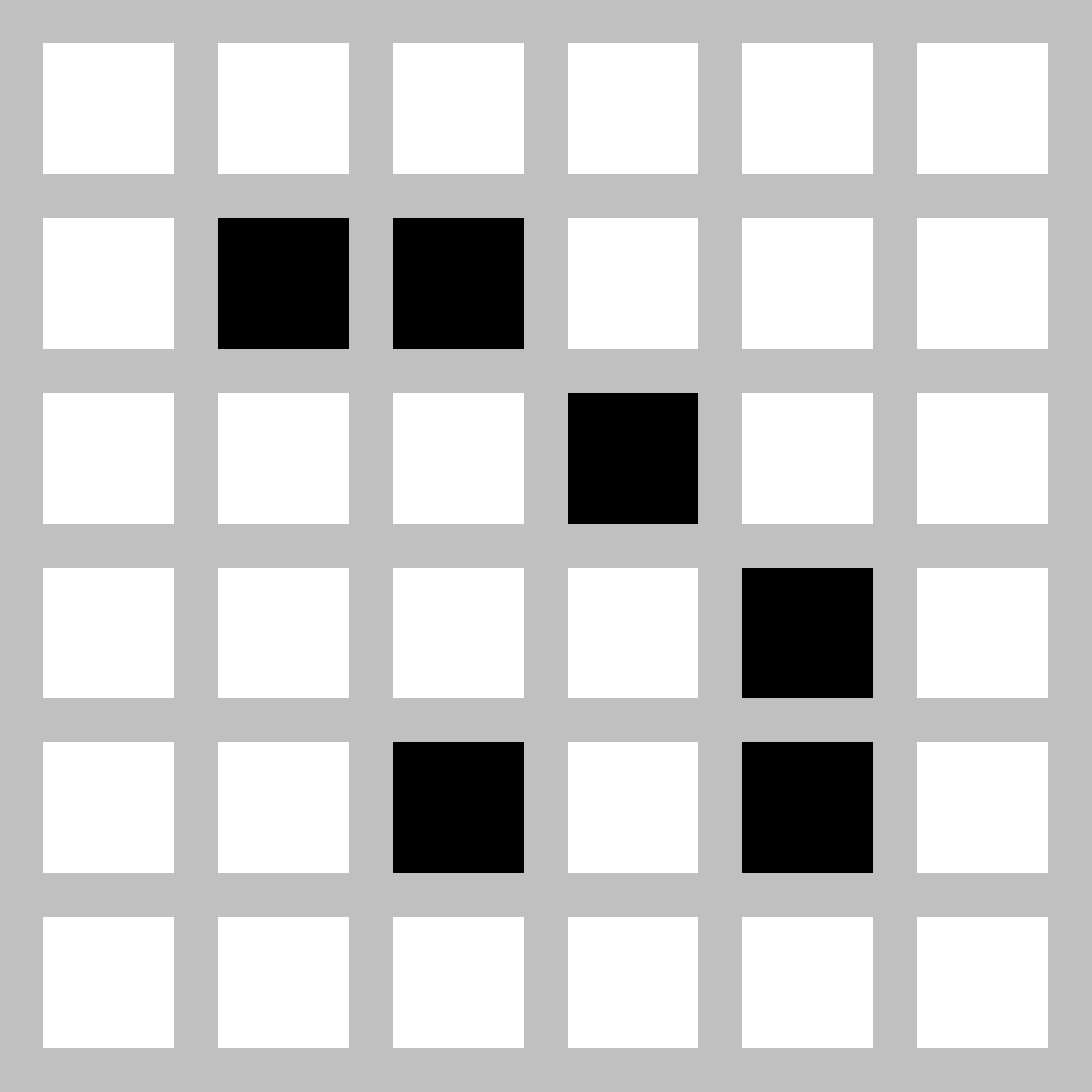} & $4$ & $1.766 \times 10^{-2}$ \\
		$\mathbf{10}$ & \includegraphics[scale=0.02]{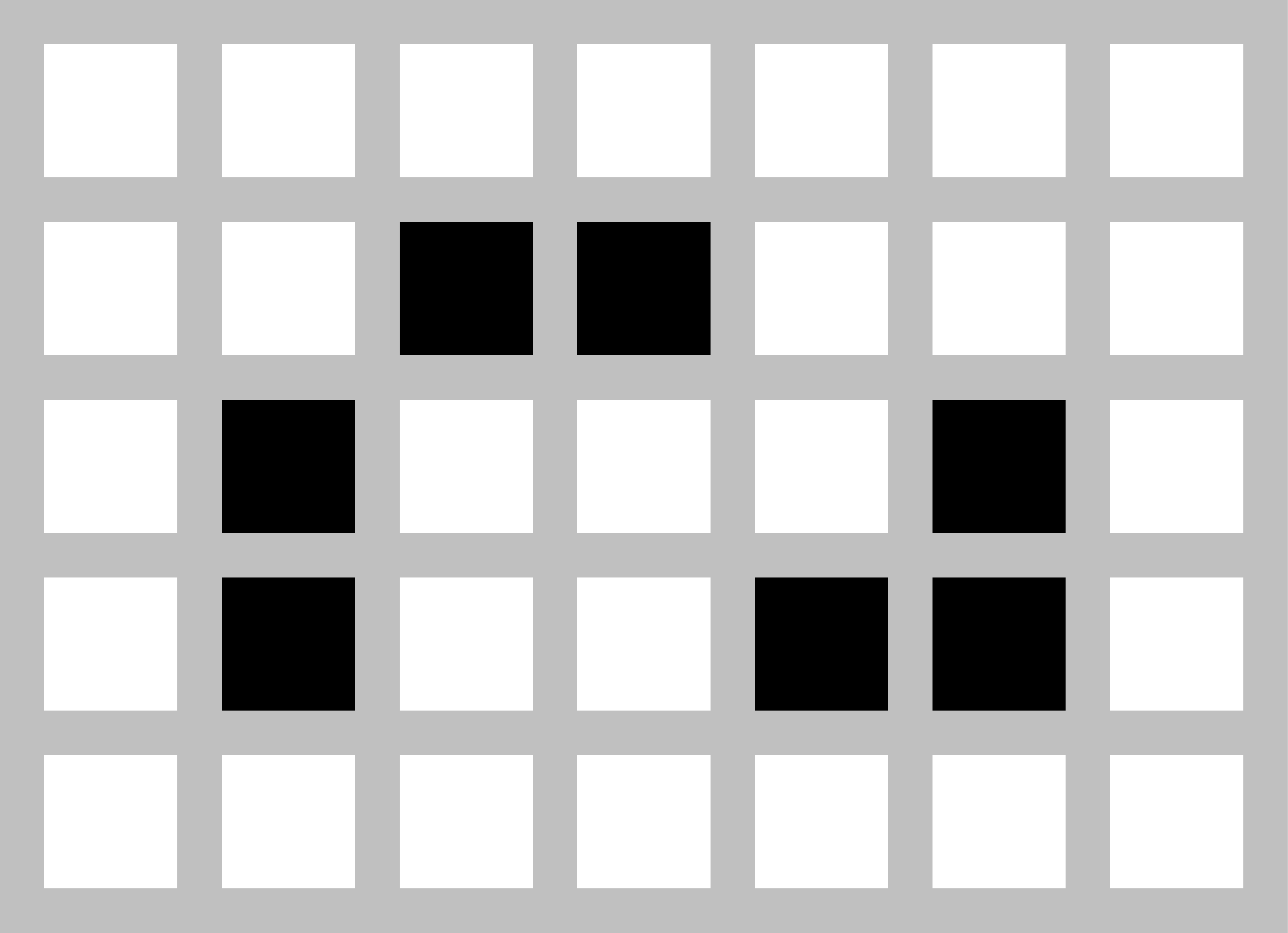} & $4$ & $1.745 \times 10^{-2}$ \\
		\noalign{\smallskip}\hline
		\end{tabular} \ \ & \ \ \begin{tabular}[t]{lccc}
		\hline\noalign{\smallskip}
		{\bf \#} \ & \ {\bf Pattern} \ & \ {\bf Period} \ & \ {\bf Rel. Freq.} \\
		\noalign{\smallskip}\hline\noalign{\smallskip}
		$\mathbf{11}$ & \includegraphics[scale=0.02]{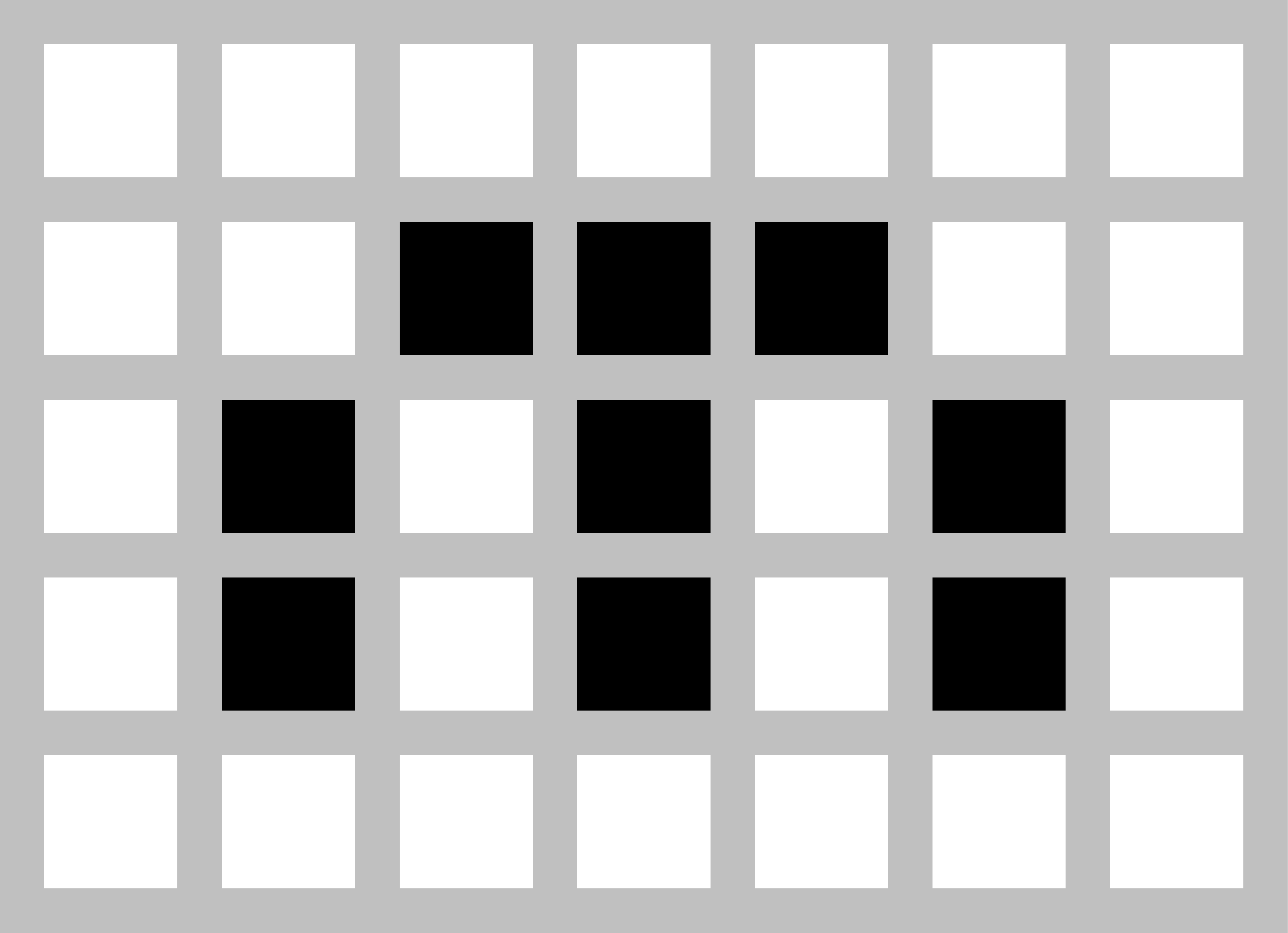} & $2$ & $9.330 \times 10^{-3}$ \\
		$\mathbf{12}$ & \includegraphics[scale=0.02]{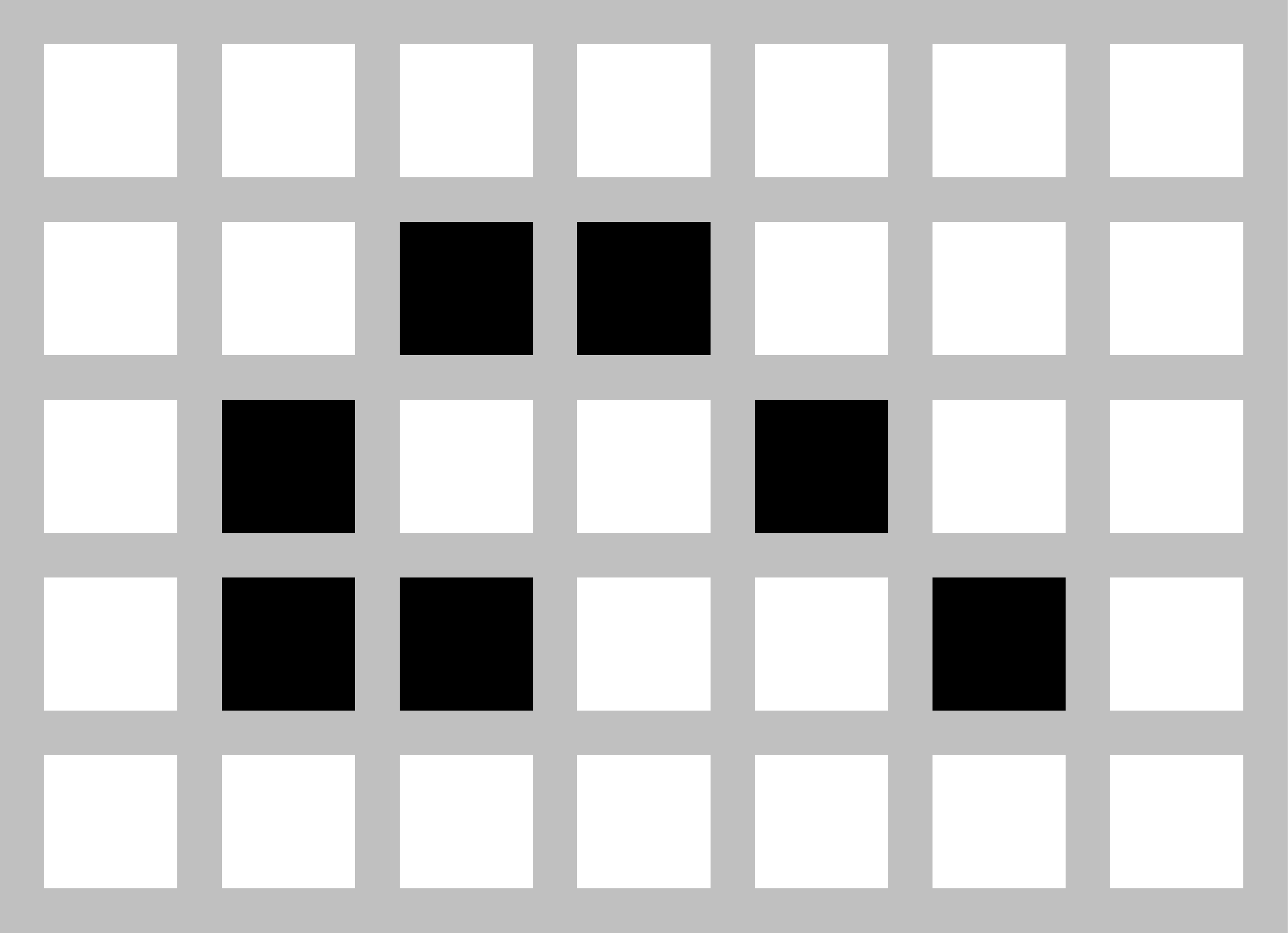} & $2$ & $7.766 \times 10^{-3}$ \\
		$\mathbf{13}$ & \includegraphics[scale=0.02]{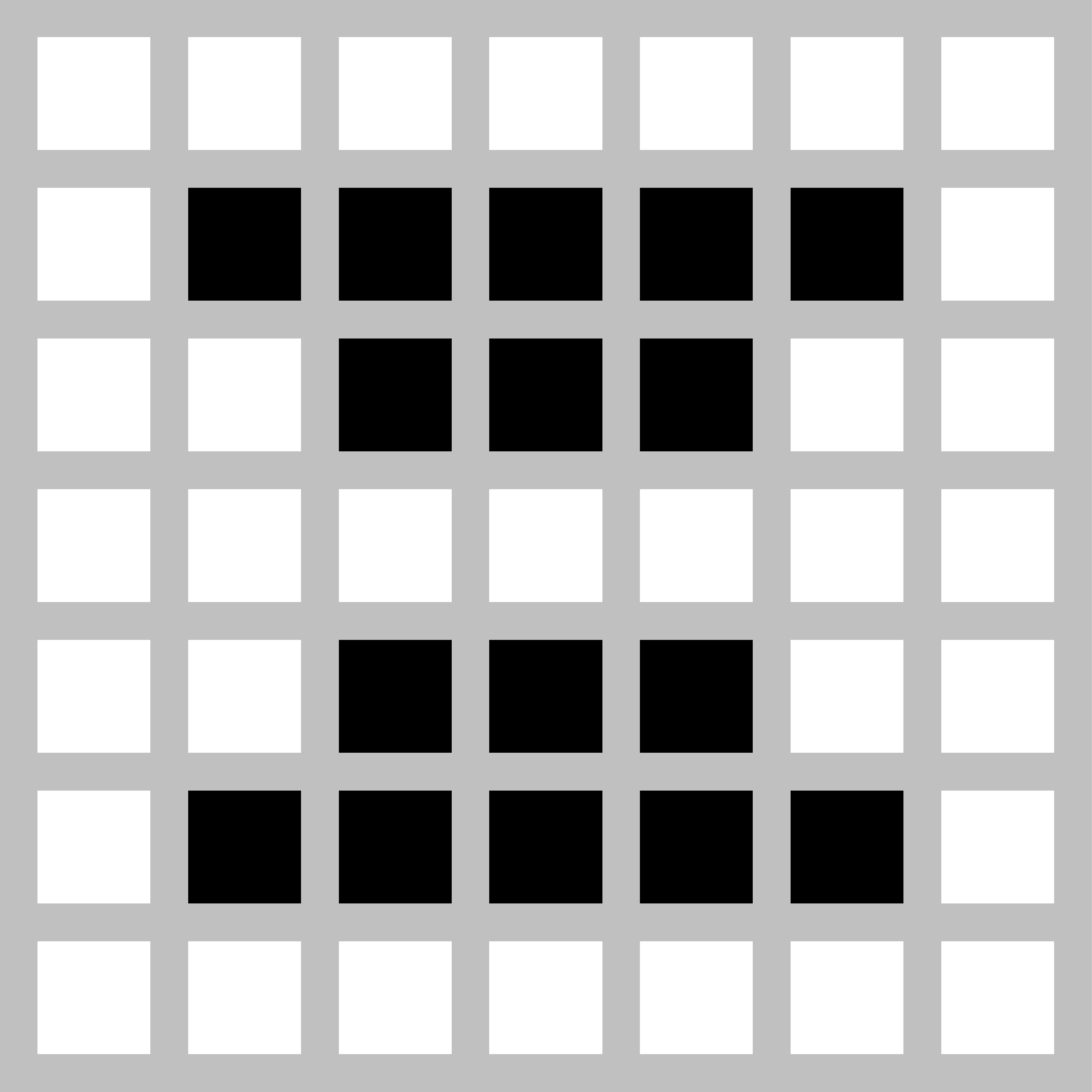} & $10$ & $5.188 \times 10^{-3}$ \\
		$\mathbf{14}$ & \includegraphics[scale=0.02]{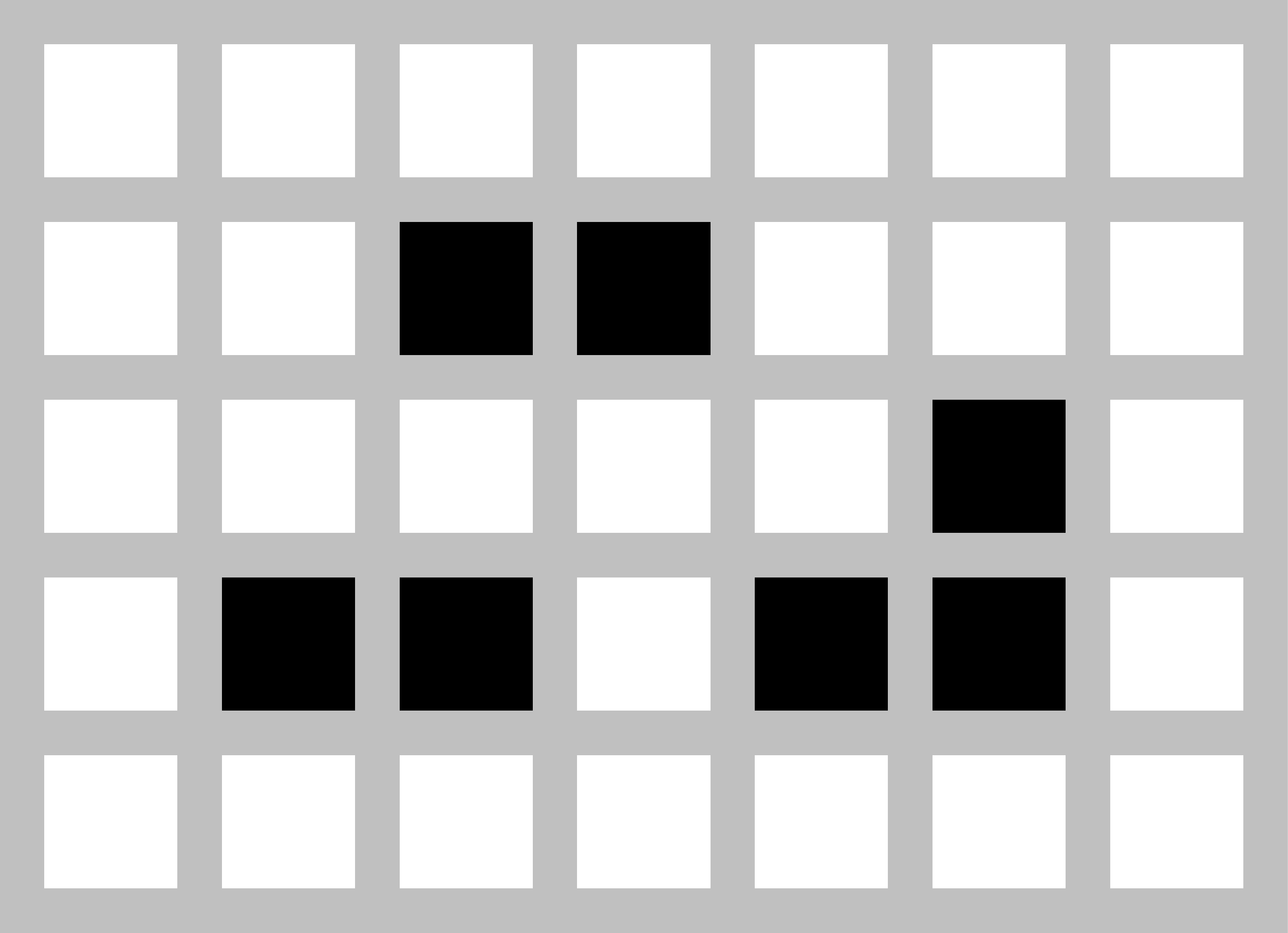} & $2$ & $2.042 \times 10^{-3}$ \\
		$\mathbf{15}$ & \includegraphics[scale=0.02]{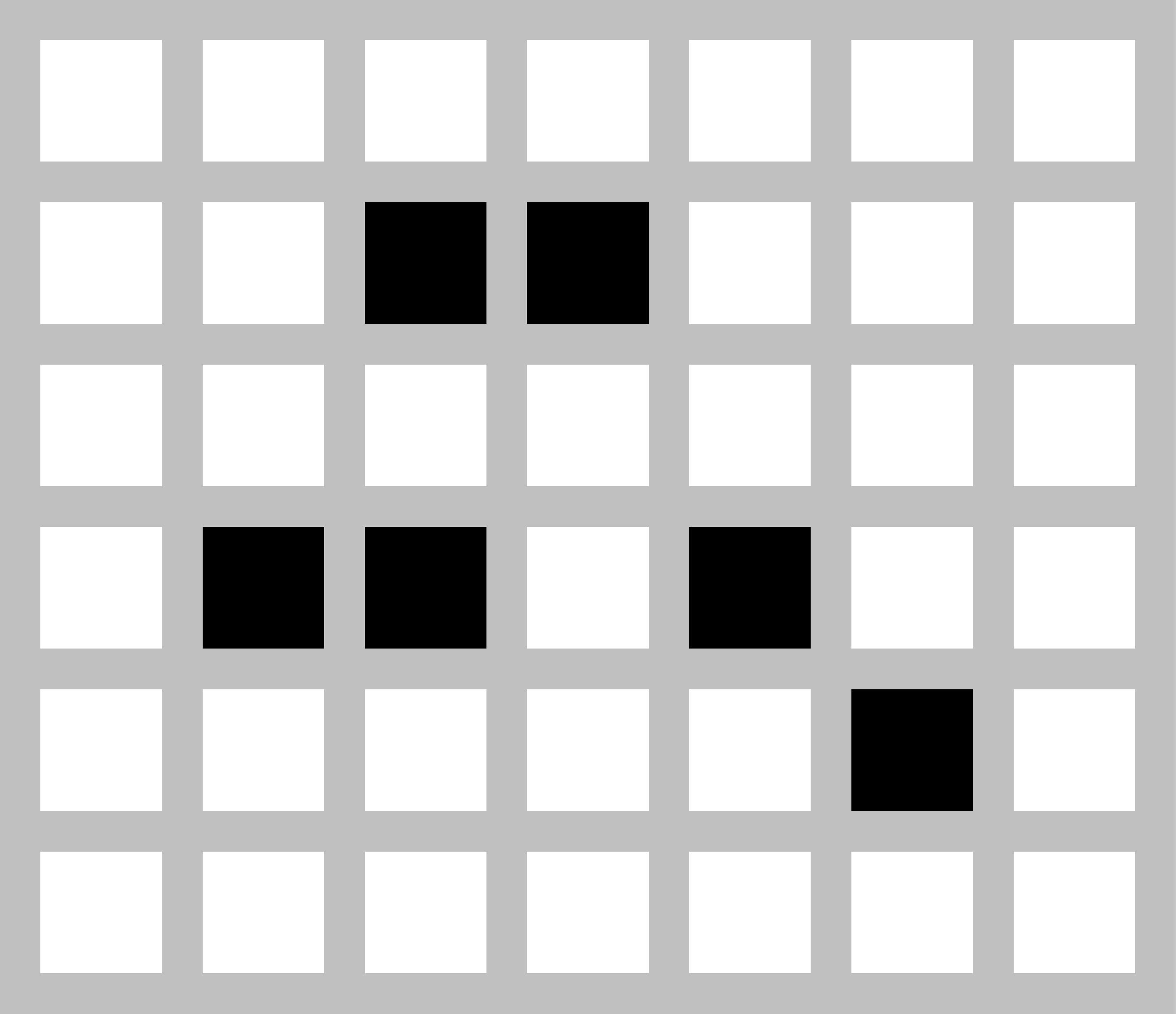} & $2$ & $1.633 \times 10^{-3}$ \\
		$\mathbf{16}$ & \includegraphics[scale=0.02]{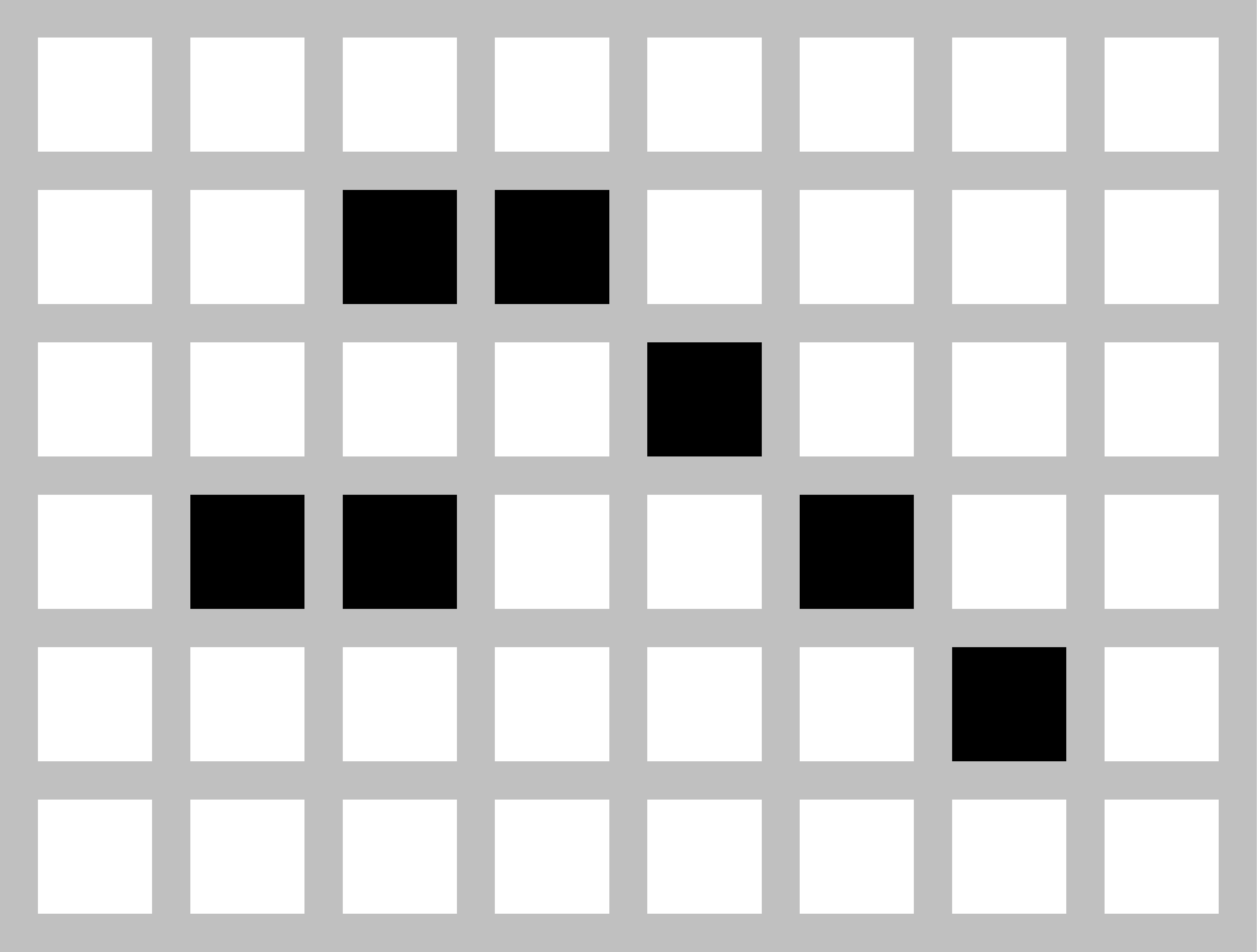} & $2$ & $1.559 \times 10^{-3}$ \\
		$\mathbf{17}$ & \includegraphics[scale=0.02]{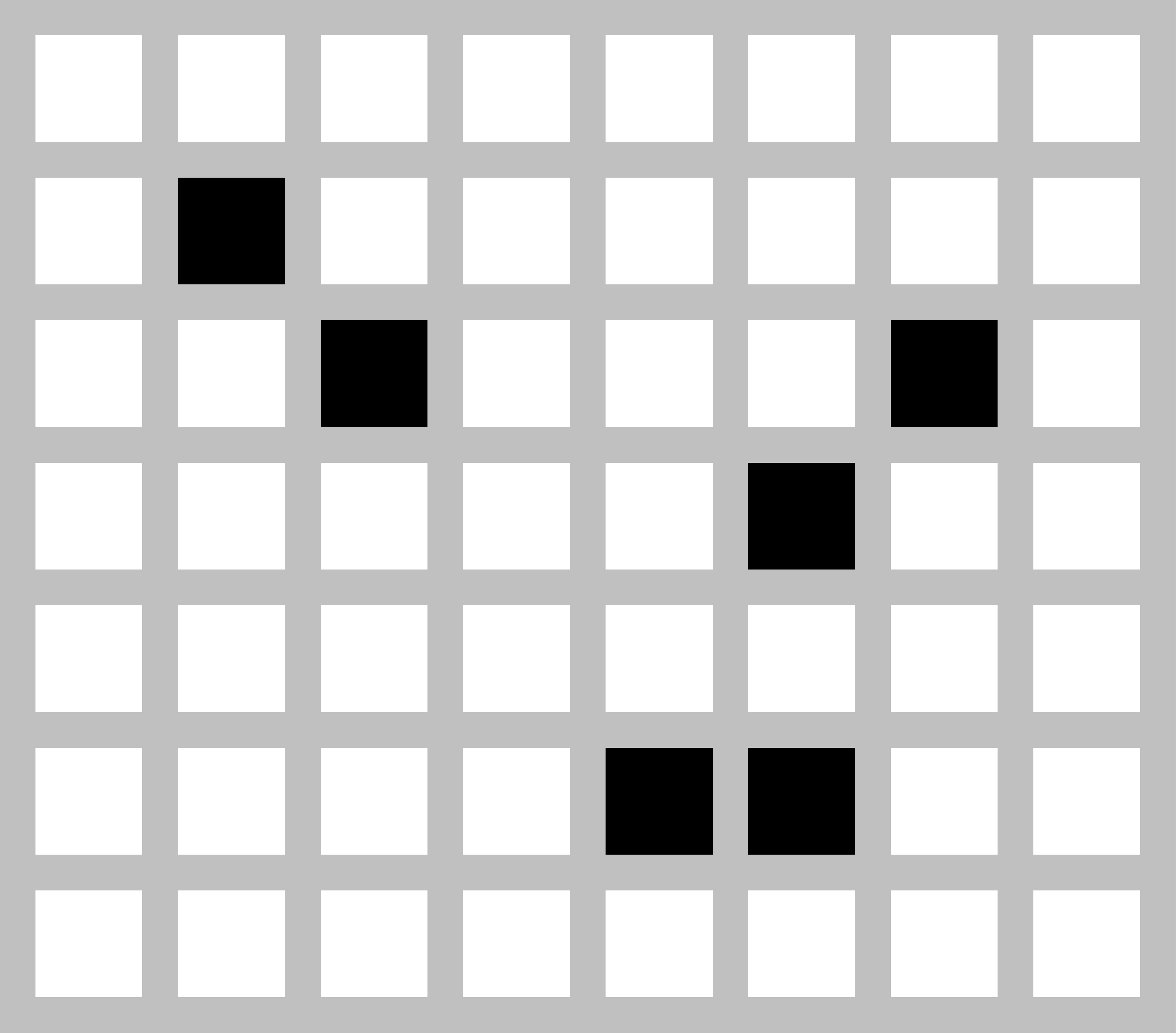} & $14$ & $1.182 \times 10^{-3}$ \\
		$\mathbf{18}$ & \includegraphics[scale=0.02]{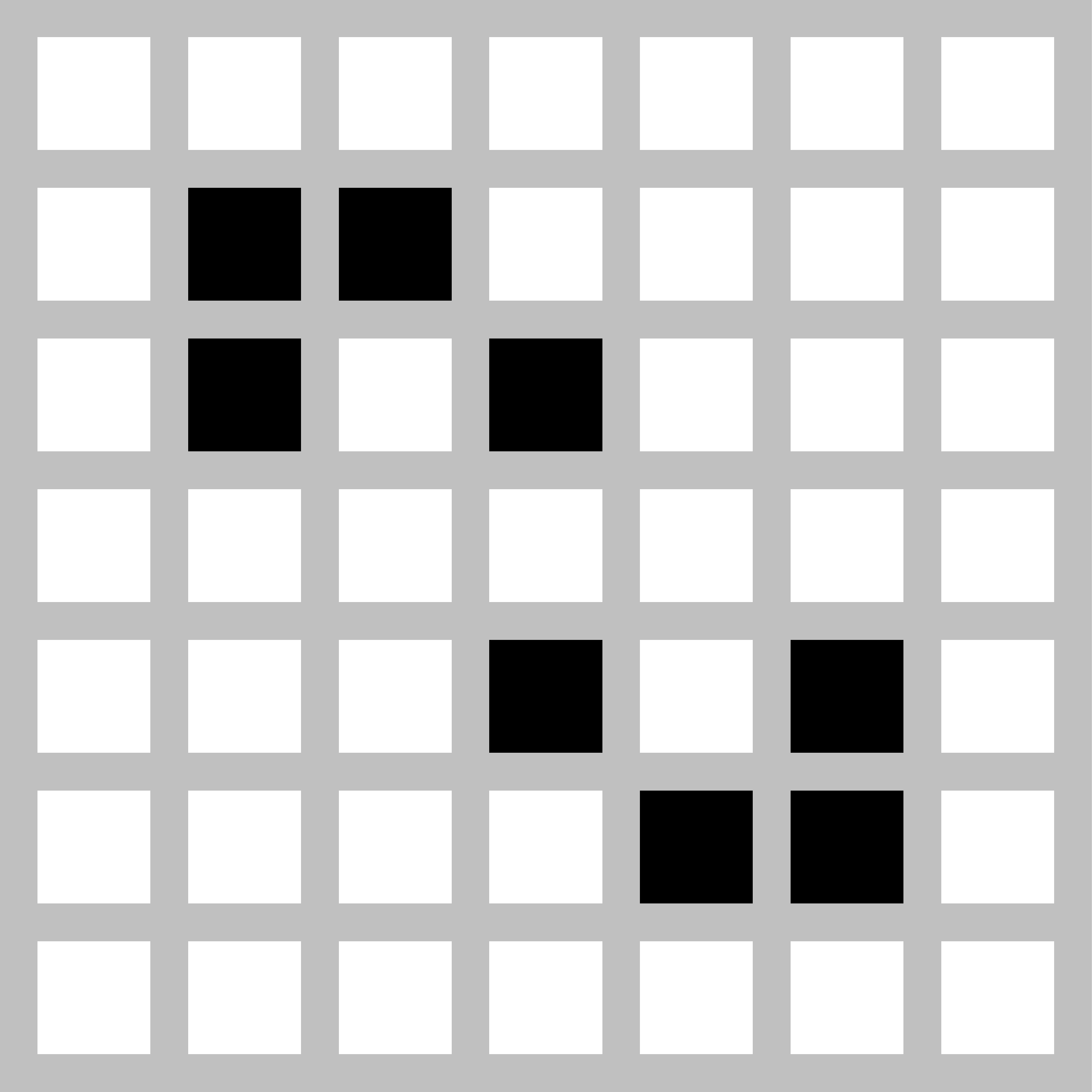} & $2$ & $9.618 \times 10^{-4}$ \\
		$\mathbf{19}$ & \includegraphics[scale=0.02]{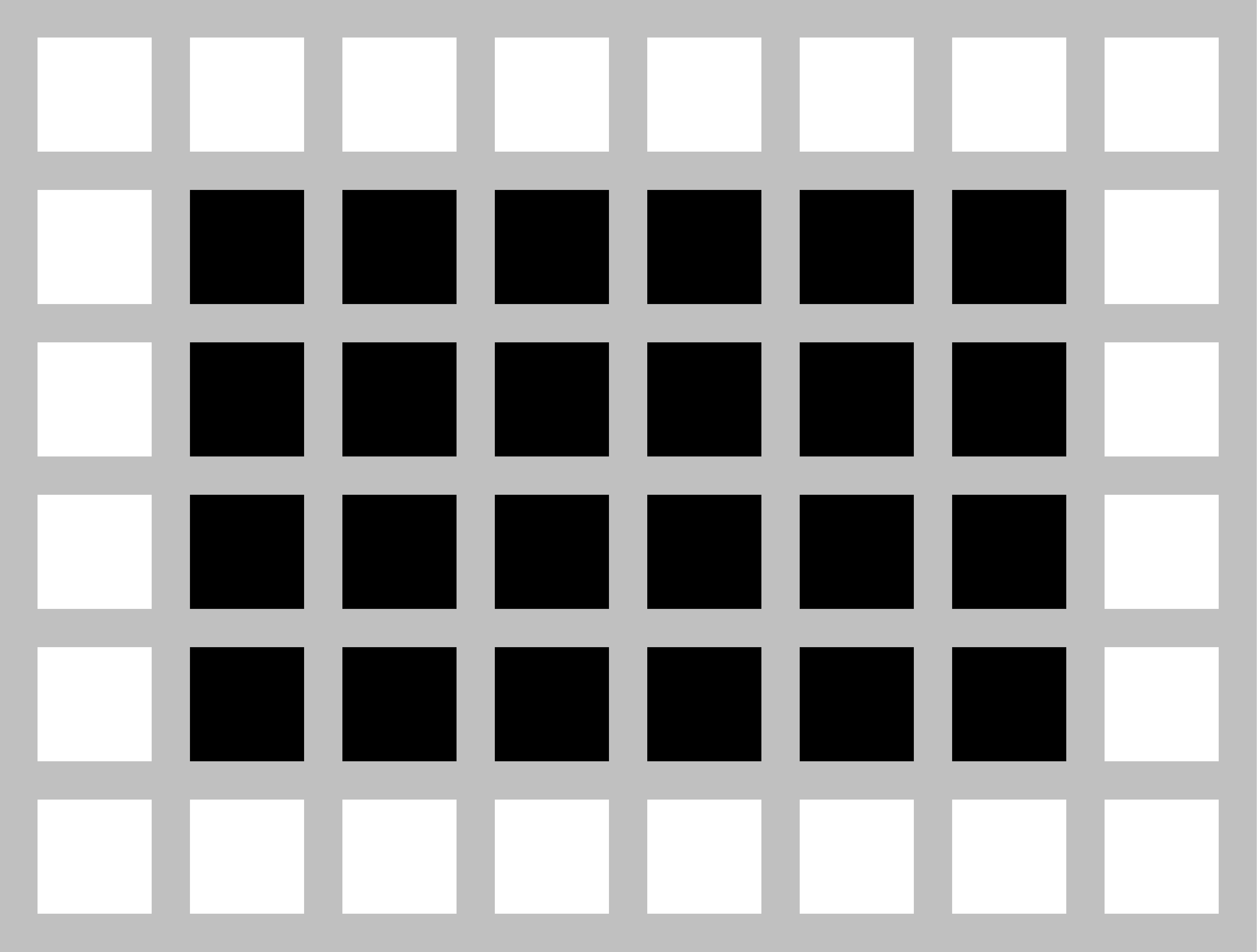} & $6$ & $9.539 \times 10^{-4}$ \\
		$\mathbf{20}$ & \includegraphics[scale=0.02]{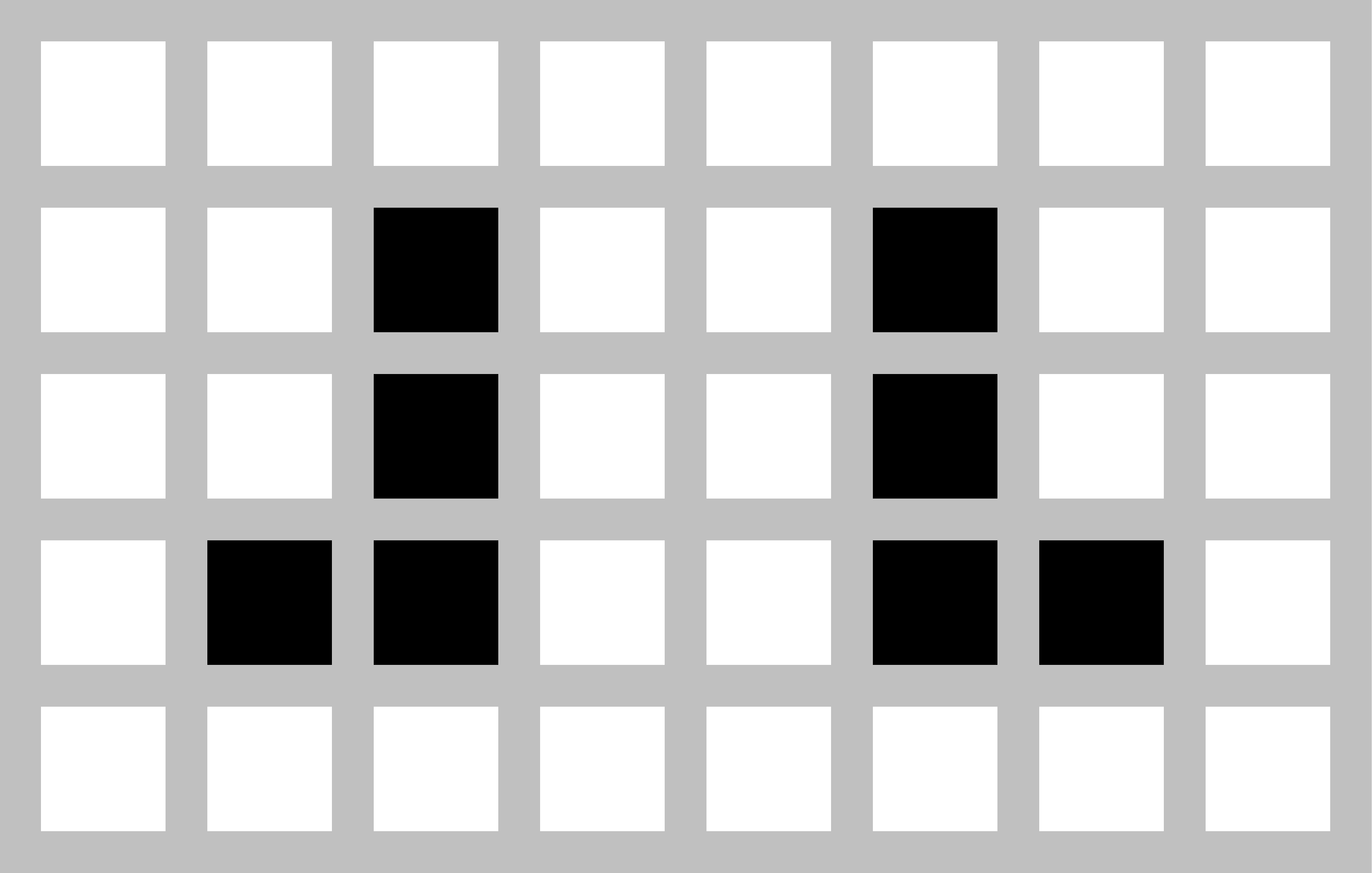} & $22$ & $4.423 \times 10^{-4}$ \\
		\noalign{\smallskip}\hline
		\end{tabular}
		\end{tabular}
		\end{table}

		The only other particularly notable patterns that have been known to appear spontaneously from random soup are a fairly common $c/8$ diagonal glider and a related $c/8$ wickstretcher. Although the glider itself was known of by no later than 1993, the wickstretcher, which works simply by placing the glider next to a diagonal wick, was not found until June 2009. In fact, the $c/8$ wickstretcher and its slight modifications are currently the only known infinitely-growing patterns in 2x2.

  \section{Oscillators and Spaceships}\label{sec:patterns}
	
		Beyond the standard oscillators that appear naturally, many oscillators of period 2 through 4 have been constructed by hand and computer search by Alan Hensel, Dean Hickerson, and Lewis Patterson over the years. In 1993, Hensel discovered the first known oscillator with odd period, the small period 5 pattern shown in Figure~\ref{fig:2x2_odd_period}. David Bell soon thereafter noticed that it can be combined with itself and extended in a variety of different ways, creating the first known extensible oscillator. Hensel discovered the first known period 3 oscillator in 1994, and Hickerson found the first known period 11 and 17 oscillators later that same year. To date, the only odd periods for which there are known oscillators are 3, 5, 11, and 17, which shows that odd-period oscillators seem to be much more difficult to construct in this rule than even-period oscillators. We will see later that there is an infinite family of even periods that are easily realized by simple block oscillators. The least period for which there is no known oscillator is 7, while the least even period for which there is no known oscillator is 18.

		\begin{figure}[ht]
		\center
		\includegraphics[width=0.9\textwidth]{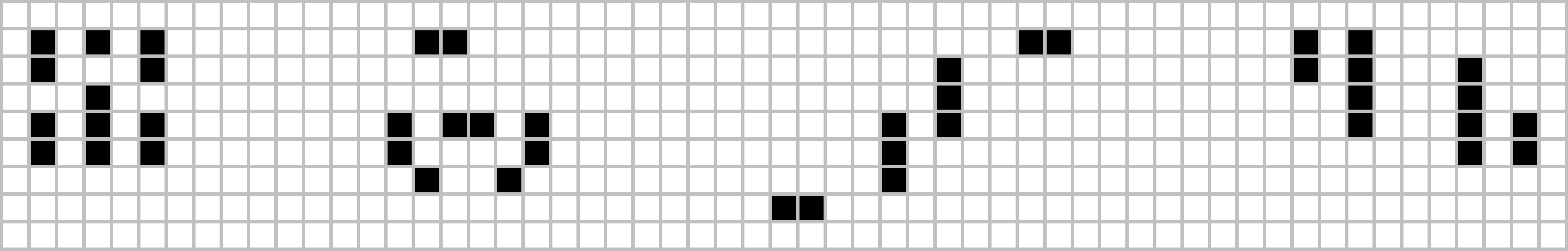}
		\caption{The first known oscillators of period 3, 5, 11, and 17, respectively}
		\label{fig:2x2_odd_period}
		\end{figure}
						
			The $c/8$ diagonal glider is the only spaceship that has ever been seen to occur naturally in 2x2, though several others have been found via computer search. In February 1994, Hensel found the first such spaceship -- the $c/3$ orthogonal pattern shown in Figure~\ref{fig:2x2_c3spaceship}. He also found the next three spaceships, which were orthogonal with speed $c/3$, $c/3$, and $c/4$. As a result of the relative ease of finding slow spaceships in this rule, it was initially suspected that it does not contain spaceships that travel as fast as their Life counterparts ($c/2$ orthogonally and $c/4$ diagonally). However, David Eppstein found a $c/2$ orthogonal spaceship in October 1998 using his \verb|gfind| program \cite{E02}, and several others have been found since then (see Appendix I). Hickerson found the first $c/4$ diagonal spaceship in 1999. Eppstein has since found a $c/3$ diagonal spaceship, which shows that it is possible for spaceships in 2x2 to travel \emph{faster} than spaceships in Life.

\begin{figure}[ht]
\begin{minipage}[t]{0.485\textwidth}
\centering
		  \includegraphics[width=1.1in]{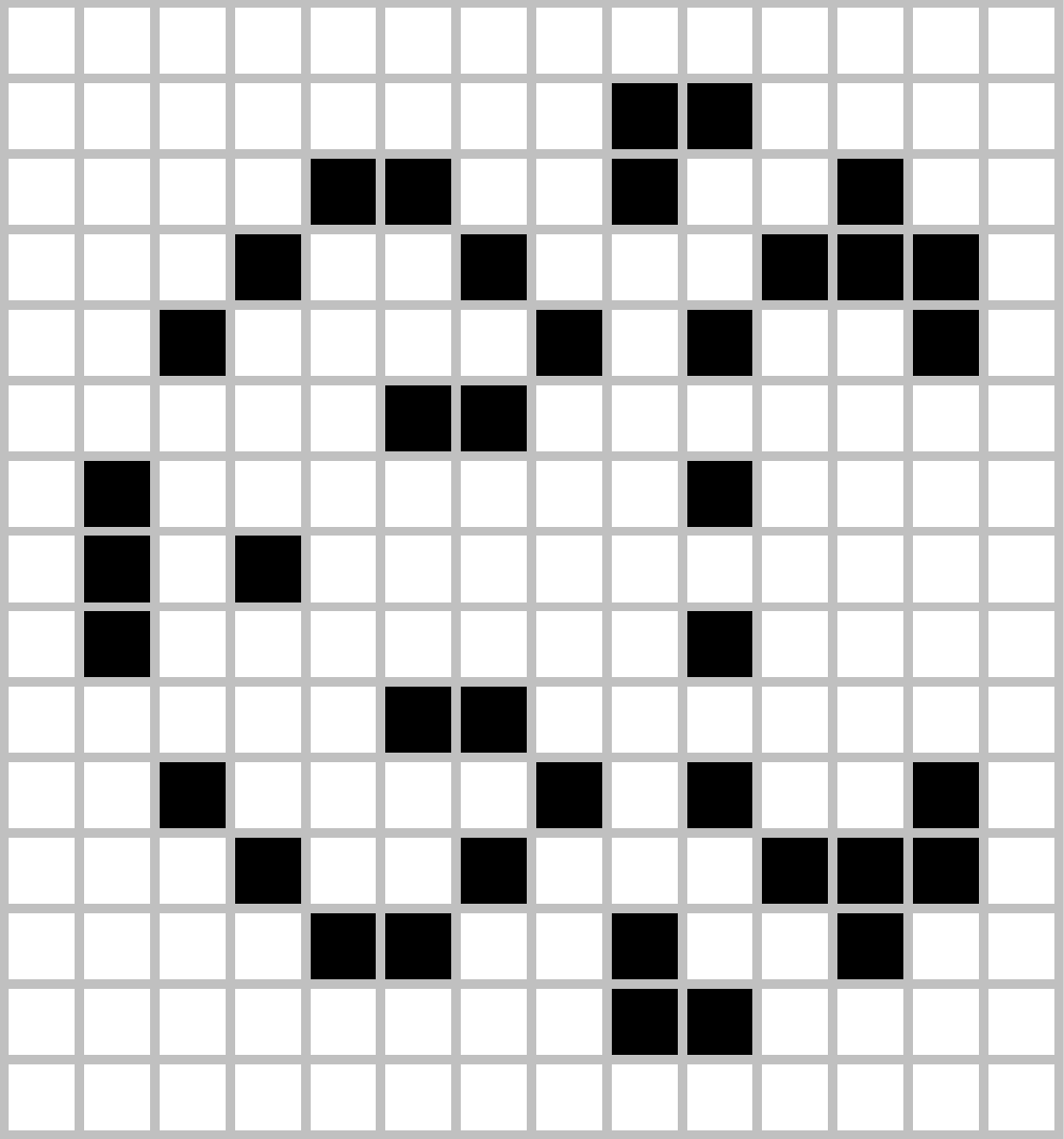}
		  \caption{The first discovered spaceship other than the $c/8$ glider}\label{fig:2x2_c3spaceship} 
\end{minipage}
\hspace{0.03\textwidth}
\begin{minipage}[t]{0.485\textwidth}
\centering
		  \includegraphics[width=1.2in]{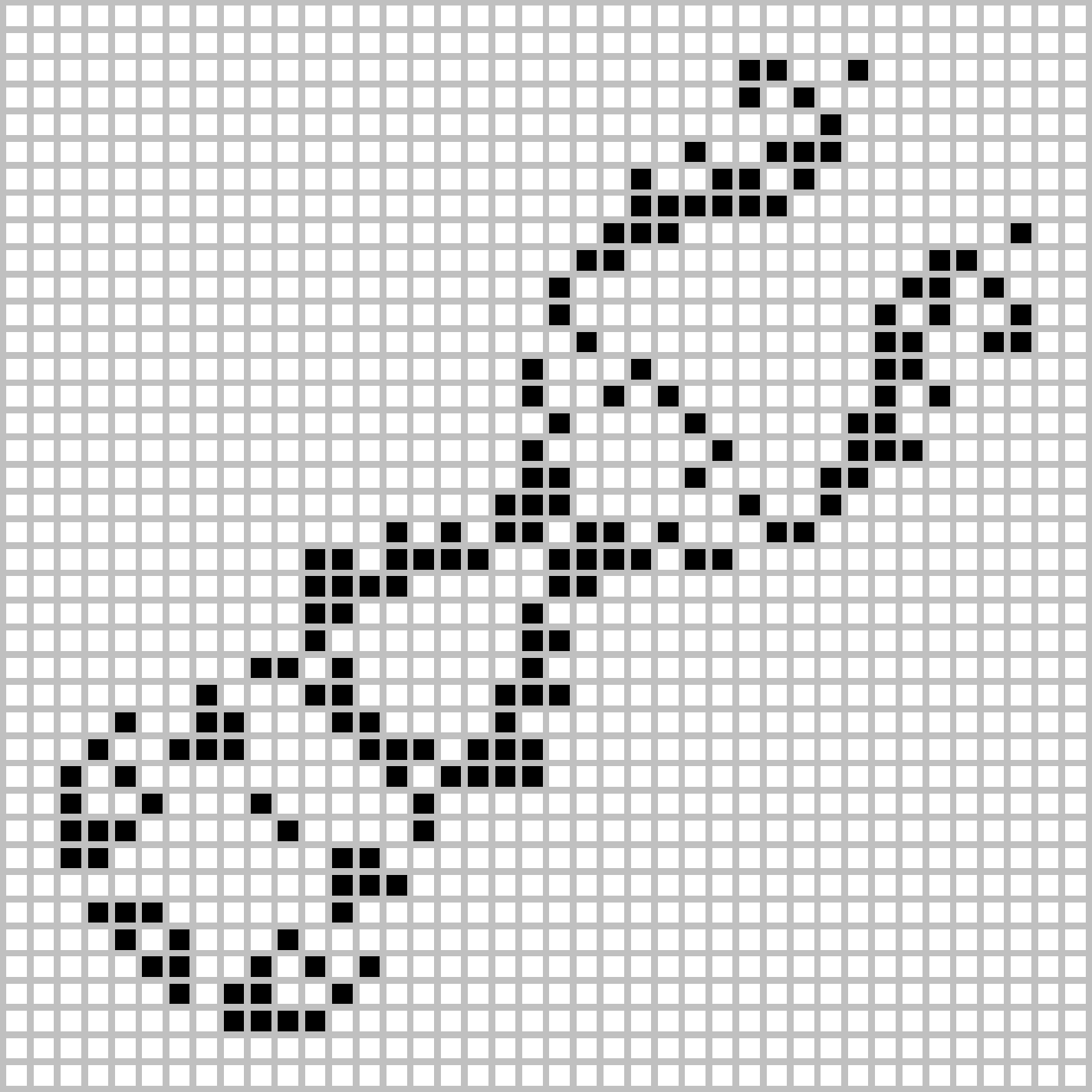}
		  \caption{The $c/3$ diagonal spaceship}\label{fig:2x2_c3dspaceship}
\end{minipage}
\end{figure}

With the discovery of the $c/3$ diagonal spaceship comes the natural question of what the speed limits in 2x2 are for spaceships. It was proved by Conway in the early 1970s that spaceships in Life can not travel faster than $c/2$ orthogonally or $c/4$ diagonally. Using similar methods it is possible to prove that spaceships in 2x2 can not travel faster than $c/2$ orthogonally or $c/3$ diagonally.

\begin{theorem}\label{thm:speedlimit}
	Spaceships in 2x2 can not travel faster than $c/3$ diagonally or $c/2$ orthogonally.
\end{theorem}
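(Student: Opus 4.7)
The plan is to adapt Conway's classical light-cone arguments for Life's speed limits to the B36/S125 rule, proving the orthogonal and diagonal bounds separately.

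For the orthogonal $c/2$ bound, define $L(t) = \min\{x : (x, y) \text{ alive at time } t\}$, the leftmost column containing a live cell. The basic light-cone observation is $L(t+1) \geq L(t) - 1$: for $(L(t) - 1, y)$ to be born at time $t+1$, its only possibly live neighbors at time $t$ are $(L(t), y-1), (L(t), y), (L(t), y+1)$, so at most three live neighbors are available; the B36 rule then forces all three to be alive, the $6$-neighbor case being unreachable. One then strengthens this to show that a spaceship cannot have $L$ decrease on two consecutive generations: back-propagating the birth requirements, two consecutive decreases of $L$ force five consecutive live cells in column $L(t)$ at time $t$ (flanked by dead cells), and $k$ consecutive decreases force $2k+1$ consecutive live cells. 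Combining this cascade with the periodicity and finite vertical extent of the spaceship yields a cyclic pigeonhole bound: $L$ decreases on at most $\lfloor p/2 \rfloor$ of the $p$ generations per period, so the leftward displacement satisfies $d \leq p/2$ and the speed is at most $c/2$.

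For the diagonal $c/3$ bound, use $D(t) = \min\{x+y : (x,y) \text{ alive at time } t\}$. The same reasoning gives $D(t+1) \geq D(t) - 1$, since the only neighbors of a sum-$(D(t)-1)$ cell not forced dead are $(a, b+1), (a+1, b), (a+1, b+1)$, all of which the birth rule forces alive. This alone yields only a $c/2$ diagonal bound (a $(-1,-1)$ diagonal shift changes $D$ by $-2$); sharpening to $c/3$ requires showing $D$ cannot decrease by $1$ on three generations in a row. Back-propagating from a hypothetical $D(t+3) = D(t) - 3$ forces a staircase configuration at time $t$ consisting of four consecutive live cells on the anti-diagonal $x+y = D(t)$ and three consecutive on $x+y = D(t)+1$, with more consecutive $D$-decreases enlarging the staircase. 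As above, combining with periodicity and finiteness yields the cyclic bound that $D$ decreases on at most $\lfloor 2p/3 \rfloor$ of the $p$ generations per period; since a diagonal spaceship with displacement $(-d, -d)$ has $D$ decrease by $2d$ per period, we conclude $2d \leq 2p/3$ and the diagonal speed is at most $c/3$.

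The main obstacle is the ``no two (resp.\ three) consecutive decreases'' step. B36/S125's extra B$6$ and S$5$ cases must be ruled out at every level of the back-propagation cascade: the $6$-neighbor birth is automatically unreachable at the extreme edge because only three live neighbors are available, and one must track this condition carefully as the cascade propagates inward, also ensuring that the S$5$ survival case does not accidentally preserve auxiliary cells that could rescue the hypothetical fast spaceship. The diagonal case is the more delicate of the two because the back-propagated staircase is two-dimensional, and one must verify that every branch of the back-propagation truly forces unbounded growth rather than self-stabilizing into a finite configuration compatible with being a spaceship.
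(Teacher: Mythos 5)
Your overall framing---back-propagating birth requirements from the extremal front of a hypothetical too-fast spaceship---is the same as the paper's, but the central step is missing and the mechanism you sketch for it cannot supply a contradiction. In both halves, your cascade only shows that $k$ consecutive advances of the front force a specific \emph{finite} live configuration at the earlier time ($2k+1$ consecutive cells in a column, or a fixed-size staircase on two anti-diagonals). That is not a contradiction: a sufficiently tall spaceship can perfectly well contain five consecutive live cells or a small staircase, so the lemmas ``$L$ cannot decrease twice in a row'' and ``$D$ cannot decrease three times in a row'' are never established, and the cyclic pigeonhole bounds you build on them do not follow. (Your closing paragraph essentially concedes this; but without that step there is no proof.) The missing ingredient is the \emph{survival} half of the rule. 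In the paper's diagonal argument, the back-propagation forces a particular cell (W in Figure~\ref{fig:2x2_letter_grid}) to be alive in two consecutive generations while four of its neighbours are known to be alive; since S125 excludes survival with exactly $4$ neighbours, W must have a fifth live neighbour O, and then the birth/survival requirements force O to be alive in generation $0$ with at least $6$ live neighbours, which S125 also excludes---an actual contradiction, obtained within three generations and with no pigeonhole over a period. Nothing in your sketch ever invokes S125 to kill a forced cell, which is why every branch of your back-propagation ``self-stabilizes'' into a configuration compatible with a spaceship.

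The orthogonal half also diverges from the paper in a way worth noting: rather than bounding runs of decreases of the leftmost column, the paper places the ship on or below a line of slope $-1/2$ and checks that each cell beyond that line has at most two neighbours that could possibly be alive in generation $1$, so no such cell can be born in generation $2$. This two-generation argument immediately gives the $c/2$ bound for orthogonal motion and, as a bonus, the general bound $\max\{a,b\}c/(2a+b)$ for $(a,b)$-movers (hence also an alternative route to $c/3$ diagonal and $2c/5$ for knightships). If you want to salvage your column-based approach, you would need to prove the ``no two consecutive decreases'' claim outright, and the natural way to do that is precisely to bring in the survival conditions as above rather than to keep enlarging the forced live region.
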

\begin{proof}
	The result for diagonal spaceships is proved first. Assume the contrary; assume that there exists a spaceship that travels diagonally at a speed faster than $c/3$. It then must contain a phase such that if that phase is on or to the left of the diagonal line given by the cells A, B, C, D, E, and F in Figure~\ref{fig:2x2_letter_grid} in generation $0$, then the spaceship is \emph{not} entirely on or to the left of the diagonal line defined by the cells V and Y in generation $3$. We can thus assume without loss of generality that the cell X is born in generation $3$. Cells V, W, and Y then must be alive in generation $2$. It is clear that V and Y can not already be alive in generation $1$, so they must have three alive neighbours each in generation $1$. It follows that each of C, D, U, W, and Z must be alive in generation $1$.
		\begin{figure}[ht]
		\center
	    \includegraphics[width=0.25\textwidth]{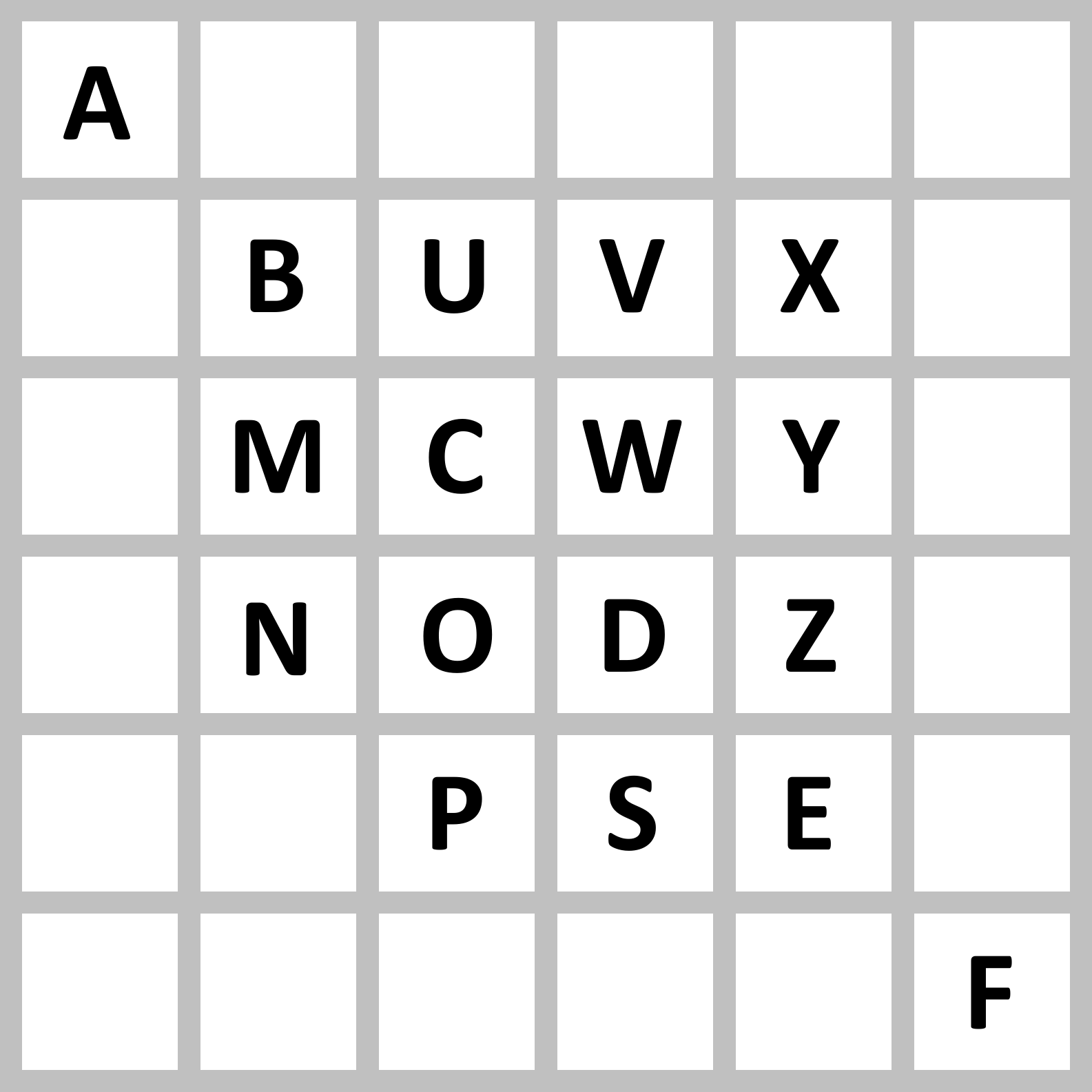}
		\caption{If a spaceship is on or to the left of the line defined by A, B, C, D, E, and F in generation $0$, supposing X is alive in generation $3$ gives a contradiction}
		\label{fig:2x2_letter_grid}
		\end{figure}
		
	Now observe that, because cell W is alive in generation $1$ and generation $2$, and we know that its four neighbours C, D, U and Z are alive in generation $1$, W must have a fifth alive neighbour in generation $1$: O. We will now derive a contradiction by showing that it is impossible for O to be alive in generation $1$.
	
	For U, W, and Z to be alive in generation $1$, it is necessary that B, C, D, E, M, O, and S all be alive in generation $0$. For C and D to survive to generation $1$, N and P must also be alive in generation $0$; otherwise C and D have only four alive neighbours. In particular, this implies that C, D, M, N, P, and S must all be alive in generation $0$, which implies that O has at least $6$ alive neighbours. Since O also must be alive in generation $0$, it can not be alive in generation $1$ -- a contradiction. It follows that the spaceship must be on or to the left of the diagonal line defined by the cells V and Y in generation $3$, which shows that it can not travel faster than $c/3$.\footnote{In 1994, Dean Hickerson used a similar method to prove that spaceships in the rule B3/S135 can not travel faster than $c/3$ diagonally or $2c/3$ orthogonally.}
	
		\begin{figure}[ht]
		\center
	    \includegraphics[width=0.25\textwidth]{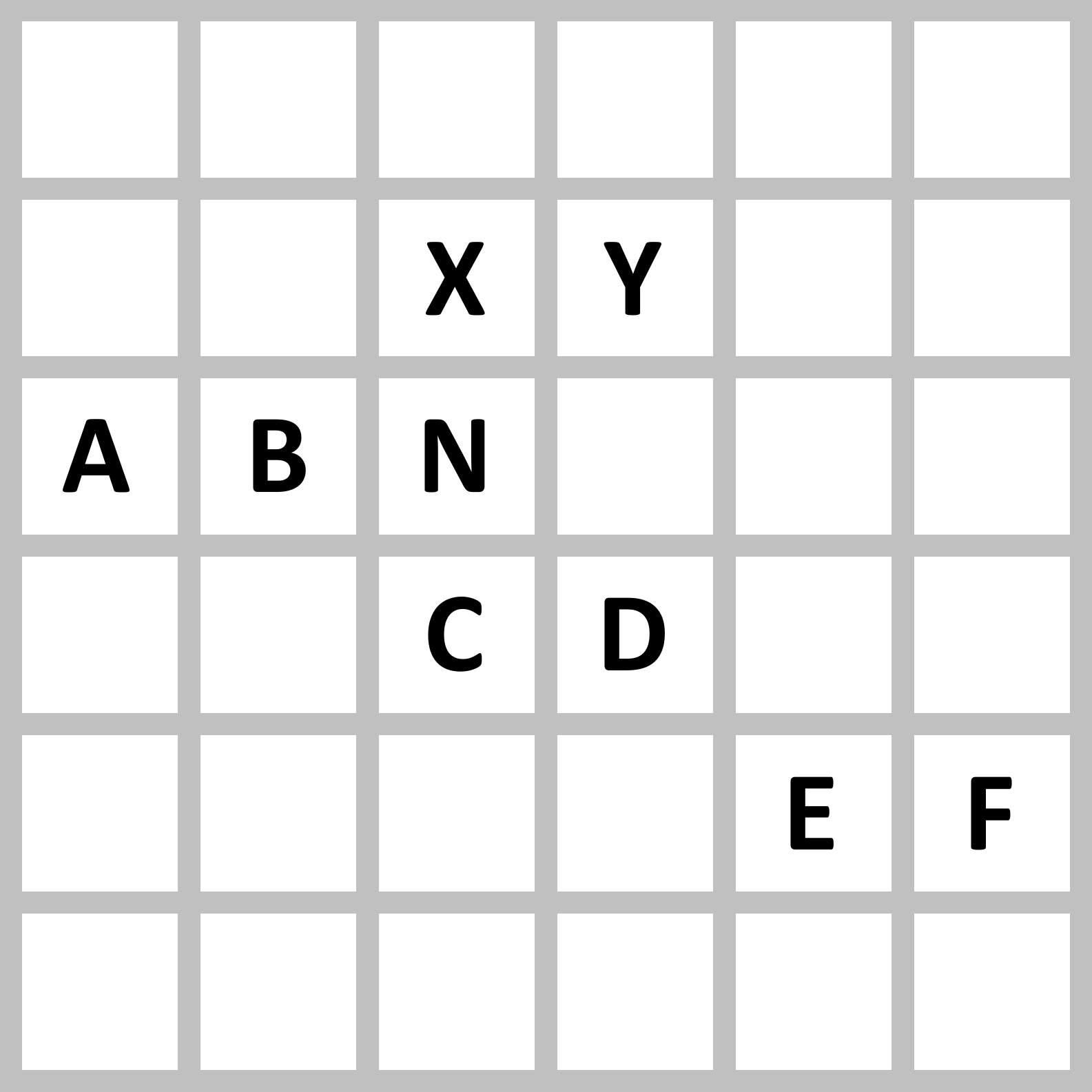}
		\caption{If a spaceship is on or below the line defined by A, B, C, D, E, and F in generation $0$, supposing X or Y is alive in generation $2$ gives a contradiction}
		\label{fig:2x2_letter_grid2}
		\end{figure}

	To see the result for orthogonal spaceships, we perform a similar argument, but instead assume that the spaceship is on or below the line of slope $-1/2$ defined by cells A, B, C, D, E, and F in Figure~\ref{fig:2x2_letter_grid2} in generation $0$. Assuming that orthogonal spaceships can travel faster than $c/2$ orthogonally, it must be possible for either X or Y to be alive in generation $2$. However, the only one of Y's neighbours that can possibly be alive in generation $1$ is N, so Y can not be alive in generation $2$. Similarly, the only of X's neighbours that can be alive in generation $1$ are B and N, so X can not possibly be alive in generation $2$. This shows that neither X nor Y can be alive in generation $2$, so spaceship speeds are limited to $c/2$ orthogonally. \qed
\end{proof}

Because a $c/3$ diagonal spaceship is already known, as are several $c/2$ orthogonal spaceships, the speed limit question for 2x2 has been answered. Additionally, the proof of Theorem~\ref{thm:speedlimit} only relies on a couple of properties of 2x2 and so the upper bounds apply to various other rules as well. In fact, the upper bounds apply to any of the $2^{14} = 16384$ rules in which birth occurs when a cell has $3$ neighbours but does not occur for $2$ or fewer neighbours.

Finally, it is worth noting that the second half of the proof of Theorem~\ref{thm:speedlimit} implies not only the $c/2$ speed limit result for orthogonal spaceships, but also that any spaceship that travels $a$ cells vertically for every $b$ cells horizontally (if such a spaceship exists) can not travel faster than $\max\{a,b\}c/(2a+b)$.\footnote{David Eppstein was aware of this speed upper bound for rules in which birth occurs when a cell has three live neighbours back in 1999 and he used it as an alternative method of proving the $c/3$ upper bound for diagonal spaceships.} In particular, this captures the $c/3$ diagonal speed limit and says that knightships (which travel two cells vertically for every one cell horizontally), if they exist, can not travel faster than $2c/5$.

  \section{As a Block Cellular Automaton}\label{sec:block}

		One of the most interesting properties of 2x2 is that it emulates a simpler cellular automaton that acts on $2 \times 2$ blocks of cells. A bit more specifically, it emulates the block cellular automaton that makes use of the Margolus neighbourhood and evolves according to the six rules given by Figure~\ref{fig:block_evolve}.

		\begin{figure}[ht]
		\center
		\includegraphics[width=2in]{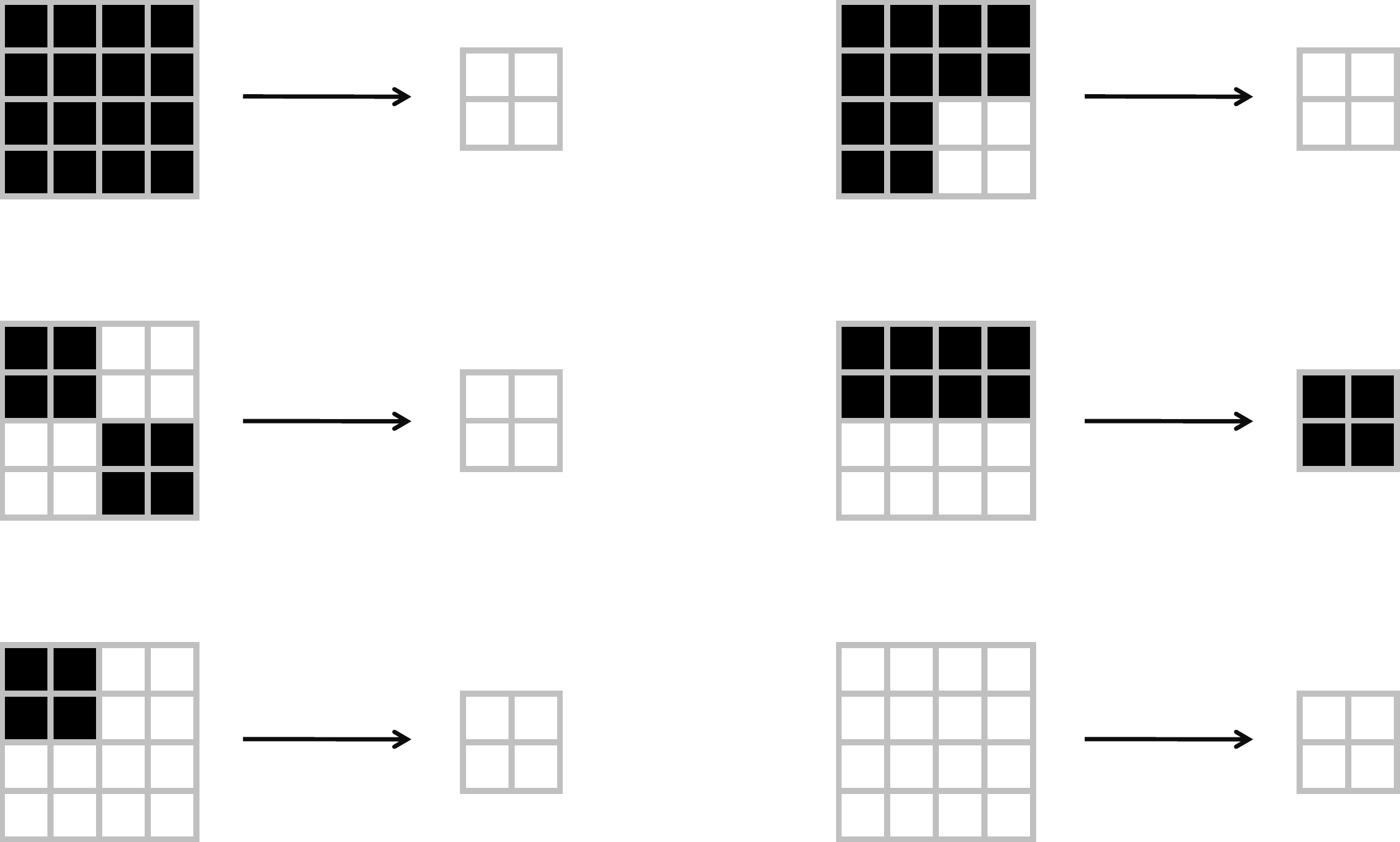}
		\caption{The block cellular automaton emulated by 2x2}
		\label{fig:block_evolve}
		\end{figure}
		
		By saying that 2x2 emulates a Margolus block cellular automaton, we mean that the resulting block appears at the center of the original four blocks. Thus, patterns that are originally made up of $2 \times 2$ blocks will forever be made up of $2 \times 2$ blocks, but the block partition will be offset diagonally by one cell in the odd generations from the even generations. Of course, 2x2 is not the only Life-like cellular automaton that emulates a Margolus block cellular automaton; such a cellular automaton is emulated if and only if the rule satisfies the following four conditions:
		\begin{itemize}
			\item Birth occurs for $3$ neighbours if and only if survival occurs for $5$ neighbours.
			\item Birth occurs for $4$ neighbours if and only if survival occurs for $4$ neighbours.
			\item Birth occurs for $5$ neighbours if and only if survival occurs for $6$ neighbours if and only if survival occurs for $7$ neighbours.
			\item Birth occurs for $1$ neighbour if and only if birth occurs for $2$ neighbours if and only if survival occurs for $3$ neighbours.
		\end{itemize}
	
		More succinctly, a Life-like cellular automaton emulates a Margolus block cellular automaton if and only if, in its rulestring, B3 = S5, B4 = S4, B5 = S6 = S7, and B1 = B2 = S3. 2x2 can be seen to satisfy these conditions because 4 is neither a birth condition nor a survival condition, 5 is not a birth condition and 6 and 7 are not survival conditions, 3 is a birth condition and 5 is a survival condition, and 3 is not a survival condition and 1 and 2 are not birth conditions. There are $2^{12} = 4096$ Life-like cellular automata that emulate $2^6 = 64$ different Margolus block cellular automata. The $64$ Life-like cellular automata from B3/S5 to B3678/S0125 all emulate the same Margolus block cellular automaton given by Figure~\ref{fig:block_evolve}.
		
		In fact, it was noticed by David Eppstein in 1998 that the Margolus block cellular automaton that 2x2 emulates also emulates itself via $2 \times 2$ blocks \emph{of} $2 \times 2$ blocks, but with a slowdown of one simulated generation per real generation. That is, each $2 \times 2$ block in Figure~\ref{fig:block_evolve} can be replaced by the corresponding $4 \times 4$ block as long as it is understood that the transition time indicated by the arrows is two generations instead of one.
				
		It follows naturally that $8 \times 8$ blocks can be used to simulate $4 \times 4$ blocks, again doubling the number of real generations per simulated generation. In general, $2^k \times 2^k$ blocks can be used to simulate $2 \times 2$ blocks, with each simulated generation requiring $2^{k-1}$ real generations. Because the $2 \times 4$ rectangle is an oscillator with period 2, it follows that the $4 \times 8$ rectangle is an oscillator with period 4, the $8 \times 16$ rectangle is an oscillator with period 8, and in general the $2^k \times 2^{k+1}$ rectangle is an oscillator with period $2^k$. This was the first known proof that 2x2 contains oscillators of arbitrarily large period -- we will see another proof (also related to $2 \times 2$ block oscillators) in the next section.
		
		\begin{figure}[ht]
		\center
		\includegraphics[width=0.9\textwidth]{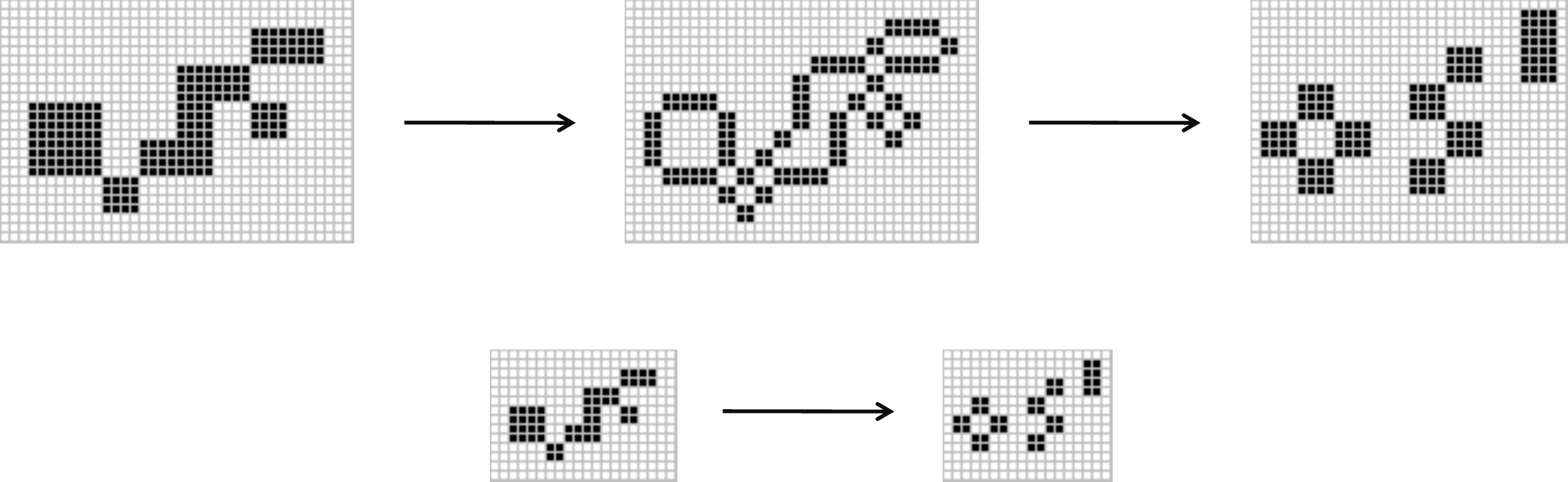}
		\caption{An example of $4 \times 4$ blocks (top) taking two generations to emulate one generation of a $2 \times 2$ block pattern (bottom)}
		\label{fig:block_evolve2}
		\end{figure}

		One might wonder what types of patterns exist in this block cellular automaton -- after all, it is a simpler rule so perhaps we can prove the existence of certain types of patterns in 2x2 by simply trying to find interesting block patterns. However, no block pattern can ever escape its initial ``bounding diamond,'' so we can not hope to find spaceships or infinitely-growing patterns in this manner. Additionally, because the block partition in even generations is offset by one cell from the block partition in odd generations, we can't hope to find odd-period oscillators. Thus, we investigate an infinite family of even-period block oscillators.

  \section{Block Oscillators}\label{sec:blockosc}
		
		One particularly interesting family of oscillators in 2x2 are those that in one phase are a $2 \times 4n$ rectangle of alive cells for some integer $n \geq 1$. For $n = 1$, the oscillator has period 2 and simply rotates by 90 degrees every generation. That is, it starts as a $2 \times 4$ rectangle, evolves into a $4 \times 2$ rectangle after one generation, and then evolves back into a $2 \times 4$ rectangle after the next generation. As $n$ increases, these oscillators behave more and more unpredictably. For $n = 2$, the oscillator has period 6 as shown in Figure~\ref{fig:period6block}.

\begin{figure}[ht]
\center
\includegraphics[width=2in]{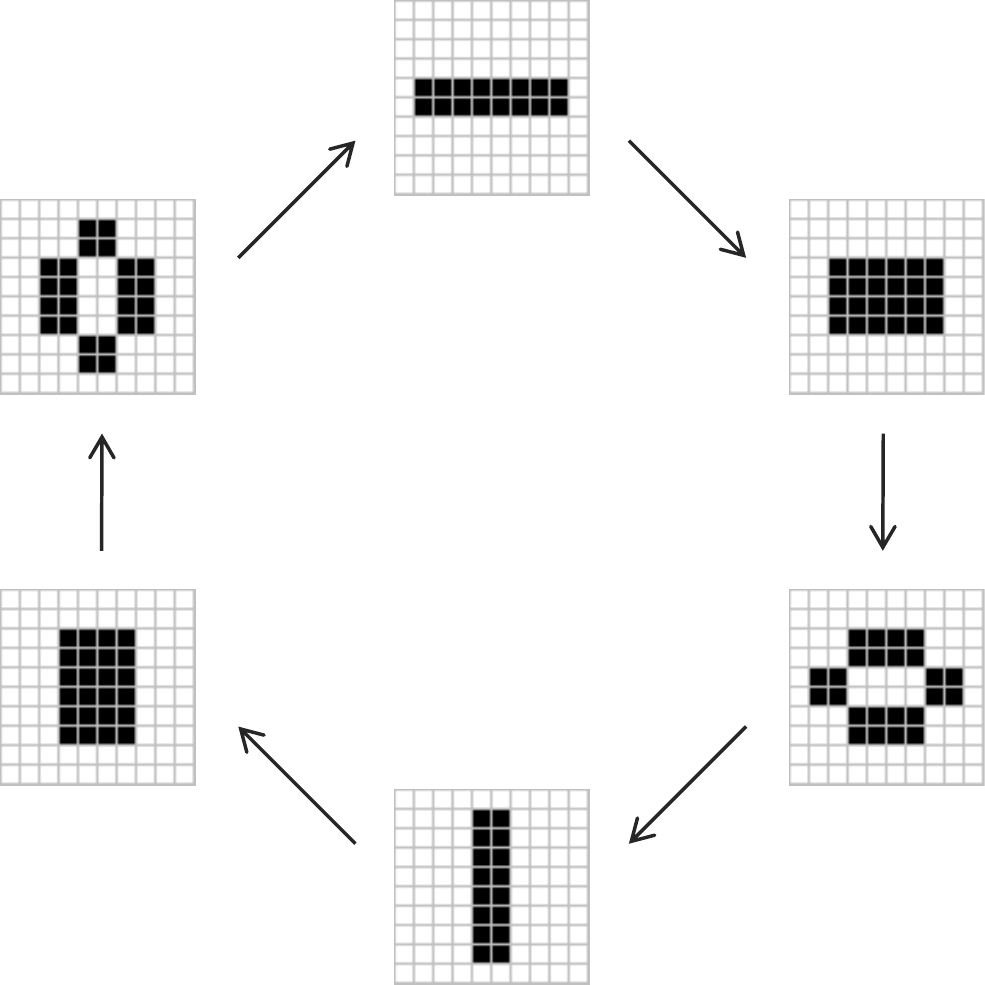}
\caption{The $2 \times 8$ period 6 block oscillator}
\label{fig:period6block}
\end{figure}

		By simply playing around with Life simulation software, it is not difficult to compute the period of the $2 \times 4n$ block oscillator for $n = 1, 2, 3, \ldots$ to be $2, 6, 14, 14, 62, 126, 30, 30, 1022, 126, \ldots$\footnote{Sloane's A160657}. Other than the fact that each of these periods is two less than a power of two, this sequence does not have any obvious pattern. The following theorem shows that the sequence is in fact related to a well-studied mathematical phenomenon, albeit one that does not admit what most people would consider a closed form solution. Before presenting the theorem, it is perhaps worth noting that B36/S125 is not the only rule in which these oscillators work; these rectangular oscillators behave the same in the $2^9 = 512$ rules from B3/S5 to B35678/S012567.
		
		\begin{theorem}\label{thm:block_period}
			The period of the $2 \times 4n$ block oscillator is $2(2^k - 1)$, where $k$ is the multiplicative suborder of $2 \text{ {\rm (mod $(2n+1)$)}}$.\footnote{The values of $k$ given by Theorem~\ref{thm:block_period} for $n = 1, 2, 3, \ldots$ are $1, 2, 3, 3, 5, 6, 4, 4, 9, 6, \ldots$ (Sloane's A003558).}
		\end{theorem}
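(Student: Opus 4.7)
The plan is to exploit the reduction from Section~\ref{sec:block}: a $2 \times 4n$ rectangle of alive cells is a block pattern, so its entire forward orbit lives inside the Margolus block cellular automaton, which emulates Wolfram's rule 90. Since rule 90 is $\mathbb{F}_2$-linear, the block-level dynamics is linear and the oscillator period reduces to a multiplicative-order computation over $\mathbb{F}_2$.

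First I would verify that the block CA transition is linear over $\mathbb{F}_2$ -- that is, that each of the six rules in Figure~\ref{fig:block_evolve} sends its four input blocks to an output that is their sum modulo $2$ in a suitable encoding, which is the structural fact underlying the rule 90 emulation already mentioned in Section~\ref{sec:intro}. Since no block pattern escapes its bounding diamond, the orbit of the $2 \times 4n$ rectangle lies inside a finite-dimensional $\mathbb{F}_2$-subspace, and its period equals the multiplicative order of the block-level transition restricted to that subspace.

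Next I would identify this restricted transition with a small rule 90 shift register. The rectangle is a row of $2n$ alive blocks flanked by empty space, so up to the Margolus offset the block-level dynamics is precisely rule 90 on a length-$2n$ strip with zero boundary conditions. Classical shift-register theory then gives that the associated transition matrix has order dividing $2^{\mathrm{ord}_{2n+1}(2)}-1$, the ``$2n+1$'' appearing because the characteristic polynomial of rule 90 on a length-$2n$ strip factors over $\mathbb{F}_2$ through cyclotomic polynomials indexed by divisors of $2n+1$. The reflective symmetry of the initial configuration forces the orbit into the fixed subspace of the strip's reflection, which halves the period whenever $2^k \equiv -1 \pmod{2n+1}$ for some $k$ strictly less than the full order -- precisely the definition of the multiplicative suborder $\mathrm{sord}_{2n+1}(2)$. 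Thus the block-level period is $2^k-1$ with $k = \mathrm{sord}_{2n+1}(2)$. Multiplying by the factor of $2$ coming from the Margolus partition -- which shifts diagonally by one cell between consecutive real generations and hence only returns to its initial alignment every other generation -- yields the claimed period $2(2^k-1)$.

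The main obstacle will be the identification step: pinning down that the relevant shift-register length really is $2n$ (equivalently, that the relevant modulus is $2n+1$), and justifying the reflective-symmetry halving. Although the rectangle starts one block tall, its block-level orbit generically spreads vertically inside the bounding diamond, and one must argue that this higher-dimensional spread is slaved to a rank-$2n$ invariant subspace on which the induced dynamics is precisely linear rule 90 with reflective symmetry. Once that reduction is secured, the period formula is an immediate consequence of standard multiplicative-order computations in $\mathbb{F}_2[x]$ with $x$ acting as the rule 90 transition.
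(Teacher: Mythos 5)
Your overall instinct---that the period must come from the order of $2$ modulo $2n+1$ via a rule-90-like recursion, halved to the suborder by a reflection symmetry and doubled by a geometric factor of $2$---matches the number-theoretic endgame of the paper's argument. But the reduction you build it on does not hold: the Margolus block transition of Figure~\ref{fig:block_evolve} is \emph{not} $\mathbb{F}_2$-linear in any encoding. One can check directly from B36/S125 that a single full block surrounded by empty blocks maps to an empty block, while two orthogonally adjacent full blocks produce a full block between them; additivity would force the second image to be the sum of two copies of the first, i.e.\ empty. (Two diagonally adjacent full blocks, three full blocks, and four full blocks likewise all yield an empty central block.) So there is no global linear structure, no finite-dimensional $\mathbb{F}_2$-subspace containing the orbit, and no transition matrix whose multiplicative order you could compute. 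The rule-90 behaviour alluded to in Section~\ref{sec:intro} is a property of a restricted family of configurations, not of the block automaton as a whole.

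Consequently the step you defer as ``the main obstacle''---showing that the two-dimensional spread of the pattern inside its bounding diamond is slaved to a one-dimensional rule-90 recursion---is in fact the entire content of the proof, and your proposed mechanism (an invariant subspace of a linear map) cannot supply it. The paper's route is to prove that every phase of the oscillator is an XOR of concentric rectangles of sizes $2k \times (4n - 2(k-1))$, and that the indicator vector recording \emph{which} rectangles are present evolves by rule 90 on the index set $\{1,\dots,2n\}$ with reflecting boundary (indices $2\ell(2n+1) \pm k$ identified, multiples of $2n+1$ discarded). That closure property is a genuinely nonlinear combinatorial fact about how the rectangle boundaries interact under B36/S125, and it is what makes the suborder of $2$ modulo $2n+1$ appear. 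Your identification of the shift-register cells with the $2n$ blocks of the initial rectangle is also not the right one (the pattern is already two blocks tall at generation $1$), and your factor of $2$ is attributed to the alternation of the Margolus partition, whereas it actually arises because generation $2^k - 1$ is the $90^\circ$-rotated rectangle $4n \times 2$, so the configuration only returns to its original orientation after twice that many generations.
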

		
		The multiplicative suborder of $a$ (mod $b$), denoted ${\rm sord}_b(a)$, is defined as the least natural number $k$ such that $a^k \equiv \pm 1 \ (\text{{\rm mod} }$b$)$. There exists a $k$ satisfying this condition if and only if $GCD(a,b) = 1$. Since $2n + 1$ is odd we know that the $k$ mentioned by Theorem~\ref{thm:block_period} is well-defined. We now sketch a proof of the theorem.
		
		\begin{proof}
			The key step in the proof is to notice that each phase of these oscillators can be described as an XOR of rectangles -- that is, an intersection of rectangles where we keep the sections that consist of an odd number of overlapping rectangles and we discard the sections that consist of an even number of overlapping rectangles. For example, Figure~\ref{fig:block_xor} shows a phase of the $n = 3$ block oscillator represented as the XOR of a $2 \times 12$ rectangle, a $6 \times 8$ rectangle, and a $10 \times 4$ rectangle.

\begin{figure}[ht]
\center
\includegraphics[width=2in]{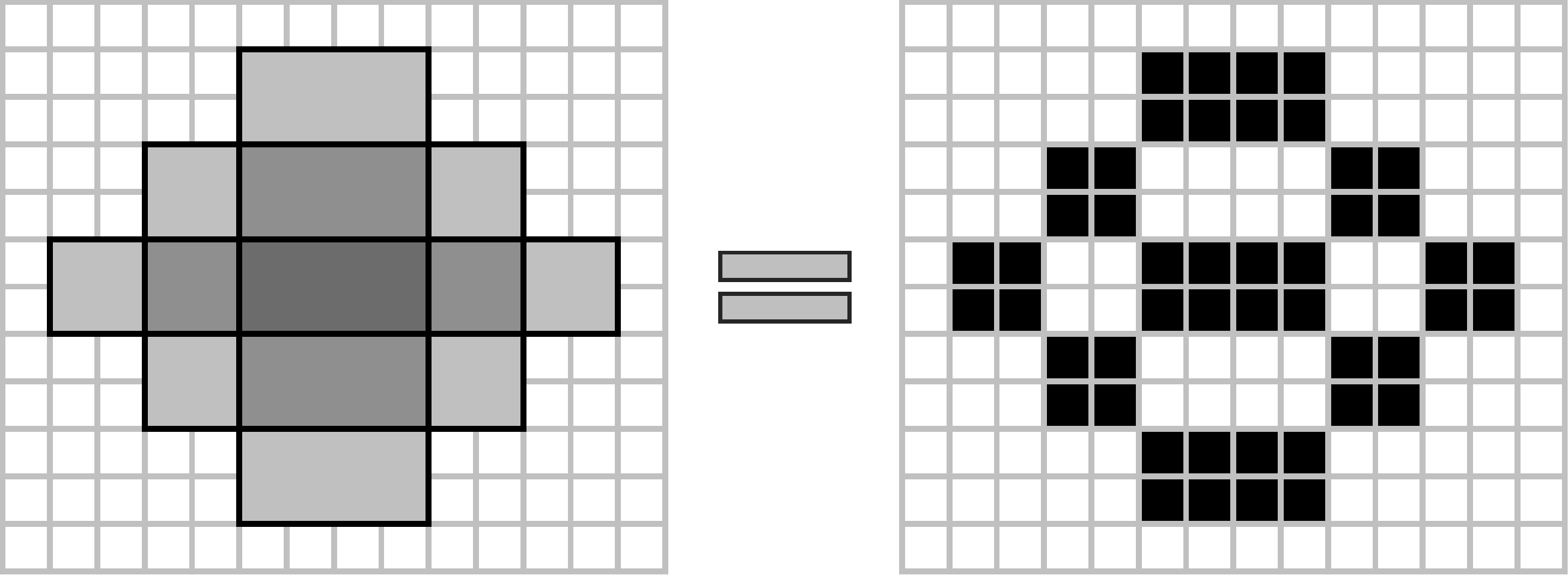}
\caption{Generation $6$ of the $n = 3$ block oscillator, depicted as the XOR of a $2 \times 12$ rectangle, a $6 \times 8$ rectangle, and a $10 \times 4$ rectangle}
\label{fig:block_xor}
\end{figure}

Different phases may require a different number of rectangles to be XORed, but every phase can always be represented in this way. More important, however, is the fact that the evolution of the oscillators occurs in a predictable way when modeled like this. Consider the grid given in Figure~\ref{fig:block_sierpinski}, which is helpful in analyzing the evolution of the block oscillators. The fact that it resembles the Sierpinski triangle is no coincidence; this block automaton is, in a sense, emulating the Rule 90 elementary cellular automaton.

The way to read Figure~\ref{fig:block_sierpinski} is that that row represents the generation number and the column represents the size of the rectangle that is being XORed. The first column represents a $2 \times 4n$ rectangle, the second column represents a $4 \times (4n - 2)$ rectangle, the third column represents a $6 \times (4n - 4)$ rectangle, and so on. Thus, you ``start'' in the top-left cell, and that represents the $2 \times 4n$ rectangle of generation $0$. To go to generation $1$, go to the next row, where we see that the only filled in cell is in the second column, which is the $4 \times (4n - 2)$ column. Thus, generation $1$ will just be a filled-in $4 \times (4n - 2)$ rectangle. To see what generation 2 will look like, go to the next row, where we see that two cells are filled in, corresponding to $2 \times 4n$ and $6 \times (4n - 4)$ rectangles. Thus, XOR together two rectangles of those sizes (in the sense described earlier) to get what generation 2 looks like.

\begin{figure}[ht]
\center
\includegraphics[scale=0.5]{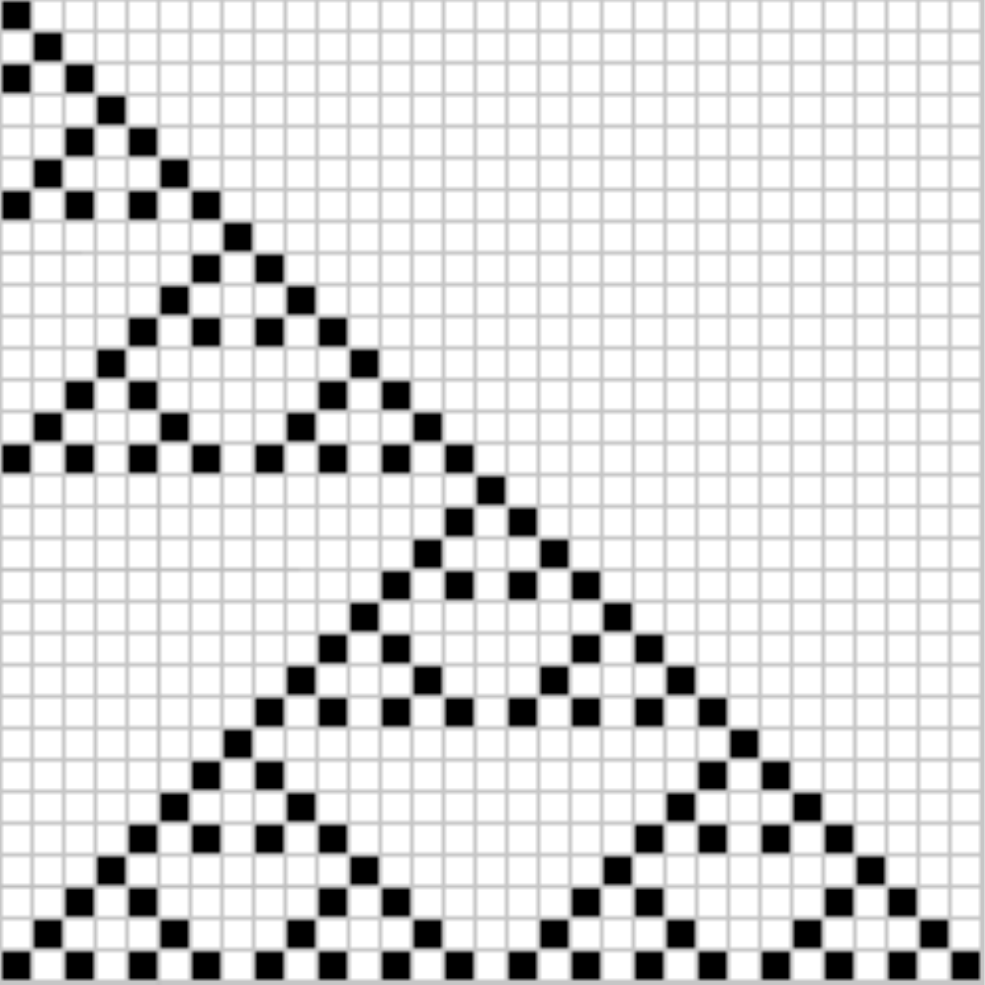}
\caption{A grid that can be used to determine future phases of the block oscillators}
\label{fig:block_sierpinski}
\end{figure}

In order to determine the oscillators' periods, we must observe that if we continue to label the columns in the way described, eventually we hit zero-length rectangles, which does not make a whole lot of sense. Thus, we simply ignore any rectangles that are of length zero. But what about rectangles of negative length? Instead of counting down into negative numbers, start counting back up. Thus, the columns corresponding to a $2k \times (4n - 2(k-1))$ rectangle are given by $2\ell(2n+1) + k$ for $\ell \in \{0, 1, 2, \ldots \}$ and $2\ell(2n+1) - k$ for $\ell \in \{1, 2, 3, \ldots \}$. To help illustrate this idea, consider the grid given in Figure~\ref{fig:block_sierpinski2}, which shows the lengths associated with each column in the $n = 3$ case.

\begin{figure}[ht]
\center
\includegraphics[scale=0.5]{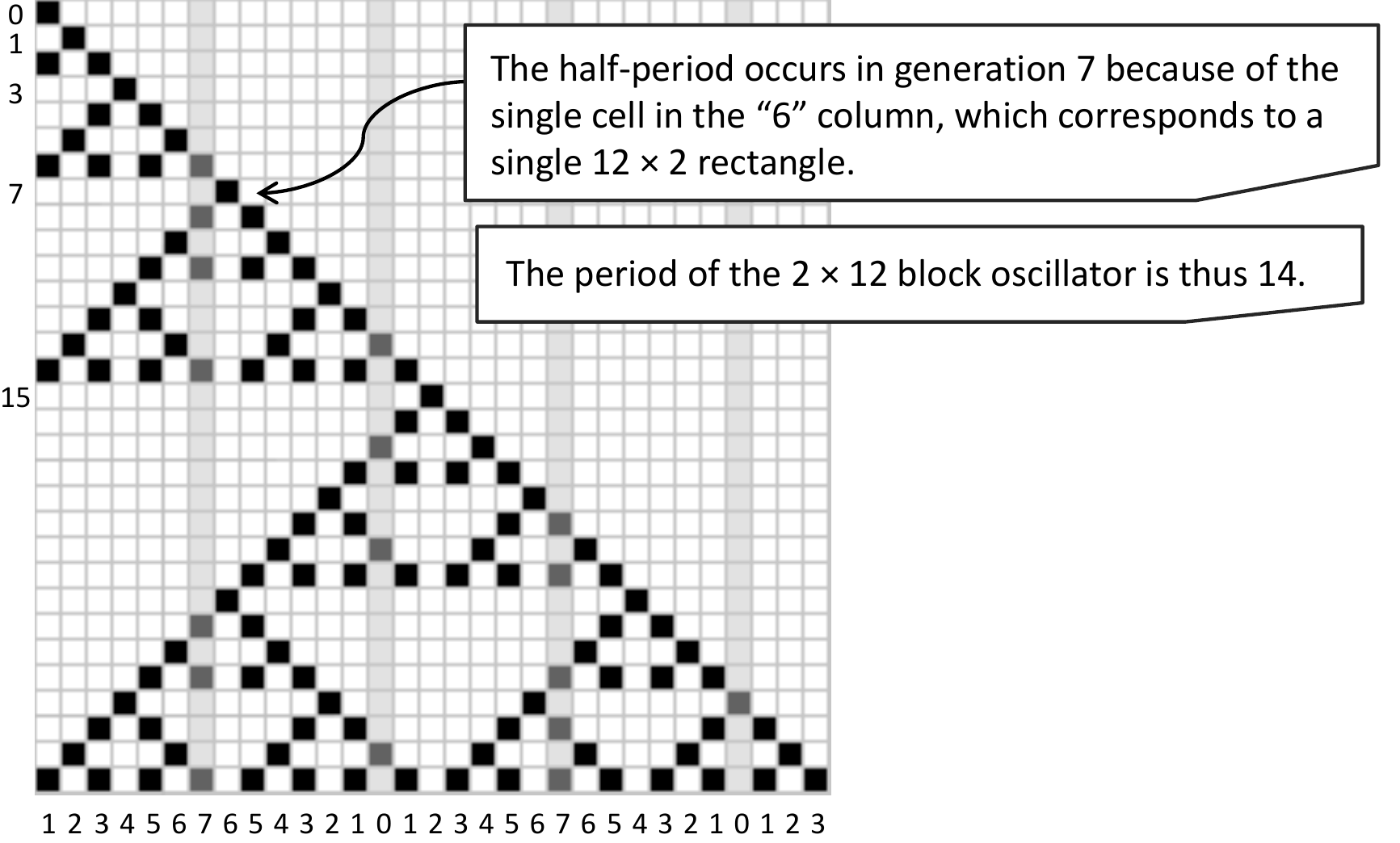}
\caption{For $1 \leq k \leq 6$, the columns marked by the number $k$ correspond to a $2k \times (4n - 2(k-1))$ rectangle -- columns marked $0$ or $7$ are ignored}
\label{fig:block_sierpinski2}
\end{figure}

For example, in generation 6, there is a live cell in columns marked ``$1$'', ``$3$'', ``$5$'', and ``$7$''. The column marked ``$7$'' is ignored, and the columns marked ``$1$'', ``$3$'', and ``$5$'' correspond to rectangles of size $2 \times 12$, $6 \times 8$, and $10 \times 4$, respectively. Generation 6 of the $n = 3$ oscillator should thus be the XOR of these three rectangles, which we saw in Figure~\ref{fig:block_xor} is indeed the case.

Now, notice that, because the multiplicative suborder of $2 \text{ {\rm (mod $(2n+1)$)}}$ is well-defined, there must exist some $\ell,m$ such that $2^\ell = m(2n+1) \pm 1$. Because $2^\ell$ is even, it must be the case that $m$ is odd. In terms of Figure~\ref{fig:block_sierpinski2} this means that there is some row containing only one cell (i.e., one of the rows labelled $3, 7, 15, 31, \ldots, 2^\ell - 1, \ldots$) such that its cell is in a ``$2n$'' column (i.e., a ``$6$'' column in the case of the example provided). This means that, at some point during its evolution, a $2 \times 4n$ rectangle evolves into a $4n \times 2$ rectangle. A simple symmetry argument shows that this must occur at exactly half of its period.

It follows that the oscillator returns to its original $2 \times 4n$ form in generation $2(2^k - 1)$, where $k$ is the multiplicative suborder of $2 \text{ {\rm (mod $(2n+1)$)}}$. Thus, the period of the $2 \times 4n$ block oscillator must be a factor of $2(2^k - 1)$. It is not difficult to see that the period must actually be of the form $2(2^\ell - 1)$ for some integer $\ell$, so the result follows. \qed
		\end{proof}
		
		Even though there is no known closed form formula for the multiplicative suborder of $a$ (mod $b$), it has several simple properties that can be verified without much difficulty. Importantly, it can be bounded as follows:
		\begin{align}\label{eq:sord_ineq}
			\log_a(b-1) \leq {\rm sord}_b(a) \leq \frac{b-1}{2}.
		\end{align}
		
		The lower bound follows simply by noting that if $k = {\rm sord}_b(a)$, then $a^k \geq b - 1$. To see that the upper bound holds, recall that Euler's theorem says that $a^{\phi(b)} \equiv 1 \ (\text{{\rm mod} }$b$)$, where $\phi(b)$ is the totient of $b$. Since $\phi(b) \leq b-1$, the inequality follows. Also, since $\phi(b) = b-1$ exactly when $b$ is prime, we see that equality is attained on the right in Inequality~\eqref{eq:sord_ineq} only if $b$ is prime.
		
		The following corollary of Theorem~\ref{thm:block_period} and Inequality~\eqref{eq:sord_ineq} follows immediately.
		\begin{corollary}\label{cor:block_period}
			Let $p$ be the period of the $2 \times 4n$ block oscillator. Then
			\[
				2(2n - 1) \leq p \leq 2^{n+1} - 2.
			\]
		\end{corollary}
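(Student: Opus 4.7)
The plan is to combine Theorem~\ref{thm:block_period} with Inequality~\eqref{eq:sord_ineq} directly, since the corollary is essentially a substitution. By Theorem~\ref{thm:block_period}, the period satisfies $p = 2(2^k - 1)$ where $k = {\rm sord}_{2n+1}(2)$. The strategy is therefore to bound $k$ using the inequality, then monotonically transfer those bounds through the strictly increasing function $k \mapsto 2(2^k - 1)$.

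First, I would apply Inequality~\eqref{eq:sord_ineq} with $a = 2$ and $b = 2n+1$. The lower bound gives $\log_2((2n+1) - 1) = \log_2(2n) \leq k$, and the upper bound gives $k \leq ((2n+1) - 1)/2 = n$. So the key two-sided bound is
\[
    \log_2(2n) \leq k \leq n.
\]

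Next, I would exponentiate base $2$ (which preserves the inequalities), obtaining $2n \leq 2^k \leq 2^n$. Subtracting $1$ and multiplying by $2$, which again preserves order, gives $2(2n - 1) \leq 2(2^k - 1) \leq 2(2^n - 1) = 2^{n+1} - 2$. Since $p = 2(2^k - 1)$ by Theorem~\ref{thm:block_period}, substituting yields the claimed inequality $2(2n - 1) \leq p \leq 2^{n+1} - 2$.

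There is no real obstacle here beyond checking the monotonicity remarks above; the corollary is labelled as following ``immediately,'' and indeed the entire content is the algebraic propagation of the sandwich on $k$ through $p = 2(2^k - 1)$. The only thing worth flagging, if one wanted to be fastidious, is that the lower bound $\log_2(2n) \leq k$ guarantees $2n \leq 2^k$ as an inequality among integers (since $k$ is a positive integer, $2^k$ is an integer, and $\log_2(2n)$ is real), which is all that is needed to conclude $2(2n-1) \leq p$.
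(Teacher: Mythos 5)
Your proposal is correct and follows essentially the same route as the paper: both apply Inequality~\eqref{eq:sord_ineq} with $a = 2$, $b = 2n+1$ to get $\log_2(2n) \leq k \leq n$, then propagate this through $p = 2(2^k-1)$ from Theorem~\ref{thm:block_period}. Your version just spells out the monotonicity steps a bit more explicitly.
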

		\begin{proof}
			By simply letting $a = 2$, $b = 2n+1$, and $k = {\rm sord}_b(a)$, Inequality~\eqref{eq:sord_ineq} says
			\[
				\log_2(2n) \leq k \leq n.
			\]
			
			\noindent Rearranging the inequality gives
			\[
				2(2n - 1) \leq 2^{k+1} - 2 \leq 2^{n+1} - 2.
			\]
			
			\noindent The result follows from Theorem~\ref{thm:block_period}.\qed
		\end{proof}
		
		This result provides, for example, a second proof that 2x2 contains oscillators with arbitrarily large period. In fact, any period of the form $2(2^k - 1)$ for an integer $k \geq 1$ is attainable as a $2 \times 4n$ block oscillator by simply choosing $n = 2^{k-1}$. Combining this with the period-doubling method of Section~\ref{sec:block}, we can construct a block oscillator with period $2^{\ell}(2^k - 1)$ for any integers $k,\ell \geq 1$ -- one such oscillator is a solid rectangle of alive cells of size $2^\ell \times 2^{k+\ell}$. It is still an open question whether or not 2x2 is omniperiodic; that is, whether or not it contains an oscillator of any given period.
		
		Table~\ref{tab:block_period} shows the period of the $2 \times 4n$ block oscillator for several values of $n$, as well as the bounds given by Corollary~\ref{cor:block_period}. Values for which either of the bounds are attained have been highlighted.
		\begin{table}
		\center
		\caption{The period of the $2 \times 4n$ block oscillator, as well as the bounds of Corollary~\ref{cor:block_period}, for small values of $n$}
		\label{tab:block_period}
		\begin{tabular}{cc}
		\begin{tabular}{lccc}
		\hline\noalign{\smallskip}
		$\mathbf{n}$ \ & \ $\mathbf{2(2n - 1)}$ \ & \ {\bf Period} \ & \ $\mathbf{2^{n+1} - 2}$ \\
		\noalign{\smallskip}\hline\noalign{\smallskip}
		$\mathbf{1}$ & \colorbox{lightbg}{$2$} & \colorbox{lightbg}{$2$} & \colorbox{lightbg}{$2$} \\
		$\mathbf{2}$ & \colorbox{lightbg}{$6$} & \colorbox{lightbg}{$6$} & \colorbox{lightbg}{$6$} \\
		$\mathbf{3}$ & $10$ & \colorbox{lightbg}{$14$} & \colorbox{lightbg}{$14$} \\
		$\mathbf{4}$ & \colorbox{lightbg}{$14$} & \colorbox{lightbg}{$14$} & $30$ \\
		$\mathbf{5}$ & $18$ & \colorbox{lightbg}{$62$} & \colorbox{lightbg}{$62$} \\
		$\mathbf{6}$ & $22$ & \colorbox{lightbg}{$126$} & \colorbox{lightbg}{$126$} \\
		$\mathbf{7}$ & $26$ & $30$ & $254$ \\
		$\mathbf{8}$ & \colorbox{lightbg}{$30$} & \colorbox{lightbg}{$30$} & $510$ \\
		$\mathbf{9}$ & $34$ & \colorbox{lightbg}{$1022$} & \colorbox{lightbg}{$1022$} \\
		$\mathbf{10}$ & $38$ & $126$ & $2046$ \\
		$\mathbf{11}$ & $42$ & \colorbox{lightbg}{$4094$} & \colorbox{lightbg}{$4094$} \\
		$\mathbf{12}$ & $46$ & $2046$ & $8190$ \\
		\noalign{\smallskip}\hline
		\end{tabular} \quad \ & \quad \ \begin{tabular}{lccc}
		\hline\noalign{\smallskip}
		$\mathbf{n}$ \ & \ $\mathbf{2(2n - 1)}$ \ & \ {\bf Period} \ & \ $\mathbf{2^{n+1} - 2}$ \\
		\noalign{\smallskip}\hline\noalign{\smallskip}
		$\mathbf{13}$ & $50$ & $1022$ & $16382$ \\
		$\mathbf{14}$ & $54$ & \colorbox{lightbg}{$32766$} & \colorbox{lightbg}{$32766$} \\
		$\mathbf{15}$ & $58$ & $62$ & $65534$ \\		
		$\mathbf{16}$ & \colorbox{lightbg}{$62$} & \colorbox{lightbg}{$62$} & $131070$ \\
		$\mathbf{17}$ & $66$ & $8190$ & $262142$ \\
		$\mathbf{18}$ & $70$ & \colorbox{lightbg}{$524286$} & \colorbox{lightbg}{$524286$} \\
		$\mathbf{19}$ & $74$ & $8190$ & $1048574$ \\
		$\mathbf{20}$ & $78$ & $2046$ & $2097150$ \\
		$\mathbf{21}$ & $82$ & $254$ & $4194302$ \\
		$\mathbf{22}$ & $86$ & $8190$ & $8388606$ \\
		$\mathbf{23}$ & $90$ & \colorbox{lightbg}{$16777214$} & \colorbox{lightbg}{$16777214$} \\
		$\mathbf{24}$ & $94$ & $4194302$ & $33554430$ \\		
		\noalign{\smallskip}\hline
		\end{tabular}
		\end{tabular}
		\end{table}

It is clear via Figure~\ref{fig:block_sierpinski2} that a $2 \times 4n$ oscillator is a solid rectangle in any generation of the form $2^\ell - 1$, where $\ell$ is an integer. In fact, the number of distinct solid rectangles that the oscillator will produce is exactly $2k$ (or simply $k$ if you don't double-count the 90 degree rotations of rectangles that appear during the second half of the oscillator's period), where $k$ is the multiplicative suborder of $2 \text{ {\rm (mod $(2n+1)$)}}$ (as in Theorem~\ref{thm:block_period}). Thus, if the upper bound of Theorem~\ref{cor:block_period} is not attained, then there are rectangular blocks of size $2\ell \times (4n - 2(\ell-1))$ for some integer $\ell$ that are not part of the evolution of the $2 \times 4n$ block oscillator. Table~\ref{tab:block_period} shows us that the smallest example of this happening is in the $n = 4$ case because the $2 \times 16$ oscillator has period 14, which is less than the upper bound of 30. Indeed, the $\ell = 3$ case of a $6 \times 12$ rectangle is another oscillator of period 14. This tells us what happens for $28$ of the $2(2^4 - 1) = 30$ possible nonempty combinations of four rectangles being XORed together (with an overall 90 degree rotation being allowed). So what happens to the two missing combinations? Well, if we XOR a $2 \times 16$ rectangle with a $10 \times 8$ rectangle and a $14 \times 4$ rectangle, we get a seemingly atyptical period $2$ block oscillator that simply rotates 90 degree every generation (see Figure~\ref{fig:2x2_weird_block}).\footnote{In 1993, Dean Hickerson observed that a similar phenomenon occurs in the $n = 7$ case of a $6 \times 24$ rectangle XORed with an $18 \times 12$ rectangle, although it was unknown whether or not there was a smaller block oscillator that did not turn into a single rectangle in one of its phases.}

\begin{figure}[ht]
\center
\includegraphics[width=2in]{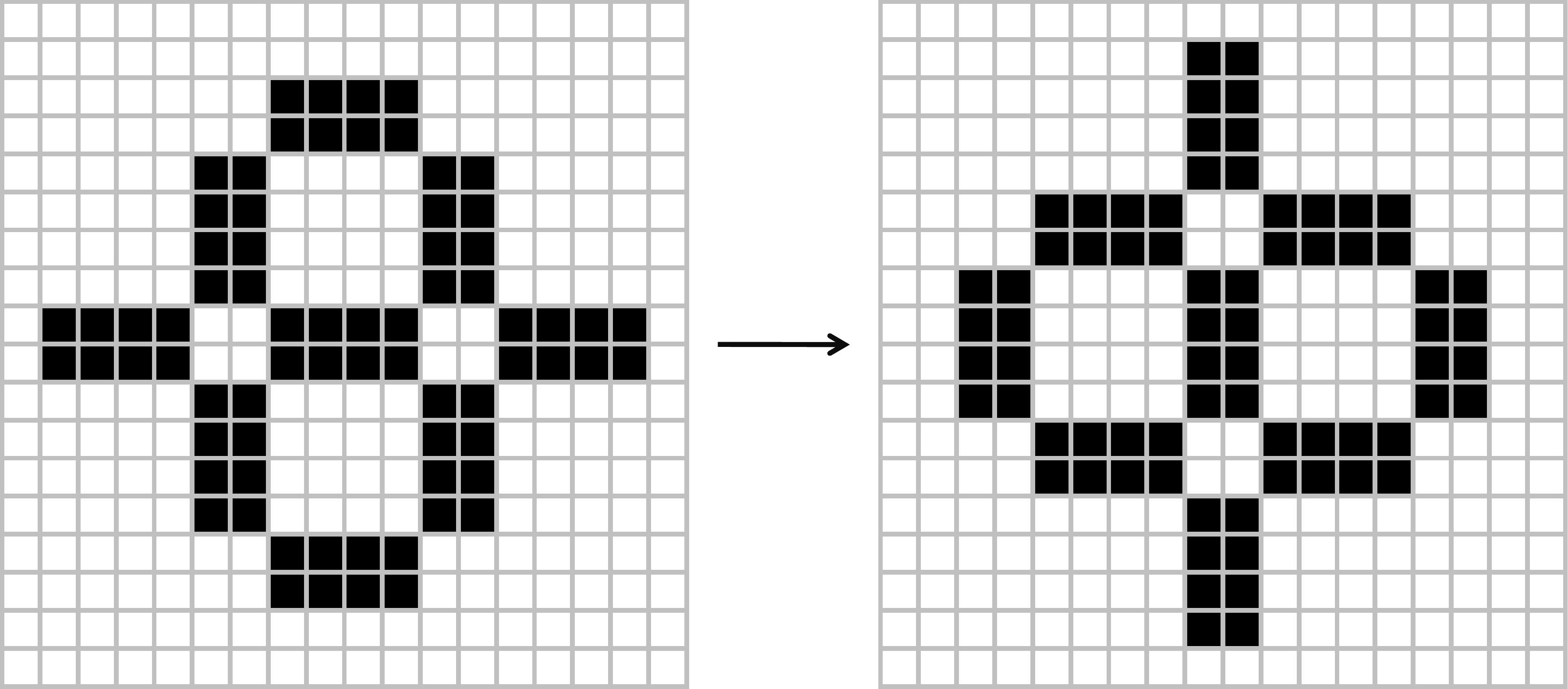}
\caption{The XOR of a $2 \times 16$ rectangle, a $10 \times 8$ rectangle and a $14 \times 4$ rectangle has period 2 and rotates by 90 degrees every generation}
\label{fig:2x2_weird_block}
\end{figure}

While solid rectangles of size $2 \times 4n$ are always an oscillator, this is not the case for rectangles of size $2 \times (4n - 2)$ -- they sometimes evolve into an oscillator and they sometimes vanish completely. It was shown by Dean Hickerson in 1993 they eventually vanish if and only if $n = 2^\ell$ for some integer $\ell \geq 0$. Furthermore, he showed that they must vanish by no later than generation $2^{\ell+1}$. To find out what happens when $n$ is not a power of $2$, we will make use of all of the ideas presented so far in this and the previous section.

If $n$ is odd, say $n = 2m+1$ for some integer $m \geq 1$, then generation $1$ of the oscillator will be a single rectangle of the size $4 \times 8m$. Since this rectangle is made up of $4 \times 4$ blocks, we know from Section~\ref{sec:block} that it is an oscillator with period that is double the period of the $2 \times 4m$ block oscillator. It follows from Theorem~\ref{thm:block_period} that it has period $4(2^k-1)$, where $k$ is the multiplicative suborder of $2 \text{ {\rm (mod $(2m+1)$)}}$.

Now let's suppose $n$ is divisible by $2$ but not by $4$ -- that is, $n = 2(2m+1)$ for some integer $m$. Then generation $3$ of the oscillator will be a single rectangle of size $8 \times 16m$. Since this rectangle is made up of $8 \times 8$ blocks, we can use the same logic as earlier to see that this pattern is an oscillator with period that is quadruple the period of the $2 \times 4m$ block oscillator. That is, it has period $8(2^k-1)$, where $k$ is the multiplicative suborder of $2 \text{ {\rm (mod $(2m+1)$)}}$.

Carrying on in this way, one can easily prove the following result.
	\begin{theorem}\label{thm:block_period_b}
		Let $n = 2^\ell(2m+1)$ for some $\ell,m \geq 0$. If $m = 0$ then the $2 \times (4n-2)$ rectangle vanishes in the $(2^{\ell + 1} - 1)^{th}$ generation. Otherwise, the $2 \times (4n-2)$ rectangle evolves in the $(2^{\ell+1} - 1)^{th}$ generation into a $2^{\ell + 2} \times m2^{\ell + 3}$ block oscillator with period $2^{\ell + 2}(2^k-1)$, where $k$ is the multiplicative suborder of $2 \text{ {\rm (mod $(2m+1)$)}}$.
	\end{theorem}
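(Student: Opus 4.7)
I plan to prove Theorem~\ref{thm:block_period_b} by induction on $\ell$, combining Eppstein's self-emulation from Section~\ref{sec:block} with Theorem~\ref{thm:block_period}. The essential computational input, which I verify directly from B36/S125, is a \emph{one-generation lemma}: a $2 \times N$ solid rectangle with $N \geq 3$ evolves in one real generation into a $4 \times (N - 2)$ solid rectangle centered on the same location, and a $2 \times 2$ rectangle vanishes. This is a finite case analysis: the four corners have exactly three alive neighbours (dying under S125); every other cell of the rectangle has exactly five (surviving under S125); the cells directly above or below columns $2$ through $N - 1$ have exactly three alive neighbours (born under B3); and every other nearby cell has at most two alive neighbours, which is insufficient under B36.

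The base case $\ell = 0$ follows at once: for $n = 2m + 1$, the lemma sends $2 \times (8m + 2)$ to $4 \times 8m$ in one generation, which is either empty (when $m = 0$, matching $2^{\ell+1} - 1 = 1$) or a $1 \times 2m$ row of $4 \times 4$ super-blocks emulating the $2 \times 4m$ block oscillator at half real-time speed, whose period is therefore $2 \cdot 2(2^k - 1) = 2^{\ell+2}(2^k - 1)$ by Theorem~\ref{thm:block_period}. For the inductive step with $\ell \geq 1$ and $n = 2^\ell(2m + 1)$, the lemma carries $2 \times (4n - 2)$ into $4 \times 4(n - 1)$ in one real generation. Since $n - 1$ is odd, the resulting $4 \times 4(n - 1)$ rectangle is a $1 \times (n - 1)$ row of $4 \times 4$ super-blocks, so by Eppstein's self-emulation it evolves at half real-time speed exactly like a $2 \times 2(n - 1)$ rectangle of simulated real cells. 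The arithmetic identity $2(n - 1) = 4n' - 2$ with $n' = n/2 = 2^{\ell-1}(2m + 1)$ identifies this simulated pattern with a smaller instance of the theorem at parameter $\ell - 1$ and the same $m$. The inductive hypothesis then furnishes, after a further $2^\ell - 1$ simulated generations (equivalently $2^{\ell+1} - 2$ real generations), either the empty pattern (if $m = 0$) or a simulated $2^{\ell+1} \times m \cdot 2^{\ell+2}$ block oscillator of simulated period $2^{\ell+1}(2^k - 1)$; doubling spatial and temporal units to pass back to real cells yields the claimed $2^{\ell+2} \times m \cdot 2^{\ell+3}$ oscillator with real period $2^{\ell+2}(2^k - 1)$, after a total of $1 + (2^{\ell+1} - 2) = 2^{\ell+1} - 1$ real generations.

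The main obstacle I anticipate is the bookkeeping: each invocation of the self-emulation doubles spatial and temporal units and shifts the index $\ell$, so I need to verify that the Margolus partition offsets at successive levels are consistent with the one-generation lemma's output --- in particular, that the first-generation $4 \times 4(n - 1)$ rectangle is aligned so that it can be read as a row of $4 \times 4$ super-blocks in the appropriate super-Margolus partition. The most delicate point inside the lemma itself is confirming that no cells are born diagonally off the corners of the $2 \times N$ rectangle (they have at most two alive neighbours, and B36/S125 has no B2), since a single stray birth would break the clean $4 \times (N - 2)$ footprint on which the entire induction rests.
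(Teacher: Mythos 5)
Your proposal is correct and follows essentially the same route as the paper, which works out the $\ell = 0$ and $\ell = 1$ cases via the one-generation step $2 \times N \to 4 \times (N-2)$, the $4\times 4$-block self-emulation, and Theorem~\ref{thm:block_period}, and then says ``carrying on in this way.'' Your explicit induction on $\ell$, together with the carefully verified one-generation lemma and the bookkeeping of simulated versus real generations, is precisely the formalization of that sketch (and is in fact more complete than what the paper writes down).
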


\vspace{0.1in}

\noindent{\bf Acknowledgements.} Thanks are extended to Alan Hensel and David Eppstein for helpful e-mails as well as Lewis Patterson and the rest of the ConwayLife.com community for helping dig up information about this rule. Thanks also to Dean Hickerson, David Bell, and the other Life enthusiasts who have investigated B36/S125 over the years. The author was supported by an NSERC Canada Graduate Scholarship and the University of Guelph Brock Scholarship.

  \section*{Appendix I: Pattern Collection}\label{sec:patterns_app}
  
\begin{figure}[ht]
\vspace{-0.2in}
\center
\includegraphics[width=0.95\textwidth]{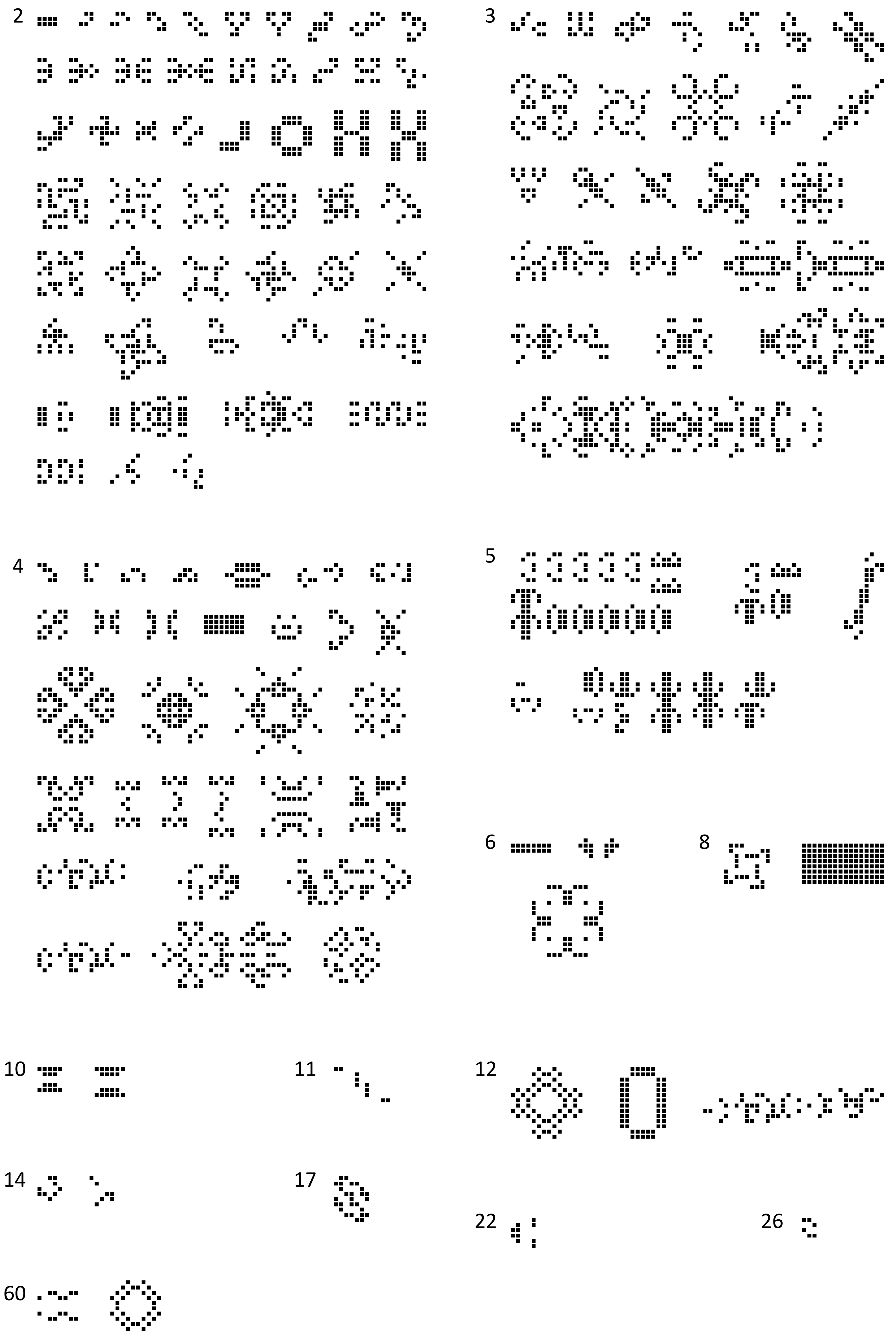}
\caption{Many oscillators of periods $2$ through $60$, most of which were found by Alan Hensel, Dean Hickerson, and Lewis Patterson, with contributions by David Bell}
\vspace{-1in}
\end{figure}

\begin{figure}[ht]
\center
\includegraphics[width=0.95\textwidth]{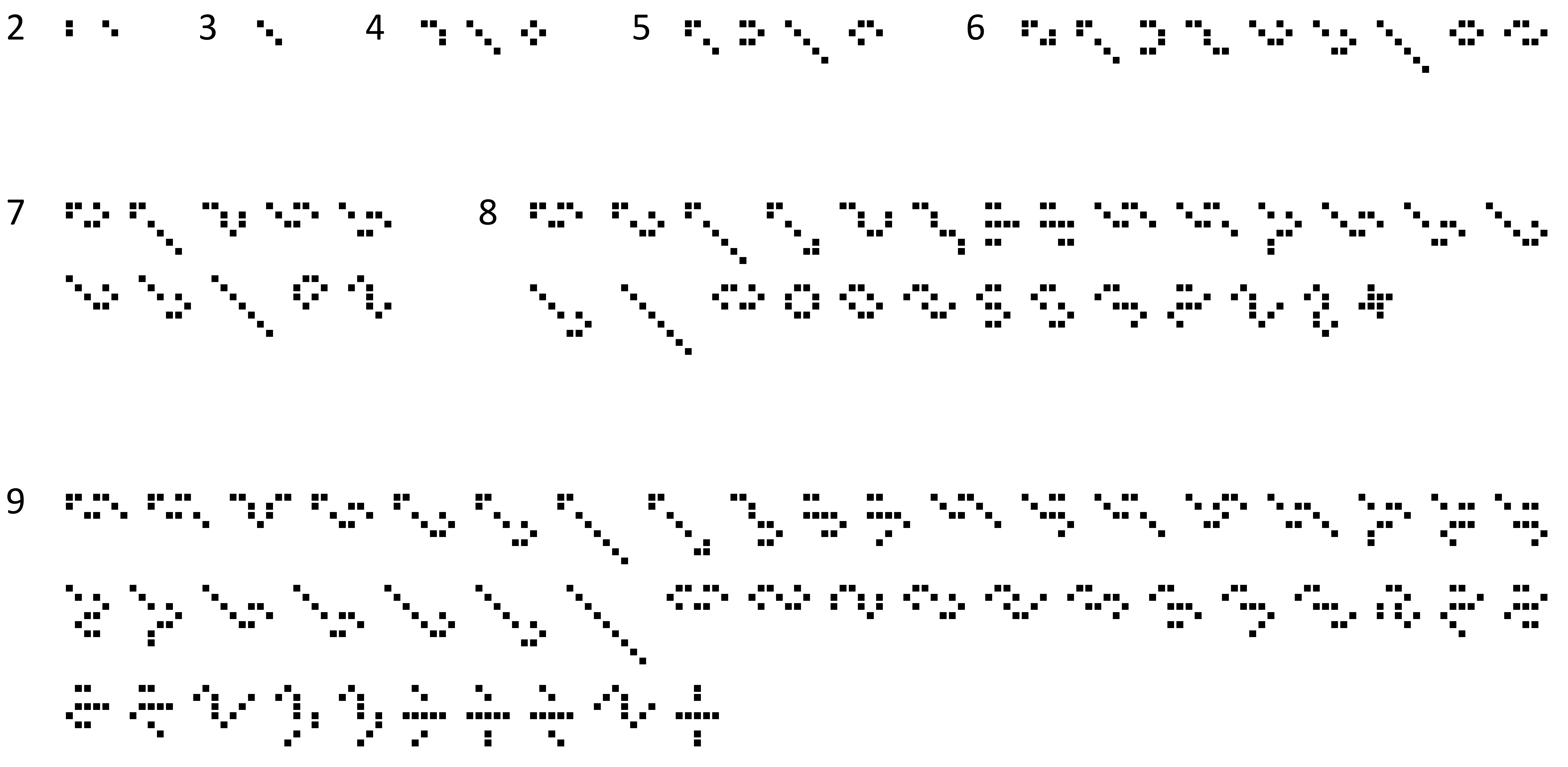}
\caption{The $104$ strict still lifes with $9$ or fewer cells, organized by their cell count}
\vspace{0.4in}
\end{figure}

\begin{figure}[ht]
\center
\includegraphics[width=0.95\textwidth]{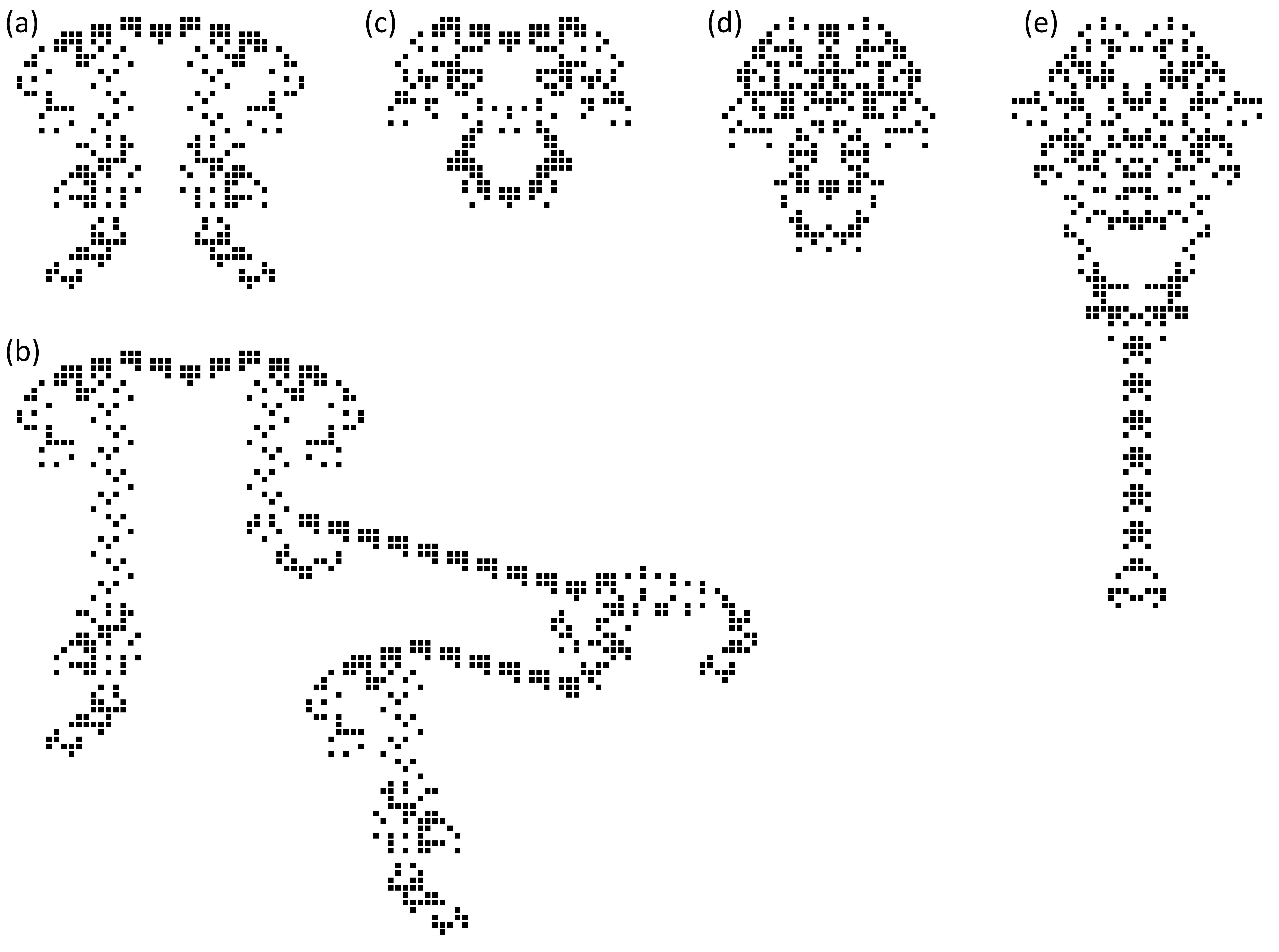}
\caption{Several $c/2$ orthogonal spaceships travelling upward: (a) and (b) are extensible ``jellyfish'' found by David Bell in December 1999, (c), (d) and (e) were found by David Eppstein, with (d) being the first discovered $c/2$ in October 1998}
\label{fig:2x2_c2_spaceships}
\end{figure}

\begin{figure}[ht]
\center
\includegraphics[width=0.95\textwidth]{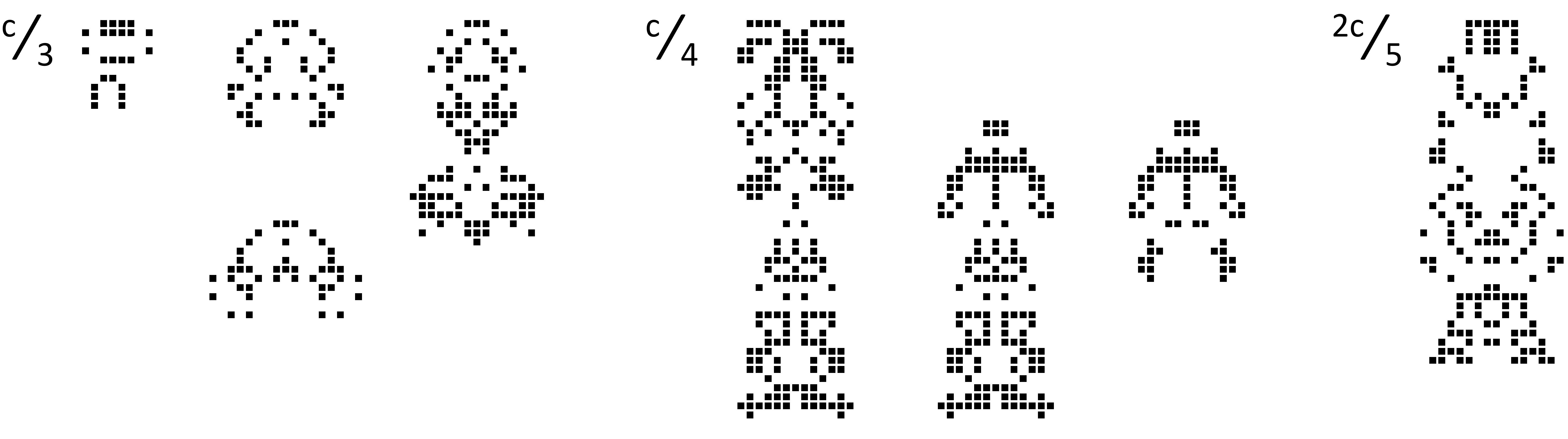}
\caption{Several orthogonal spaceships of various speeds travelling upward found by David Eppstein, Alan Hensel, and Dean Hickerson}
\vspace{0.4in}
\end{figure}

\begin{figure}[ht]
\center
\includegraphics[width=0.95\textwidth]{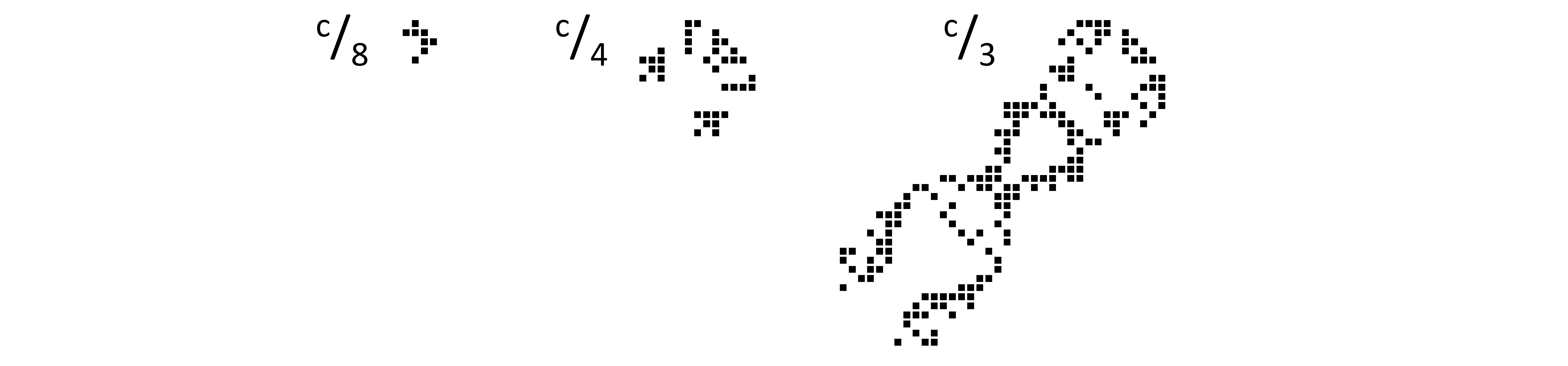}
\caption{Diagonal spaceships of various speeds travelling up and to the right: the $c/8$ spaceship occurs naturally, the $c/4$ spaceship was found by Dean Hickerson in January 1999, and the $c/3$ spaceship was found by David Eppstein}
\end{figure}
\end{document}